\newcommand{\Nat}{\mathbb{N}} 
\newcommand{\Real}{\mathbb{R}}
\renewcommand\Pr{\mathop{\mathbf{Pr}}}
\newcommand\E{\mathop{\mathbf{E}}}
\newcommand{\Var}{\mathop{\mathbf{Var}}}
\def\defeq{\mathrel{\mathop:}=}
\newcommand{\thit}{t_{\mathrm{hit}}}
\newcommand{\tmix}{t_{\mathrm{mix}}}
\newcommand{\tcov}{t_{\mathrm{cov}}}
\newcommand{\tcoal}{t_{\mathrm{coal}}}
\newcommand{\tmeet}{t_{\mathrm{meet}}}
\newcommand{\tsep}{t_{\mathrm{sep}}}
\newcommand{\trel}{t_{\mathrm{rel}}}
\newcommand{\thitmax}{t_{\mathrm{HIT}}}
\newcommand{\trelmax}{t_{\mathrm{REL}}}
\newcommand{\taucons}{\tau_{\mathrm{cons}}}
\newcommand{\tcons}{t_{\mathrm{cons}}}
\newcommand{\Pseq}{\mathcal{P}}
\newcommand{\tauhit}{\tau_{\mathrm{hit}}}
\newcommand{\taumeet}{\tau_{\mathrm{meet}}}
\newcommand{\taucov}{\tau_{\mathrm{cov}}}
\newcommand{\taucoal}{\tau_{\mathrm{coal}}}
\newcommand{\Co}{\mathrm{C}}
\newcommand{\LSimP}{P_\mathrm{LS}}
\newcommand{\LMWP}{P_\mathrm{LM}}
\newcommand{\MDW}{P_\mathrm{DM}}
\newtheorem{theorem}{Theorem}[section]
\newtheorem{lemma}[theorem]{Lemma}
\newtheorem{corollary}[theorem]{Corollary}
\newtheorem{proposition}[theorem]{Proposition}
\crefname{equation}{}{}
\crefname{assumption}{Assumption}{Assumption}
\crefname{figure}{Figure}{Figure}
\title{Reversible Random Walks on Dynamic Graphs}
\author{
Nobutaka Shimizu\thanks{Tokyo Institute of Technology, Japan. {\ttfamily shimizu.n.ah@m.titech.ac.jp}} \and Takeharu Shiraga\thanks{Tokyo Institute of Technology, Japan. {\ttfamily shiraga.t.ab@m.titech.ac.jp}}
}
\begin{document}
\maketitle 
\pagestyle{plain}
\baselineskip 17pt plus 1pt minus 1pt 

\begin{abstract}
Recently, random walks on \emph{dynamic} graphs have been studied because of their adaptivity to the time-varying structure of real-world networks.
In general, there is a tremendous gap between static and dynamic graph settings for the lazy simple random walk:
Although $O(n^3)$ cover time was shown for any static graphs of $n$ vertices, there is an edge-changing dynamic graph with an exponential hitting time.
On the other hand, previous works indicate that the random walk on a dynamic graph with a time-homogeneous stationary distribution behaves almost identically to that on a static graph.
For example, the lazy simple random walk on a dynamic regular graph has an $O(n^2)$ hitting time, which is the same order as that on a static regular graph.

In this paper, we strengthen this insight by obtaining general and improved bounds.
Specifically, we consider a random walk according to a sequence $(P_t)_{t\geq 1}$ of irreducible and reversible transition matrices such that all $P_t$ have the same stationary distribution.
We bound the mixing, hitting, and cover times in terms of the hitting and relaxation times of the random walk according to the worst fixed $P_t$.
Moreover, we obtain the first bounds of the hitting and cover times of multiple random walks and the coalescing time on dynamic graphs.
These bounds can be seen as an extension of the well-known bounds of random walks on static graphs.
Our results generalize the previous upper bounds for specific random walks on dynamic graphs, e.g., lazy simple random walks and $d_{\max}$-lazy walks, and give improved and tight upper bounds in various cases.
As an interesting consequence of our generalization, we obtain tight bounds for the lazy Metropolis walk [Nonaka, Ono, Sadakane, and Yamashita, TCS10] on any dynamic graph: $O(n^2)$ mixing time, $O(n^2)$ hitting time, and $O(n^2\log n)$ cover time.
Additionally, our coalescing time bound implies the consensus time bound of the pull voting on a dynamic graph.
To obtain this bound, we establish a duality-like relation of the pull voting process and coalescing random walk on dynamic graphs, which is of independent interest.

\ 

\noindent\textbf{Keywords}: Random walk, Markov chain, dynamic graph
\end{abstract}
\newpage

\tableofcontents

\newpage

\section{Introduction}
A random walk is a fundamental stochastic process on an undirected graph.
A walker starts from a specific vertex of a graph.
At each discrete time step,
a walker moves to a random neighbor.
The probability that the walker moves from $u$ to $v$ is given by $P(u,v)$,
    where the matrix $P\in[0,1]^{V\times V}$
    is called \emph{transition matrix}.
Because of their locality, simplicity, and low memory overhead, random walks have a wide range
of applications including network analysis, computational complexity, and distributed algorithms
\cite{Cooper11,HP01}.
The efficiency of a random walk can be
    measured by the rate of
    diffusion,
    which has been formalized
     by several notions including \emph{mixing time}, \emph{hitting time} and \emph{cover time}.
The mixing time is the
time for the distribution of the walker to converge to some limit distribution (called \emph{stationary distribution}).
The hitting time is the maximum expected time of the walker to visit a target vertex where the maximum is taken over the starting and target vertex.
The cover time is the expected time of the walker to visit all vertices starting from the worst vertex.
The mixing, hitting, and cover times on static graphs have been extensively studied for several decades~\cite{Aleliunas79,KLNS89,Lovasz93,BW90,Feige95up,Matthews88}.
For example, Aleliunas, Karp, Lipton, Lov\'asz, and Rackoff~\cite{Aleliunas79} proved that the cover time of the simple random walk on any $n$-vertex connected graph is $O(n^3)$.

Recently, there is a
growing interest in
    a random walk
    on a \emph{dynamic} graph since real-world networks change their structure over time~\cite{AKL18,SZ19,LMS18,CSZ20,CF03,KSS21,OT11}.
In this setting,
    at the beginning of the $t$-th round,
    the walker moves to a random neighbor on the current graph
    and then the edge set of the graph changes (we assume that the vertex set is static).
A central interest
    is the gap
    between random walks
    on a dynamic graph and a static one.
Indeed,
    while the (lazy) simple random walk has an $O(n^3)$ hitting time for any $n$-vertex static graphs,
    there is a sequence $(G_t)_{t\geq 1}$
    of connected graphs called the
    \emph{Sisyphus wheel} (\cref{fig:sisyphus}) 
    on which the hitting time
    is exponential~\cite{AKL18}.
    
On the other hand,
    several researchers
    observed that
    random walks on dynamic graphs
    behave almost 
    identically
    to that on static graphs if
    the stationary distribution of a random walk does not
    change over time~\cite{AKL18,SZ19,DR14,SZ07}.
Avin, Kouck\'{y}, and Lotker~\cite{AKL18}
    considered a random walk called  
    \emph{$d_{\max}$-lazy walk}
which has the uniform stationary distribution on any graph.
They proved that
    the cover time of this walk on any dynamic connected graph is $O(n^5\log^2 n)$,
    which was later improved by
    Denysyuk and Rodrigues~\cite{DR14}.
Sauerwald and Zanetti~\cite{SZ19}
    considered the lazy simple
    random walk on a dynamic
    connected graph with the same time-invariant degree distribution.
They obtained tight or nearly-tight
    bounds for the mixing and hitting times.
For example, they showed that both the mixing and hitting times are $O(n^2)$ on any dynamic regular graph.
These bounds are tight up to a constant factor even on static regular graphs.
See \cref{sec:previous_work} for more details about previous works.

\begin{figure}[htbp]
\center
\begin{tikzpicture}[
           every node/.style={
			circle,
text centered, 
			}
			]
\def\D{0.6cm}
\def\S{3.5cm}
\def\d{0cm}
\node[draw] (P4)  {1};
\node[below =\D of P4,draw] (P5)  {4};
\node[left =\D of P4,draw] (P1)  {0};
\node[above =\D of P4,draw] (P2)  {3};
\node[right =\D of P4,draw] (P3)  {2};
\node[left =\D of P2] (G1) {$G_1$};
\draw[-](P4)--(P5);
\draw[-](P4)--(P1);
\draw[-](P4)--(P2);
\draw[-](P4)--(P3);
\node[right = \S of P4,draw] (P31)  {2};
\node[below =\D of P31,draw] (P51)  {4};
\node[left =\D of P31,draw] (P41)  {1};
\node[above =\D of P31,draw] (P11)  {0};
\node[right =\D of P31,draw] (P21)  {3};
\node[left =\D of P11] (G2) {$G_2$};
\draw[-](P31)--(P51);
\draw[-](P31)--(P11);
\draw[-](P31)--(P21);
\draw[-](P31)--(P41);
\node[right = \S of P31,draw] (P22)  {3};
\node[below =\D of P22,draw] (P52)  {4};
\node[left =\D of P22,draw] (P32)  {2};
\node[above =\D of P22,draw] (P42)  {1};
\node[right =\D of P22,draw] (P12)  {0};
\node[left =\D of P42] (G3) {$G_3$};
\draw[-](P22)--(P52);
\draw[-](P22)--(P12);
\draw[-](P22)--(P32);
\draw[-](P22)--(P42);
\node[right = \S of P22,draw] (P13)  {0};
\node[below =\D of P13,draw] (P53)  {4};
\node[left =\D of P13,draw] (P23)  {3};
\node[above =\D of P13,draw] (P33)  {2};
\node[right =\D of P13,draw] (P43)  {1};
\node[left =\D of P33] (G4) {$G_4$};
\draw[-](P13)--(P53);
\draw[-](P13)--(P23);
\draw[-](P13)--(P33);
\draw[-](P13)--(P43);
\end{tikzpicture}
\caption{The Sisyphus wheel of five vertices. 
The Sisyphus wheel $\mathcal{G}=(G_t)_{t\geq 1}$ is defined as follows:
For each $t\geq 1$, let $V=V(G_t)=\{0,\ldots,n-1\}$, $v(t)=t\bmod (n-1)$, and $E(G_t)=\{\{v(t),i\}:i\in V\setminus \{v(t)\}$.
The lazy simple random walk starting from the vertex $0$ of $G_1$ has to choose the self-loop for $\Omega(n)$ consecutive times in order to reach the vertex $n-1$.
Note that the hitting time of the simple random walk on the Sisyphus wheel is unbounded. 
}
\label{fig:sisyphus}
\end{figure}
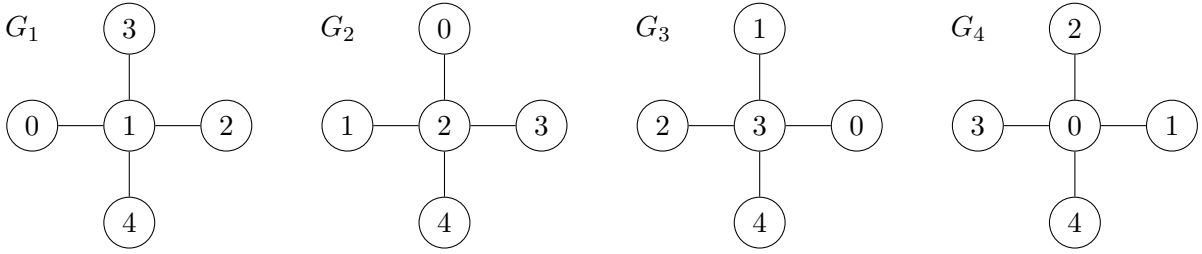

\subsection{Our results} \label{sec:our_results}
In this paper, 
we support the insight
    that a random walk
on a dynamic graph with
 the same stationary distribution
 behaves almost identically to that on a static graph
 by studying a random walk according to time \emph{inhomogeneous} transition matrices:
Given a sequence of transition matrices $\Pseq=(P_t)_{t\geq 1}$
    where $P_t\in[0,1]^{V\times V}$ for all $t$,
we consider
    a \emph{random walk according to $\Pseq=(P_t)_{t\geq 1}$} that is a sequence of random variables $(X_t)_{t\geq 0}$ satisfying $\Pr[X_t=v_t|X_{0}=v_{0},\ldots,X_{t-1}=v_{t-1}]=\Pr[X_t=v_t|X_{t-1}=v_{t-1}]=P_{t}(v_{t-1},v_t)$ for any $t\geq 1$ and $(v_0,\ldots,v_t)\in V^{t+1}$.
In other words, at the $t$-th time step ($t\geq 1$), the walker at vertex $u$ randomly selects a vertex according to the distribution $P_{t}(u,\cdot)$.
Our interest is to bound the mixing, hitting, cover, and coalescing times
under
the assumption
that all $P_t$ has
the same stationary distribution.

We briefly introduce essential terminologies to state our results.
Let $P\in [0,1]^{V\times V}$ be an \emph{irreducible} and \emph{reversible} transition matrix
and $\pi\in (0,1]^V$ be its stationary distribution\footnote{A transition matrix $P$ is \emph{irreducible} if for any $u,v\in V$ there exists a $t>0$ such that $P^t(u,v)>0$ holds
and
\emph{reversible}
if there is $\pi\in\mathbb{R}^V$ such that $\pi(u)P(u,v)=\pi(v)P(v,u)$ holds for any $u,v\in V$.
A probability distribution $\pi$ is a \emph{stationary distribution} of $P$ if $\pi P=\pi$ holds.}.
Let $\lambda_{\star}(P)$ denote the second largest eigenvalue in absolute value.
Note that $\lambda_{\star}(P)<1$ if $P$ is \emph{aperiodic}
or \emph{lazy}\footnote{We say that
 $P$ is \emph{aperiodic} if for any $v\in V$, $\mathrm{gcd}\{t\geq 0: P^t(v,v)>0\}=1$ holds
 and
$P$ is \emph{lazy} if $P(v,v)\geq 1/2$ holds for any $v\in V$.}.
Let $\trel(P)\defeq (1-\lambda_{\star}(P))^{-1}$ be the \emph{relaxation time} of $P$. 
Let $\Pseq=(P_t)_{t\geq 1}$ be a sequence of transition matrices.
For the random walk $(X_t)_{t\geq 0}$ according to $\Pseq$, 
let $\thit(\Pseq)\defeq \max_{u,w\in V}\E\left[\min\{t\geq 0: X_t=w\}\middle|X_0=u\right]$ be the worst-case expected \emph{hitting time} of $\Pseq$.
We sometimes identify $P$ with the sequence $(P_t)_{t\geq 1}$ of transition matrices with $P_t=P$ for all $t\geq 1$.
For example, $\thit(P)$ denotes the hitting time of $\mathcal{P}=(P_t)_{t\geq 1}$ with all $P_t=P$.
Let
\begin{align*}
    \trelmax(\Pseq)\defeq \max_{t\geq 1}\trel(P_t) \hspace{1em}\text{and}\hspace{1em}
    \thitmax(\Pseq)\defeq \max_{t\geq 1}\thit(P_t).
\end{align*}

\paragraph*{Mixing time.}
Our first result concerns the mixing time.
Specifically, for a sequence $\Pseq=(P_t)_{t\geq 1}$ of transition matrices, a positive probability vector $\pi\in(0,1]^V$, and a parameter $\epsilon>0$,
we define the \emph{uniform mixing time} $\tmix^{(\infty,\pi)}(\Pseq,\epsilon)$ by
\begin{align}
    \tmix^{(\infty,\pi)}(\Pseq,\epsilon)= \min\left\{t\geq 0:\max_{s\geq 0, u,v\in V}\left|\frac{(P_{s+1}P_{s+2}\cdots P_{s+t})(u,v)}{\pi(v)}-1\right|\leq \epsilon\right\}
    \label{def:uniform_mixing}
\end{align}
and let $\tmix^{(\infty,\pi)}(\Pseq)\defeq \tmix^{(\infty,\pi)}(\Pseq,1/2)$.
%
The uniform mixing time $\tmix^{(\infty)}(P,\epsilon)$ for a static Markov chain $P$ can be seen as the mixing time using the $\ell^\infty$-norm metric
and has been well studied (see, e.g., Section 4.7 in \cite{LP17}).
The intuition behind our
    definition of $\tmix^{(\infty,\pi)}(\Pseq,\epsilon)$
    is that the walk mixes well after $\tmix^{(\infty,\pi)}(\Pseq,\epsilon)$ steps
    even if the walker starts at any moment.

As a consequence of the previous work ((3.13) in \cite{SZ07}), we can easily obtain the following uniform mixing time bound.
\begin{proposition}
\label{prop:mixing_SZ07}
Let $\mathcal{P}=(P_t)_{t\geq 1}$ be a sequence of irreducible, aperiodic, and reversible transition matrices.
Suppose that all $P_t$ have the same stationary distribution $\pi$.
Then, for any $\epsilon>0$, 
$\tmix^{(\infty,\pi)}(\Pseq,\epsilon)= O\left(\trelmax(\Pseq)\log (\pi_{\min}^{-1}\epsilon^{-1})\right)$,
where $\pi_{\min}\defeq \min_{v\in V} \pi(v)$.
\end{proposition}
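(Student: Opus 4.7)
My plan is to reduce the uniform $\ell^\infty$ bound to an $\ell^2(\pi)$ contraction argument applied to each half of the inhomogeneous product, exploiting that every $P_t$ is self-adjoint with respect to the common inner product $\langle f,g\rangle_\pi\defeq \sum_w\pi(w)f(w)g(w)$.

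First I would fix $s\geq 0$ and $t\geq 1$, let $t_1=\lfloor t/2\rfloor$ and $t_2=t-t_1$, and write $Q\defeq P_{s+1}\cdots P_{s+t}=Q_1Q_2$ with $Q_1\defeq P_{s+1}\cdots P_{s+t_1}$ and $Q_2\defeq P_{s+t_1+1}\cdots P_{s+t}$. Since each $P_i$ is reversible with respect to $\pi$, the adjoint of $Q_2$ on $L^2(\pi)$ is the reverse-ordered product $Q_2^*=P_{s+t}\cdots P_{s+t_1+1}$, and in particular $\pi(w)Q_2(w,v)=\pi(v)Q_2^*(v,w)$. Substituting this into the definition of $Q(u,v)$ and using $\sum_w Q_1(u,w)=\sum_w Q_2^*(v,w)=1$ to telescope constants would give the key identity
\begin{equation*}
\frac{Q(u,v)}{\pi(v)}-1=\sum_{w\in V}\pi(w)\left(\frac{Q_1(u,w)}{\pi(w)}-1\right)\left(\frac{Q_2^*(v,w)}{\pi(w)}-1\right).
\end{equation*}

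Next I would interpret the two factors as images of point-masses in the mean-zero subspace $L^2(\pi)_0$. Writing $\tilde f_u(w)\defeq \mathbf{1}[w=u]/\pi(u)-1$, a direct check shows $w\mapsto Q_1(u,w)/\pi(w)-1$ equals $Q_1\tilde f_u$ and likewise for $Q_2^*\tilde f_v$, while $\|\tilde f_u\|_\pi^2=1/\pi(u)-1$. Because each $P_i$ is self-adjoint on $L^2(\pi)$ with $P_i\mathbf{1}=\mathbf{1}$, it restricts to $L^2(\pi)_0$ with operator norm $\lambda_\star(P_i)\leq 1-1/\trelmax(\Pseq)$. Composing these contractions and applying Cauchy--Schwarz to the identity above would yield
\begin{equation*}
\left|\frac{Q(u,v)}{\pi(v)}-1\right|\leq \frac{\prod_{i=s+1}^{s+t}\lambda_\star(P_i)}{\sqrt{\pi(u)\pi(v)}}\leq \frac{(1-1/\trelmax(\Pseq))^{t}}{\pi_{\min}}\leq \frac{\mathrm{e}^{-t/\trelmax(\Pseq)}}{\pi_{\min}}.
\end{equation*}
Setting the right-hand side equal to $\epsilon$ and solving for $t$ produces the desired $\Order(\trelmax(\Pseq)\log(\pi_{\min}^{-1}\epsilon^{-1}))$ bound, uniformly in $s$, $u$, and $v$, which is exactly \cref{def:uniform_mixing}.

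The main subtlety will be the first step: inhomogeneity prevents $Q_2$ from being self-adjoint on its own, so the symmetrisation must pass through the adjoint $Q_2^*$. Once the shared stationary distribution is used to identify this adjoint with the reverse-ordered product, the remainder is a near-verbatim rerun of the classical spectral argument for a single reversible chain. Aperiodicity enters only to guarantee $\lambda_\star(P_i)<1$, yielding a strict contraction; this is presumably also the point at which one invokes~(3.13) of~\cite{SZ07} in the ``easy consequence'' the authors allude to.
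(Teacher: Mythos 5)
Your proposal is correct and follows essentially the same route as the paper: the midpoint splitting with the reverse-ordered adjoint followed by Cauchy--Schwarz is exactly the paper's identity \cref{eq:l2_to_unif}, and the product-of-$\lambda_\star$ contraction on the mean-zero subspace of $L^2(\pi)$ is \cref{lem:mix_treldecay}. (One cosmetic caveat: with the right-action convention, $Q_1\tilde f_u$ is the function $w\mapsto Q_1^{\mathrm{rev}}(u,w)/\pi(w)-1$ with the factors in reverse order rather than $w\mapsto Q_1(u,w)/\pi(w)-1$, but the reversed product satisfies the same operator-norm bound, so the final estimate is unaffected.)
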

\Cref{prop:mixing_SZ07} can be
    seen as an extension of the well-known mixing time bound $\tmix^{(\infty)}(P,\epsilon)\leq \lceil\trel(P)\log (\pi_{\min}^{-1}\epsilon^{-1})\rceil$
    for static $P$ (Theorem 12.4 in \cite{LP17}) to the dynamic $\mathcal{P}$.

Our first result is the extension of the following well-known mixing time bound for static $P$ (Theorem 10.22 in \cite{LP17}): $\tmix^{(\infty)}(P,1/4)\leq 4\thit(P)+1$.

%
\begin{theorem}[Main result 1]
\label{thm:main_mixing}
Let $\mathcal{P}=(P_t)_{t\geq 1}$ be a sequence of irreducible, reversible, and lazy transition matrices.
Suppose that all $P_t$ have the same stationary distribution $\pi$.
Then, for any $0<\epsilon<1$, $\tmix^{(\infty,\pi)}(\Pseq,\epsilon)=O\left(\thitmax(\Pseq)+\trelmax(\Pseq)\log \epsilon^{-1}\right)$.
\end{theorem}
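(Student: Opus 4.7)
The overall plan is a two-phase argument: a \emph{hitting phase} of length $T_1=\Theta(\thitmax(\Pseq))$ that brings the walk within an absolute constant in the weighted $\ell^2(\pi)$ sense, followed by a \emph{contraction phase} of length $T_2=\Theta(\trelmax(\Pseq)\log\epsilon^{-1})$ that drives the $\ell^2(\pi)$ error down geometrically. A walk of total length $2(T_1+T_2)$ is then turned into an $\ell^\infty$ bound via Cauchy--Schwarz about its midpoint. Writing $K_{a,b}\defeq P_{a+1}\cdots P_{a+b}$ and $T\defeq T_1+T_2$, for any start time $s$ and vertices $x,y$ the identity $\pi P_t=\pi$ (for every $t$) gives $\sum_z\pi(z)K_{s+T,T}(z,y)=\pi(y)$, and a direct expansion yields
\begin{align*}
\frac{K_{s,2T}(x,y)}{\pi(y)} - 1 \;=\; \langle \alpha - \mathbf{1},\ \beta - \mathbf{1}\rangle_\pi,
\qquad \alpha(z)\defeq \frac{K_{s,T}(x,z)}{\pi(z)},\quad \beta(z)\defeq \frac{K_{s+T,T}(z,y)}{\pi(y)},
\end{align*}
where $\langle f,g\rangle_\pi\defeq\sum_z\pi(z)f(z)g(z)$. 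Cauchy--Schwarz then gives $|K_{s,2T}(x,y)/\pi(y)-1|\leq\|\alpha-\mathbf{1}\|_{\ell^2(\pi)}\|\beta-\mathbf{1}\|_{\ell^2(\pi)}$, so it suffices to drive each factor below $\sqrt{\epsilon}$.

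For the \textbf{contraction phase}, reversibility of each $P_t$ w.r.t.\ $\pi$ gives $P_t(z,w)/\pi(w)=P_t(w,z)/\pi(z)$, which turns the forward iterate $\alpha^{(t)}(z)\defeq K_{s,t}(x,z)/\pi(z)$ into the functional recursion $\alpha^{(t+1)}=P_{s+t+1}\alpha^{(t)}$. Since $\alpha^{(t)}-\mathbf{1}$ is $\pi$-orthogonal to $\mathbf{1}$, and laziness confines the spectrum of each $P_t$ on $\ell^2(\pi)$ to $[0,1]$ with non-unit eigenvalues at most $1-1/\trel(P_t)\leq 1-1/\trelmax(\Pseq)$, we obtain the per-step contraction $\|\alpha^{(t+1)}-\mathbf{1}\|_{\ell^2(\pi)}\leq(1-1/\trelmax(\Pseq))\|\alpha^{(t)}-\mathbf{1}\|_{\ell^2(\pi)}$. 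A symmetric ``time-reversed'' parametrization of $\beta$---namely $\gamma^{(r)}(z)\defeq K_{s+2T-r,\,r}(z,y)/\pi(y)$, which starts at the normalized point mass $\gamma^{(0)}(z)=\pi(y)^{-1}$ if $z=y$ and $0$ otherwise, evolves via $\gamma^{(r+1)}=P_{s+2T-r}\gamma^{(r)}$, and reaches $\gamma^{(T)}=\beta$---enjoys the identical contraction for $\gamma-\mathbf{1}$. Consequently, once the hitting phase brings $\|\alpha^{(T_1)}-\mathbf{1}\|_{\ell^2(\pi)}$ and $\|\gamma^{(T_1)}-\mathbf{1}\|_{\ell^2(\pi)}$ below an absolute constant $C$, taking $T_2=\lceil\trelmax(\Pseq)\log(C^2\epsilon^{-1})\rceil$ further steps pulls each factor below $\sqrt{\epsilon}$, and Cauchy--Schwarz yields $\tmix^{(\infty,\pi)}(\Pseq,\epsilon)\leq 2(T_1+T_2)=O(\thitmax(\Pseq)+\trelmax(\Pseq)\log\epsilon^{-1})$.

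The \textbf{main obstacle} is the hitting phase: showing $\|\alpha^{(T_1)}-\mathbf{1}\|_{\ell^2(\pi)}\leq C$ (and its analogue for $\gamma^{(T_1)}$) after $T_1=O(\thitmax(\Pseq))$ steps. This is the dynamic analogue of the classical static bound $\tmix^{(\infty)}(P,1/4)\leq 4\thit(P)+1$ for lazy reversible $P$ (Theorem 10.22 in \cite{LP17}), and is equivalent to the return-probability estimate
\[
\frac{(K_{s,T_1}\tilde K_{s,T_1})(x,x)}{\pi(x)}\;\leq\;1+C^2,\qquad \tilde K_{s,t}\defeq P_{s+t}\cdots P_{s+1}.
\]
The composition $K_{s,T_1}\tilde K_{s,T_1}=K_{s,T_1}K_{s,T_1}^{*}$ is self-adjoint on $\ell^2(\pi)$ with stationary distribution $\pi$, but it is \emph{not} a power of any fixed matrix, and the genuine hitting time of the dynamic chain itself may be much larger than $\thitmax(\Pseq)$ (as in the Sisyphus wheel), so one cannot simply invoke the static result on $K_{s,T_1}$. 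The natural line of attack is to construct a strong stationary time $\tau$ for $\Pseq$ with $\E[\tau]=O(\thitmax(\Pseq))$, chaining the per-step hitting-time tail $\Pr_{x,t}[\tau_y>k]\leq 2^{-k/(2\thitmax(\Pseq))}$ (which holds for each static $P_t$ by Markov's inequality and composes cleanly because every $P_t$ shares $\pi$) with the extra randomness afforded by laziness to peel off a $\pi$-distributed coordinate at each stage, and then to upgrade the resulting separation-distance bound to the weighted-$\ell^2$ bound via one further relaxation-contraction step. Chasing down the tails in the time-inhomogeneous setting and performing the separation-to-$\ell^2$ upgrade without incurring an extraneous $\log\pi_{\min}^{-1}$ factor is where the technical heart of the proof lies.
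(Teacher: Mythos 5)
Your reduction is sound and matches the paper's own skeleton: the Cauchy--Schwarz split about the midpoint (the paper's \cref{eq:l2_to_unif}), the time-reversed parametrization of the second factor via reversibility, and the relaxation-time contraction of the $\ell^2(\pi)$ error (the paper's \cref{lem:mix_treldecay}) are all exactly what the paper does. But the entire content of the theorem lives in the step you defer: showing that $T_1=O(\thitmax(\Pseq))$ steps of the inhomogeneous chain bring $d^{(2,\pi)}(\mu P_{[s+1,s+T_1]})$ down to an absolute constant. You correctly flag this as the main obstacle but do not prove it, and the ``natural line of attack'' you sketch --- a strong stationary time with mean $O(\thitmax(\Pseq))$ followed by a separation-to-$\ell^2$ upgrade --- is problematic on its own terms: separation distance is a one-sided bound ($P_{[\cdot]}(x,y)\geq(1-\epsilon)\pi(y)$), and passing from it to a two-sided $\ell^2(\pi)$ or $\ell^\infty$ bound is precisely where a multiplicative $\pi_{\min}^{-1}$ loss threatens to enter, since the positive part of $\mu/\pi-\mathbbm{1}$ can be as large as $\pi_{\min}^{-1}$ even when the negative part is uniformly small. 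You acknowledge this yourself, so the proposal stops short of a proof of the statement.

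The paper closes this gap by a different, variance-decay argument rather than a strong stationary time. Its key new ingredient (\cref{lem:dirichlethit}) is the Dirichlet-form lower bound $\mathcal{E}_{P,\pi}(\mu/\pi)\geq\Var_\pi(\mu/\pi)^2/\thit(P)$, proved by replacing $f=\mu/\pi$ with $g=f_{\max}\mathbbm{1}-f$ (same Dirichlet form, and $g$ vanishes at a maximizer $w$ of $f$), observing $\langle Pg,g\rangle_\pi=\langle D_wPD_wg,g\rangle_\pi\leq\rho(D_wPD_w)\,\|g\|_{2,\pi}^2$ with $\rho(D_wPD_w)\leq 1-1/\thit(P)$ (\cref{lem:hiteigen}), and checking $\|g\|_{2,\pi}^2\geq\Var_\pi(f)^2$. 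Combined with Mihail's identity $\Var_\pi(Pf)\leq\Var_\pi(f)-\mathcal{E}_{P,\pi}(f)$ for lazy reversible $P$ (\cref{lem:Mihail}), this gives the one-step recursion $x(t+1)\leq x(t)\left(1-x(t)/\thitmax(\Pseq)\right)$ for $x(t)=d^{(2,\pi)}(\mu P_{[1,t]})^2$, which the iteration lemma (\cref{lem:mix_teq_SZ}) drives from its initial value at most $\pi_{\min}^{-1}$ down to $1$ within $O(\thitmax(\Pseq)+\log\pi_{\min}^{-1})=O(\thitmax(\Pseq))$ steps, the log term being absorbed because $\thit(P)\geq\pi_{\min}^{-1}/4$. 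This Dirichlet-form route is the missing ingredient in your write-up; the strong-stationary-time route would require substantial additional work that neither you nor the paper carries out.
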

Note that $\trel(P)\leq \thit(P)$ holds for any irreducible, reversible and lazy $P$ (Lemma 4.24 in~\cite{AF02})
and thus \cref{thm:main_mixing} implies
$\tmix^{(\infty,\pi)}(\Pseq,1/4)=O\left(\thitmax(\Pseq)\right)$.
Compared to \cref{prop:mixing_SZ07},
    \cref{thm:main_mixing}
    eliminates the dependency
    of $\pi_{\min}^{-1}$ in the mixing time bound
    at the cost of additional $\thitmax$ term
    and the laziness assumption.
\Cref{prop:mixing_SZ07}
    gives a better bound
    if all $P_t$ has a small relaxation time (e.g., random walks on expanders).
On the other hand,
    for $\mathcal{P}$ with $\trelmax\approx \thitmax$ (e.g., lazy simple random wallk on dynamic cycles, on which both $\thitmax$ and $\trelmax$ are $\Theta(n^2)$),
    \cref{thm:main_mixing} provides a better bound.


\paragraph*{Hitting time and cover time.}
For the hitting and cover times, we recall an exponential lower bound on the Sisyphus wheel (\cref{fig:sisyphus}), which implies the following.
\begin{proposition}[\cite{AKL18}] \label{prop:sisyphus_wheel}
There is a sequence of irreducible, reversible, and lazy transition matrices $\Pseq=(P_t)_{t\geq 1}$ of $\thit(\Pseq)=2^{\Omega(n)}$.
\end{proposition}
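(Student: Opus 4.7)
The plan is to take $\mathcal{P}=(P_t)_{t\geq 1}$ to be the sequence of lazy simple random walks on the Sisyphus wheel $(G_t)_{t\geq 1}$ shown in \cref{fig:sisyphus}, and to lower-bound the expected time for the walker starting at vertex $0$ to reach vertex $n-1$. Each $P_t$ is irreducible (each $G_t$ is a star, hence connected), lazy by construction, and reversible with respect to the degree distribution of $G_t$; since the proposition does not require the $P_t$'s to share a stationary distribution, no further care is needed on that point.

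The key structural observation is that $v(t)=t\bmod(n-1)\in\{0,1,\ldots,n-2\}$, so vertex $n-1$ is never the center of any $G_t$ and is always a leaf. Because a leaf's only neighbor is the current center, the walker can reach $n-1$ at round $t$ only from the center, i.e., only when $X_{t-1}=v(t)$; this is the only mechanism by which $n-1$ can be hit.

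The main step is to introduce the offset $Y_t\defeq(X_t-v(t+1))\bmod(n-1)$, defined so long as $X_t\neq n-1$; it records how far the walker sits from the center the graph will have at round $t+1$. Initially $Y_0=(0-1)\bmod(n-1)=n-2$. Whenever $Y_{t-1}\neq 0$, the walker is at a leaf of $G_t$, and so with probability $1/2$ it stays, giving $Y_t=Y_{t-1}-1\bmod(n-1)$, and with probability $1/2$ it moves to the center $v(t)$, giving $Y_t=v(t)-v(t+1)\bmod(n-1)=n-2$. Thus, prior to the first visit of $Y$ to $0$, the offset behaves as a simple one-dimensional chain on $\{0,1,\ldots,n-2\}$ that either decrements by one or resets to $n-2$, each with probability $1/2$.

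Solving the first-order linear recurrence $T_k=1+\tfrac12 T_{k-1}+\tfrac12 T_{n-2}$ with $T_0=0$ yields $T_{n-2}=2^{n-1}-2$ for the expected time to hit state $0$ starting from $n-2$. Since $Y_{t-1}=0$ is a necessary precondition for $X_t=n-1$, we obtain $\thit(\mathcal{P})\geq 2^{n-1}-2=2^{\Omega(n)}$. The only conceptually nontrivial step is identifying the potential $Y_t$ that collapses the rotating star geometry into a scalar absorbing chain; once this reduction is in hand, everything else is an elementary calculation.
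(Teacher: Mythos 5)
Your proof is correct and formalizes exactly the argument the paper gestures at via the Sisyphus-wheel figure caption and the citation to \cite{AKL18}: vertex $n-1$ is always a leaf, so the walker can only reach it from the rotating center, and catching that center forces $n-2$ consecutive self-loops. The offset variable $Y_t$ is a clean way to make this precise, and your recurrence correctly yields the exact value $2^{n-1}-2$ for the expected hitting time from vertex $0$.
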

Note that, the sequence $\mathcal{P}$ in \cref{prop:sisyphus_wheel}
    has a time-varying stationary distribution.
Our second result concerns the hitting and cover times of \emph{multiple} random walks according to $\Pseq=(P_t)_{t\geq 1}$, where all $P_t$ have the same stationary distribution.
For $k\in\Nat$,
    let $\thit^{(k)}(\Pseq)$, $\tcov^{(k)}(\Pseq)$
    denote the worst-case expected hitting and cover times
    of $k$ independent random walks
    each is according to $\Pseq$ (see \cref{sec:mixing_time_definition} for detail). 
%
For $\Pseq=(P_t)_{t\geq 1}$, 
$\pi\in (0,1]^V$ and a parameter $\epsilon>0$,
we define the separation time $\tsep^{(\pi)}(\Pseq,\epsilon)$ by
    \begin{align}
        \tsep^{(\pi)}(\Pseq,\epsilon)\defeq \min\left\{t\geq 0:\max_{s\geq 0, u,v\in V}\left(1-\frac{(P_{s+1}P_{s+2}\cdots P_{s+t})(u,v)}{\pi(v)}\right)\leq \epsilon\right\}
        \label{def:tsep}
    \end{align}
and let $\tsep^{(\pi)}(\mathcal{P})=\tsep^{(\pi)}(\Pseq,1/2)$.
If $P_t=P$ for all $t\geq 1$, the definition \cref{def:tsep} coincides with well-known definition of the separation time for static $P$ in the literature (see, e.g., Section 4.3 in \cite{AF02}).
Note that $\tsep^{(\pi)}(\mathcal{P})\leq\tmix^{(\infty,\pi)}(\mathcal{P})$ by definition. 


\begin{theorem}[Main result 2]
\label{thm:main_hitting_cover}
Let $\mathcal{P}=(P_t)_{t\geq 1}$ be a sequence of irreducible and reversible transition matrices.
Suppose that all $P_t$ have the same stationary distribution $\pi$.
Then, the following holds.
\begin{enumerate}[label=(\roman*)]
\item \label{lab:rev_hit}
$\thit^{(k)}(\Pseq)=O\left(\tsep^{(\pi)}(\Pseq)+\frac{\thitmax(\Pseq)}{k}\right)$ for any $k\geq 1$.
In particular, $\thit(\Pseq)=O\left(\thitmax(\Pseq)\right)$
if $P_t$ is lazy for all $t\geq 1$.
\item \label{lab:rev_cov}
$\tcov^{(k)}(\Pseq)=O\left(\tsep^{(\pi)}(\Pseq)+\frac{\thitmax(\Pseq)\log n}{k}\right)$ for any $k\geq 1$.
In particular, $\tcov(\Pseq)=O\left(\thitmax(\Pseq)\log n\right)$
if $P_t$ is lazy for all $t\geq 1$.
\end{enumerate}
\end{theorem}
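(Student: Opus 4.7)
The plan is to reduce both parts to a single \emph{exponential tail} lemma: for any target $w$, any starting time $s\ge 0$, and a walker with $X_s\sim\pi$ on $\Pseq$,
\begin{equation*}
\Pr[\tau_w > s+t \mid X_s \sim \pi]\ \le\ \exp\!\bigl(-c\,t / \thitmax(\Pseq)\bigr)
\end{equation*}
for a universal constant $c>0$. This is the dynamic analogue of Aldous's tail bound on static reversible chains, and it is made possible by the time-invariance of $\pi$: a walker started at $\pi$ remains $\pi$-distributed at every subsequent step.

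To prove the lemma, I would use a spectral argument. Fix $w$ and let $Q_t$ be the sub-stochastic matrix obtained by deleting row and column $w$ from $P_t$. Reversibility of $P_t$ makes $Q_t$ self-adjoint on $\ell^2(\pi|_{V\setminus\{w\}})$, so
\begin{equation*}
\Pr_\pi[\tau_w > s+t \mid X_s \sim \pi]\ =\ \langle \mathbf{1}_{V\setminus\{w\}},\ Q_{s+1}\cdots Q_{s+t}\ \mathbf{1}_{V\setminus\{w\}}\rangle_\pi \ \le\ (1-\pi(w))\prod_{i=1}^{t}\|Q_{s+i}\|_{\mathrm{op},\pi}.
\end{equation*}
The standard comparison between the Dirichlet spectral gap and the expected hitting time for reversible chains gives $\|Q_t\|_{\mathrm{op},\pi}\le 1-\Omega(1/\thit(P_t))\le 1-\Omega(1/\thitmax(\Pseq))$ step by step, yielding the desired exponential decay.

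Equipped with the lemma, part~(i) follows by combining it with the warm-up provided by $\tsep^{(\pi)}$. After $\tsep^{(\pi)}$ steps the walker's distribution $\mu$ dominates $\pi/2$ pointwise, so the decomposition $\mu = \tfrac12\pi + \tfrac12\nu$ lets me couple $X_{\tsep}$ with an independent $\pi$-sample that agrees with probability $\tfrac12$. Running $k$ independent walks for an epoch of length $T=\tsep^{(\pi)} + C\thitmax/k$ and applying the tail lemma conditional on the coupling, the probability that every walk misses $w$ throughout the epoch is bounded by $(\tfrac12+\tfrac12 e^{-C/(2k)})^k \le \beta<1$ for a sufficiently large constant $C$. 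Iterating via the strong Markov property at the end of each failed epoch, the number of epochs is geometric with expectation $O(1)$, so $\thit^{(k)}(\Pseq)=O(\tsep^{(\pi)}+\thitmax/k)$. The ``in particular'' consequence follows from \cref{thm:main_mixing}: under laziness, $\tsep^{(\pi)}\le\tmix^{(\infty,\pi)}=O(\thitmax)$.

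For part~(ii), I would first bound the cover time starting from $\pi$: the tail lemma for $k$ independent walks gives $\Pr[\min_i \tau_w^{(i)} > t \mid X_0^{(i)} \sim \pi] \le e^{-kt/(c\thitmax)}$, and a union bound over the $n$ targets combined with the integrated-tail formula yields $\tcov^{(k)}_\pi(\Pseq) = O(\thitmax\log n/k)$. To remove the stationary-start assumption, I use the warm-up plus a short recursion: after $\tsep^{(\pi)}$ steps each walker is at $\pi$ independently with probability $\tfrac12$, so with constant probability $\alpha$ either a constant fraction of the walks is at $\pi$ (when $k$ is large, by Chernoff) or at least one walk is at $\pi$ (when $k=1$); conditional on this event the cover time from there is $O(\thitmax\log n/k)$, and with probability $1-\alpha$ the process effectively restarts from a new state distribution. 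Solving the Wald-type inequality $\tcov^{(k)}\le \tsep^{(\pi)}+\alpha\cdot O(\thitmax\log n/k)+(1-\alpha)\tcov^{(k)}$ yields the claimed bound. The main obstacle throughout is the spectral estimate $\|Q_t\|_{\mathrm{op},\pi}\le 1-\Omega(1/\thit(P_t))$: it is classical for lazy reversible chains but, since the theorem allows general reversible $P_t$, one may need to handle periodicity (where $Q_t$ can have eigenvalues close to $-1$) by squaring, i.e.\ controlling $\|Q_t^2\|_{\mathrm{op},\pi}$, or by passing to the lazified chain $(I+P_t)/2$.
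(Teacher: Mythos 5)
Your proposal follows essentially the same route as the paper: the same exponential tail lemma for a $\pi$-started walker, proved by bounding the $\pi$-operator norm (equivalently, the spectral radius) of the killed matrix by $1-1/\thit(P_t)$; the same $\tfrac12\pi+\tfrac12\nu$ decomposition at time $\tsep^{(\pi)}(\Pseq)$ to extract stationary walkers; and the same geometric-epoch restart argument to convert a constant-probability bound into an expectation bound, with the lazy case handled via $\tsep^{(\pi)}\le\tmix^{(\infty,\pi)}=O(\thitmax)$. The obstacle you flag at the end is not actually an issue: $D_wP_tD_w$ is a nonnegative, $\pi$-self-adjoint substochastic matrix, so its operator norm equals its spectral radius, which by Perron--Frobenius is attained at a nonnegative eigenvalue $\lambda$ whose eigenvector yields a starting distribution $h$ with $\E_h[\tau_w]=(1-\lambda)^{-1}\le\thit(P_t)$; hence no laziness, squaring, or lazification is needed (this is exactly the paper's \cref{lem:hiteigen}).
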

\Cref{thm:main_hitting_cover} is the first result concerning multiple random walks on dynamic graphs.
For $k$-independent random walks according to a static $P$, the following bounds are known:
$\thit^{(k)}(P)=O\left(\tsep^{(\pi)}(\Pseq)+\frac{\thit(P)}{k}\right)$  (Theorem 8 in \cite{ER09}) and
$\tcov^{(k)}(P)=O\left(\tsep^{(\pi)}(\Pseq)+\frac{\thit(P)\log n}{k}\right)$ (Theorem 3.2 in~\cite{ES11}).
Hence, \cref{thm:main_hitting_cover} can be seen as a generalization
of these previous bounds.
Furthermore, for the case of $k=1$, \cref{thm:main_hitting_cover} improves various previous bounds of $\thit(\mathcal{P})$ and $\tcov(\mathcal{P})$.
See \cref{sec:example} for details.

\paragraph*{Meeting time and coalescing time.}
Our third result concerns the \emph{meeting time} $\tmeet(\Pseq)$ and the \emph{coalescing time} $\tcoal(\Pseq)$ of random walks on dynamic graphs.
Consider two independent
    random walks according to
    the same transition matrix sequence $\mathcal{P}$.
The meeting time $\tmeet(\mathcal{P})$
    is the expected time
    for the two walkers
    to meet starting from the worst initial positions.
In the coalescing random walk,
    we consider $|V|$ independent random walks
    according to $\mathcal{P}$
    starting from $|V|$ distinct initial positions.
Once two or more walkers gather at the same position,
    the walkers are merged into one walker.
The coalescing time $\tcoal(\mathcal{P})$
    is the expected time
    for the $|V|$ walkers
    to merge into one walker.
    By definition, we have $\tmeet(\Pseq)\leq \tcoal(\Pseq)$ in general.

For any irreducible, reversible, and lazy $P$, it is known that $\tcoal(P)=O(\thit(P))$ (Theorem 1.4 in \cite{OP19}).
Similarly to the hitting time,
we observe an exponential gap
    between static and dynamic settings.
\begin{proposition}
\label{prop:exponential_lower_meet}
There is a sequence of irreducible, reversible, and lazy transition matrices $\Pseq=(P_t)_{t\geq 1}$ satisfying $\tmeet(\Pseq)=2^{\Omega(n)}$.  
\end{proposition}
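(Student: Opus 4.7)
The plan is to construct a dynamic graph consisting of two ``Sisyphus wheels'' glued at a single shared vertex $z$, and to place the two walkers on opposite wheels, so that any meeting must occur at $z$, while reaching $z$ is exponentially slow by the same Sisyphus-wheel argument underlying \cref{prop:sisyphus_wheel}.

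Concretely, I would take $n = 2k+1$ and $V = V_A \cup V_B \cup \{z\}$ with $V_A = \{a_0,\ldots,a_{k-1}\}$ and $V_B = \{b_0,\ldots,b_{k-1}\}$. Letting $v(t) = t \bmod k$, the graph $G_t$ is the union of two stars sharing $z$: one with center $a_{v(t)}$ and leaves $(V_A \setminus \{a_{v(t)}\}) \cup \{z\}$, and one with center $b_{v(t)}$ and leaves $(V_B \setminus \{b_{v(t)}\}) \cup \{z\}$. Let $P_t$ be the lazy simple random walk on $G_t$; each $P_t$ is irreducible (since $G_t$ is connected), reversible, and lazy. I would place walker $1$ at $a_0$ and walker $2$ at $b_0$. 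Because the only edges between $V_A$ and $V_B$ in any $G_t$ pass through $z$, walker $1$'s trajectory lies entirely in $V_A$ until its first visit to $z$, and walker $2$'s trajectory lies in $V_B$ until its first visit, so any meeting forces at least one of them to have reached $z$:
\[
\tmeet(\Pseq) \;\geq\; \E\bigl[\min(\tau_z^{(1)},\tau_z^{(2)})\bigr],
\]
where $\tau_z^{(i)}$ is walker $i$'s first hitting time of $z$.

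Restricted to $V_A \cup \{z\}$, walker $1$ executes the lazy simple random walk on a Sisyphus wheel of $k+1$ vertices in which $z$ plays the role of the never-center vertex. To bound $\tau_z^{(1)}$, I would track the ``lag'' $c(t) = v(t+1) - j(t) \bmod k$, where $a_{j(t)}$ is walker $1$'s position, and observe that walker $1$ can reach $z$ in round $t+1$ only if $c(t) = 0$ (i.e., walker $1$ sits at the round-$(t+1)$ center), in which case it moves to $z$ with probability $1/(2k)$. The lag chain starts at $c(0) = 1$ and evolves, from any $c \neq 0$, by $c \mapsto c+1 \bmod k$ or $c \mapsto 1$, each with probability $1/2$; reaching $c = 0$ from $c = 1$ therefore requires $k-1$ consecutive increments without a reset, which happens with probability $2^{-(k-1)}$ per attempt. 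A short renewal-style counting argument then yields a tail bound of the form $\Pr[\tau_z^{(i)} \leq T] \leq T^2 \cdot 2^{-\Omega(k)}$.

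Choosing $T = 2^{ck}$ for a sufficiently small constant $c > 0$ and applying a union bound over the two walkers gives $\Pr\bigl[\min(\tau_z^{(1)}, \tau_z^{(2)}) \leq T\bigr] \leq 1/2$, so $\E[\tmeet(\Pseq)] \geq T/2 = 2^{\Omega(n)}$. The main obstacle is establishing a \emph{tail bound} for $\tau_z^{(i)}$ rather than just an expected lower bound: \cref{prop:sisyphus_wheel} directly gives $\E[\tau_z^{(i)}] = 2^{\Omega(n)}$, but this does not by itself imply that $\E[\min(\tau_z^{(1)},\tau_z^{(2)})]$ is exponential, so the above lag-chain analysis—controlling $\Pr[c(t) = 0]$ uniformly in $t$—is the technical core of the argument.
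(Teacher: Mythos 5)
Your construction and argument essentially coincide with the paper's proof: the paper uses the Olshevsky--Tsitsiklis sequence of two rotating stars whose hubs are joined by a (moving) bridge edge, argues that any meeting forces a walker to catch the rotating hub, which requires $m-1$ consecutive self-loops and hence has probability $2^{-m+1}$, and concludes by a union bound. Your variant glues the two stars at a fixed shared leaf $z$ instead of a hub--hub bridge, and your lag-chain analysis is an explicit (and in fact slightly more careful) rendering of the same mechanism --- in particular you correctly insist on a tail bound with a $T$-dependent factor in the union bound over time, a point the paper glosses over.
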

Note that \cref{prop:exponential_lower_meet} also gives an exponential lower bound of the coalescing time since $\tcoal(\Pseq)\geq \tmeet(\Pseq)$.
The random walk we consider in \cref{prop:exponential_lower_meet}
is the lazy simple random walk
on the graph sequence presented by
Olshevsky and Tsitsiklis~\cite{OT11},
which has an exponential meeting time.

The following main result
    presents further evidence
    that
    that a time-inhomogeneous coalescing walk with a common stationary distribution
    behaves almost identically to that on a static graph.
\begin{theorem}[Main result 3]
\label{thm:main_coalescing}
Let $\mathcal{P}=(P_t)_{t\geq 1}$ be a sequence of irreducible, reversible, and lazy transition matrices.
Suppose that all $P_t$ have the same stationary distribution $\pi$.
Then, $\tcoal(\Pseq)=O\left(\thitmax(\mathcal{P})\right)$.
\end{theorem}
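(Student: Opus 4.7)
My strategy is to go through a duality between the pull-voting process and the coalescing random walk on $\mathcal{P}$, extending the static-graph analysis of~\cite{OP19} to the time-inhomogeneous setting with a common stationary distribution.

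First I would set up the duality. Define the \emph{pull voting} process: at each step $t\geq 1$, every vertex $u$ simultaneously copies the current opinion of a random neighbor chosen according to $P_t(u,\cdot)$. Using the standard ``pull-arrow'' construction, the joint configuration of opinions after $T$ steps of pull voting on $(P_1,\ldots,P_T)$ has the same distribution as the labeling obtained by running a coalescing walk under the reversed sequence $(P_T,\ldots,P_1)$ from the identity partition. Under the reversibility assumption, each $P_t$ equals its time-reversal, so the reversed sequence also consists of irreducible, reversible, lazy matrices sharing the stationary distribution $\pi$, and $\thitmax$ is invariant under time reversal. Hence it is enough to bound the coalescing time for an arbitrary such sequence in terms of $\thitmax$.

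Since pull voting started from the identity configuration reaches consensus exactly when the dual coalescing walk has merged into a single cluster, it suffices to control the consensus time of pull voting. The expected number of distinct opinions at time $t$ decomposes as $\sum_{u\neq v}\Pr[u,v\text{ disagree at time }t]$, and by the duality each summand equals the probability that two independent walks (under the reversed sequence) started from $u,v$ have not met by time $t$. So the program is to prove the meeting-time bound $\tmeet(\mathcal{P})=O(\thitmax(\mathcal{P}))$ and then to promote it to $\tcoal(\mathcal{P})=O(\thitmax(\mathcal{P}))$ by a density/restart argument modeled on the static case.

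To bound $\tmeet(\mathcal{P})$ I would analyze the product walk $(X_t,Y_t)$ on $V\times V$, which at every step is reversible with respect to $\pi\otimes\pi$ and lazy, and treat the diagonal $\Delta=\{(v,v):v\in V\}$ as the hitting target. The aim is to show, via a second-moment or potential-function argument mirroring the static reversible case, that inside any block of length $C\cdot\thitmax(\mathcal{P})$ the two walkers meet with at least a universal constant probability, regardless of their starting pair; iterating this estimate yields $\tmeet(\mathcal{P})=O(\thitmax(\mathcal{P}))$, and combining with the density decay for the coalescing walk gives the theorem. The main obstacle is precisely this meeting-time estimate. In the static setting the product chain is a fixed reversible lazy Markov chain, and one can apply classical hitting-time inequalities to $\Delta$; in the dynamic setting the product chain is time-inhomogeneous, so although $\pi\otimes\pi$ remains stationary and each product step is reversible, static hitting-time bounds do not apply directly. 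The likely route is to reduce the meeting problem to a hitting problem for a \emph{single} dynamic walk on a suitably modified state space and then appeal to Theorem~\ref{thm:main_hitting_cover}\ref{lab:rev_hit}; making this reduction quantitatively tight, so that only $\thitmax(\mathcal{P})$ and no larger quantity appears in the final bound, will be the technical crux.
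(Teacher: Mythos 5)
Your proposal leaves the central difficulty unresolved. The paper does not go through pull voting at all for this theorem (duality is used only later, in the other direction, to deduce the consensus-time bound from the coalescing bound); it attacks the coalescing walk directly. The engine is a time-inhomogeneous \emph{Meeting Time Lemma}: for a single walk started from $\pi$ and \emph{any deterministic} sequence of target vertices $(w_t)_{t\geq 0}$, one has $\Pr[\bigwedge_{t=0}^T\{X_t\neq w_t\}]\leq\prod_{t=1}^T(1-1/\thit(P_t))$. The meeting problem for two walkers is reduced to this by conditioning on one walker's entire trajectory, so that the other walker only has to avoid a fixed moving target. The lemma itself is proved spectrally: writing the avoidance probability via the product $\prod_t D_{w_{t-1}}P_tD_{w_t}$, using laziness to factor $P_t=\sqrt{P_t}\sqrt{P_t}$, and showing $\|D_xP D_y f\|_{2,\pi}\leq\sqrt{\rho(D_xPD_x)\rho(D_yPD_y)}\,\|f\|_{2,\pi}$ together with $\rho(D_wPD_w)\leq 1-1/\thit(P)$. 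Your proposed alternative --- analyzing the product chain on $V\times V$ and hitting the diagonal, or ``reducing to a hitting problem for a single dynamic walk'' --- is exactly the step you admit you cannot carry out, and it is genuinely problematic: the diagonal is a \emph{set}, classical worst-case hitting-time bounds do not control hitting times of sets by $\thit$ of the factors, and this is precisely why Oliveira--Peres needed a nontrivial argument even in the static case. Declaring this ``the technical crux'' without supplying it means the proof is not there.

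There are two further gaps even granting a meeting-time estimate. First, your decomposition $\Pr[\taucoal>t]\leq\sum_{u\neq v}\Pr[\text{walks from }u,v\text{ not met by }t]$ is a union bound over $\Theta(n^2)$ pairs and only yields $\tcoal(\Pseq)=O(\thitmax(\Pseq)\log n)$; removing the logarithm requires the dyadic ``allowed killings'' scheme (the paper's Lemma~\ref{lem:coal_mult}, following Oliveira), in which walkers in the $i$-th dyadic block act as killers for higher-indexed walkers during a time window of length $\Theta(\thitmax(\Pseq)/2^i)$, so that the windows telescope to $O(\thitmax(\Pseq))$ total. You gesture at ``a density/restart argument'' but do not describe it, and this is where the constant-versus-$\log n$ distinction is decided. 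Second, the Meeting Time Lemma requires the avoiding walker to start from $\pi$; handling worst-case initial positions requires first running for $\tsep^{(\pi)}(\Pseq)$ steps (bounded by $O(\thitmax(\Pseq))$ via \cref{thm:main_mixing}) and a coin-flip/Chernoff argument to guarantee that a constant fraction of each dyadic block of walkers is $\pi$-distributed. Your sketch does not address the initial-distribution issue at all.
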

\Cref{thm:main_coalescing} 
is the first result for the coalescing time on dynamic graphs. 
\Cref{thm:main_coalescing} 
generalizes
the aforementioned bound of $\tcoal(P)=O(\thit(P))$ for static $P$ (Theorem 1.4 in \cite{OP19}).
Furthermore, \cref{thm:main_coalescing} plays a key role to bound the consensus time of the \emph{pull voting} on dynamic graphs
(\cref{sec:application_to_pull_voting}).
\subsection{Examples}
\label{sec:example}
Our general results \cref{thm:main_mixing,thm:main_hitting_cover,thm:main_coalescing}
    yield new and improved bounds
    for mixing, hitting, cover, and coalescing times for various
    concrete random walks.
These consequences are summarized in  \cref{tbl:example}.
Throughout this section, unless otherwise stated, we assume that a graph $G$ is connected.
For a graph $G$ and a vertex $v\in V(G)$, let $N(G,v)$ be the set of neighboring vertices of $v$ (excluding $v$) and $\deg(G,v)=|N(G,v)|$ be the degree of $v$.

\begin{table}[t]
\centering
\begin{tabular}{|c|c|cc|cc|cc|}
\hline
\multicolumn{2}{|c|}{}                                                                                                                                          & \multicolumn{2}{c|}{$\tmix^{(\infty,\pi)}$}                                                & \multicolumn{2}{c|}{$\thit$}                                                                                                          & \multicolumn{2}{c|}{$\tcov$}                                                                                                                \\ \hline
\multicolumn{2}{|c|}{}                                                                                                                                          & $O(\trelmax\log \pi_{\min}^{-1})$     & Pro.$\ref{prop:mixing_SZ07}$                       & \cellcolor[HTML]{EFEFEF}                                & \cellcolor[HTML]{EFEFEF}                                                    & \cellcolor[HTML]{EFEFEF}                                      & \cellcolor[HTML]{EFEFEF}                                                    \\
\multicolumn{2}{|c|}{\multirow{-2}{*}{General}}                                                                                                                 & \cellcolor[HTML]{EFEFEF}$O(\thitmax)$ & \cellcolor[HTML]{EFEFEF}Th.$\ref{thm:main_mixing}$ & \multirow{-2}{*}{\cellcolor[HTML]{EFEFEF}$O(\thitmax)$} & \multirow{-2}{*}{\cellcolor[HTML]{EFEFEF}Th.$\ref{thm:main_hitting_cover}$} & \multirow{-2}{*}{\cellcolor[HTML]{EFEFEF}$O(\thitmax\log n)$} & \multirow{-2}{*}{\cellcolor[HTML]{EFEFEF}Th.$\ref{thm:main_hitting_cover}$} \\ \hline
                                                                             &                                                                                  &                                       &                                                    & $O(n^3\log n)$                                          & $\cite{SZ19}$                                                               & $O(n^3\log^2n)$                                               & $\cite{SZ19}$                                                               \\
                                                                             & \multirow{-2}{*}{\begin{tabular}[c]{@{}c@{}}Time hom.\\ deg. dist.\end{tabular}} & \multirow{-2}{*}{$O(n^3)$}            & \multirow{-2}{*}{$\cite{SZ19}$}                    & \cellcolor[HTML]{EFEFEF}$O(n^3)$                        & \cellcolor[HTML]{EFEFEF}Th.$\ref{thm:main_hitting_cover}$                   & \cellcolor[HTML]{EFEFEF}$O(n^3\log n)$                        & \cellcolor[HTML]{EFEFEF}Th.$\ref{thm:main_hitting_cover}$                   \\ \cline{2-8} 
\multirow{-3}{*}{\begin{tabular}[c]{@{}c@{}}LS\\ RW\end{tabular}}            & Regular                                                                          & $O(n^2)$                              & $\cite{SZ19}$                                      & $O(n^2)$                                                & $\cite{SZ19}$                                                               & $O(n^2\log n)$                                                & $\cite{SZ19}$                                                               \\ \hline
                                                                             &                                                                                  & $O(n^3\log n)$                        & $\cite{AKL08}$                                     & $O(n^3\log n)$                                          & $\cite{DR14}$                                                               &                                                               &                                                                             \\
\multirow{-2}{*}{\begin{tabular}[c]{@{}c@{}}$d_{\max}$\\ -lazy\end{tabular}} & \multirow{-2}{*}{Any graph}                                                      & \cellcolor[HTML]{EFEFEF}$O(n^3)$      & \cellcolor[HTML]{EFEFEF}Th.$\ref{thm:main_mixing}$ & \cellcolor[HTML]{EFEFEF}$O(n^3)$                        & \cellcolor[HTML]{EFEFEF}Th.$\ref{thm:main_hitting_cover}$                   & \multirow{-2}{*}{$O(n^3\log n)$}                              & \multirow{-2}{*}{$\cite{DR14}$}                                             \\ \hline
LMW                                                                          & Any graph                                                                        & \cellcolor[HTML]{EFEFEF}$O(n^2)$      & \cellcolor[HTML]{EFEFEF}Th.$\ref{thm:main_mixing}$ & \cellcolor[HTML]{EFEFEF}$O(n^2)$                        & \cellcolor[HTML]{EFEFEF}Th.$\ref{thm:main_hitting_cover}$                   & \cellcolor[HTML]{EFEFEF}$O(n^2\log n)$                        & \cellcolor[HTML]{EFEFEF}Th.$\ref{thm:main_hitting_cover}$                   \\ \hline
\end{tabular}
\caption{Bounds for specific examples that can be obtained from \cref{thm:main_mixing,thm:main_hitting_cover}.
The general results from \cref{prop:mixing_SZ07,thm:main_mixing,thm:main_hitting_cover}
    are given in General row.
The examples include bounds for a lazy simple random walk (LSRW) on graphs with a time-homogeneous degree distribution, regular graphs, a $d_{\max}$-lazy walk, and a lazy Metropolis walk (LMW).
Gray cells mean new or improved bounds.
}
\label{tbl:example}
\end{table}

\paragraph*{Lazy simple random walk.}
The transition matrix $\LSimP(G)$ of the lazy simple random walk\footnote{The laziness does not change the order of the hitting and cover times.
On the other hand, on connected bipartite graphs, the mixing and coalescing times of the simple random walk are unbounded, while these are bounded for the lazy simple random walk.
Hence, we assume the laziness in many cases.}
on a graph $G$ is defined by 
$\LSimP(G)(u,v)=\frac{1}{2\deg(G,u)}$ if $v\in N(G,u)$, 
$\LSimP(G)(u,u)=1/2$, 
and $\LSimP(G)(u,v)=0$ otherwise.
It is well known that $\thit(\LSimP(G))=O(n^3)$ for any $G$~\cite{Aleliunas79}, $\thit(\LSimP(G))=O(n^2)$ for any regular  $G$~\cite{KLNS89}, and $\thit(\LSimP(G))=O(n)$ for any regular expander\footnote{A graph is expander if $\trel(\LSimP(G))\leq C$ for some constnat $C>0$.} $G$~\cite{AF02}.

Let $\mathcal{G}=(G_t)_{t\geq 1}$ be a sequence of graphs with a time homogeneous degree distribution $(d(v))_{v\in V}$,  
i.e., $\deg(G_t,v)=d(v)$ holds for all $t\geq 1$ and $v\in V$.
Then, 
$\Pseq=(\LSimP(G_t))_{t\geq 1}$ has the common stationary distribution $\pi=\left(\frac{d(v)}{\sum_{u\in V}d(u)}\right)_{v\in V}$.
Hence, we can apply \cref{thm:main_mixing,thm:main_hitting_cover,thm:main_coalescing}.
In particular, \cref{thm:main_hitting_cover} implies  $\thit(\Pseq)=O(\thitmax(\Pseq))=O(n^3)$. 
This improves the $O(n^3\log n)$ bound (Main Result 1(3)) of \cite{SZ19}.
Another interesting example is the
sequence
$\mathcal{G}=(G_t)_{t\geq 1}$ of regular expander graphs.
Let $\mathcal{P}=(\LSimP(G_t))_{t\geq 1}$.
For $k=O(n)$, the cover time of $k$ independent lazy simple random walks
satisfies $\tcov^{(k)}(\Pseq)=O\bigl(\frac{n\log n}{k}\bigr)$ from \cref{prop:mixing_SZ07,thm:main_hitting_cover}.
This bound is tight since $\tcov^{(k)}(\LSimP(G))=\Omega\bigl(\frac{n\log n}{k}\bigr)$ holds for any (static) $G$ and $k=O(n\log n)$ \cite{RSS21}.

%
\paragraph{$d_{\max}$-lazy walk.} 
Let $d_{\max}=d_{\max}(G)\defeq \max_{v\in V(G)}\deg(v)$ denote the maximum degree of $G$.
The transition matrix $\MDW(G)$ of the $d_{\max}$-lazy walk on a graph $G$ is defined by
$\MDW(u,v)=\frac{1}{2d_{\max}}$ if $\{u,v\}\in E(G)$, 
$\MDW(u,u)=1-\frac{\deg(G,u)}{2d_{\max}}$, 
and $\MDW(G)(u,v)=0$ otherwise.
It is known that $\thit(\MDW(G))=O(n^3)$ holds for any $G$ (\cite{AKL18,DR14}).

Note that $\MDW(G)$ has the uniform stationary distribution for any $G$ since $\MDW(G)$ is symmetric. 
Hence, for any sequence of graphs $\mathcal{G}=(G_t)_{t\geq 1}$ and $\Pseq=(\MDW(G_t))_{t\geq 1}$, 
we can apply \cref{thm:main_mixing,thm:main_hitting_cover,thm:main_coalescing}. 
For example, $\thit(\Pseq)=O(\thitmax(\Pseq))=O(n^3)$ holds from \cref{thm:main_hitting_cover}.
This improves the previous $O(n^3\log n)$ bound  in \cite{DR14}.
%
\paragraph{Metropolis walk.}

We saw in the previous paragraph
    that the $d_{\max}$-lazy walk
    has a polynomial cover time
    for \emph{any} dynamic graph.
However, as mentioned in \cite{AKL18}, the $d_{\max}$-lazy random walk requires knowledge of the maximum degree, which is a \emph{global} information of $G_t$ at each $t\geq 1$.
Regarding this issue,
    we consider the lazy \emph{Metropolis walk}
    of Nonaka, Ono, Sadakane, and Yamashita~\cite{NOSY10}.
The transition matrix $\LMWP(G)$ of the lazy Metropolis walk on a graph $G$ is  defined by
\begin{align}
      \LMWP(G)(u,v)=\begin{cases}
        \frac{1}{2\max\{\deg(G,u),\deg(G,v)\}} & \text{if $\{u,v\}\in E(G)$},\\
        1-\sum_{w\in N(G,u)}\frac{1}{2\max\{\deg(G,u),\deg(G,w)\}} & \text{if $u=v$},\\
        0 & \text{otherwise}.
        \end{cases} \label{def:Metropolis_walk}
\end{align}
Note that
    a lazy Metropolis walk
    uses \emph{local} degree information around the walker.
For any $G$, Nonaka et al.~\cite{NOSY10} showed $\thit(\LMWP(G))=O(n^2)$.
    
Since 
the transition matrix of the lazy Metropolis walk 
is symmetric, the stationary distribution is uniform
        for any underlying graph.
Hence, we can apply \cref{thm:main_mixing,thm:main_hitting_cover,thm:main_coalescing}.
Interestingly, our bounds for Metropolis walks in \cref{tbl:example}
    are tight up to a constant factor:
On the (static) cycle graph, the lazy Metropolis walk has $\Omega(n^2)$ mixing time and  $\Omega(n^2)$ hitting time~\cite{AF02}.
On the glitter star graph of \cite{NOSY10}, the lazy Metropolis walk has an $\Omega(n^2\log n)$ cover time.

\paragraph*{Metropolis walks on edge-Markovian graphs.}
Our results
concern the sequence of
connected graphs.
Indeed, it is not difficult to see that \cref{thm:main_mixing,thm:main_hitting_cover,thm:main_coalescing} also hold if
$G_t$ is connected at least once in every $C$ steps for some positive constant $C$.
This setting was already studied in \cite{DR14}.
With some additional arguments, we study random walks
on the \emph{edge-Markovian graph} $(G_t)_{t\geq 1}$ defined as follows:
Let $p,q\in[0,1]$ be parameters and
$G_1$ be an arbitrary fixed graph.
The graph $G_{t+1}$ is obtained by adding each $e\in \binom{V}{2}\setminus E(G_{t})$ independently with probability $p$ and removing each $e\in E(G_t)$ independently with probability $q$.

The model of edge-Markovian graph was introduced by Clementi, Macci, Monti, Pasquale, and Silvestri~\cite{CMMPS10}
as a wide generalization of time-independent dynamic random graphs.
Since then,
    several properties
    including the
    flooding~\cite{CMMPS10,BCF11},
    rumor spreading~\cite{CCDFPS16}, and mixing time~\cite{CSZ20}
    have been investigated.
%
In this paper, we focus on the Metropolis walk on the edge-Markovian graph $(G_t)_{t\geq 1}$
and obtain the following result.
See \cref{sec:edge_Markovian} for the proof.

\begin{theorem}
\label{thm:LMRW_EM}
Let $c\geq 1$ be arbitrary.
Consider $\mathcal{G}(n,p,q)=(G_t)_{\geq 0}$ the edge-Markovian graph satisfying $\frac{p}{p+q}\geq 32(c+1)\frac{\log n}{n}$ and $0<p+q\leq 1$.
Let $\Pseq=(\LMWP(G_t))_{t\geq 1}$.
Then, for any $k\geq 1$, $\mathcal{G}(n,p,q)$ satisfies the following with probability $1-n^{-\Omega(1)}$:
$\thit^{(k)}(\Pseq)=O\left(\frac{\max\{1,q/p\}}{p+q}+\log n+ \frac{n}{k}\right)$, 
$\tcov^{(k)}(\Pseq)=O\left(\frac{\max\{1,q/p\}}{p+q}+\log n+\frac{n\log n}{k}\right)$ 
, and $\tcoal(\Pseq)=O\left(\frac{\max\{1,q/p\}}{p+q}+n\right)$.
\end{theorem}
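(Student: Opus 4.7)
The proof reduces to the time-homogeneous setting handled by \cref{thm:main_hitting_cover,thm:main_coalescing}. The key observation is that, after a burn-in of $T_0=O\bigl(\frac{\max\{1,q/p\}}{p+q}\bigr)$ rounds, every realised $G_t$ behaves, at the level of its degree sequence, like a dense Erd\H{o}s--R\'enyi graph $G(n,\mu)$ with $\mu=p/(p+q)\ge 32(c+1)\log n/n$.

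First, I would analyse the edge-Markovian process itself. Each edge follows an independent two-state Markov chain with stationary probability $\mu$ and contraction factor $1-p-q$, so for any initial $G_1$ and edge $e$ one has $\lvert\Pr[e\in G_t]-\mu\rvert\le(1-p-q)^{t-1}$, which places every edge-marginal in $[\mu/2,3\mu/2]$ once $t\ge T_0$; the $\max\{1,q/p\}$ factor absorbs the worst case of a nearly-empty $G_1$, where one must wait roughly $1/p$ rounds per individual edge. Edges are mutually independent at each fixed time, so Chernoff together with $n\mu\ge 32(c+1)\log n$ and a union bound over $n$ vertices and a polynomially-long time horizon gives that every vertex of $G_t$ has degree $\Theta(n\mu)$ at every $t\ge T_0$ simultaneously, with probability $1-n^{-\Omega(1)}$.

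Second, I would show that on any graph $G$ drawn from such an independent-edge distribution with marginals in $[\mu/2,3\mu/2]$, the lazy Metropolis walk $\LMWP(G)$ satisfies $\trel(\LMWP(G))=O(1)$ and $\thit(\LMWP(G))=O(n)$ with probability $1-n^{-\Omega(1)}$. Since $\max\{\deg(u),\deg(v)\}=\Theta(n\mu)$ on every edge, $\LMWP(G)$ is a constant-factor perturbation of the lazy simple random walk on $G$, and a standard edge-expansion estimate (Chernoff and a size-stratified union bound over vertex subsets) combined with Cheeger's inequality gives $\trel(\LMWP(G))=O(1)$; the hitting-time bound follows by the reversible-chain identity relating $\thit$ to $\trel$ and $1/\pi_{\min}$. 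Via \cref{prop:mixing_SZ07} this also yields $\tsep^{(\pi)}(\LMWP(G))\le\tmix^{(\infty,\pi)}(\LMWP(G))=O(\log n)$, with $\pi$ the uniform distribution.

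Third, I would apply the general results. Conditional on the high-probability event from the first two steps, the shifted sequence $(P_{T_0+t})_{t\ge 1}$ consists of irreducible, reversible, lazy transition matrices with common uniform stationary distribution and satisfies $\thitmax=O(n)$ and $\tsep^{(\pi)}=O(\log n)$. Plugging into \cref{thm:main_hitting_cover,thm:main_coalescing} yields $\thit^{(k)}=O(\log n+n/k)$, $\tcov^{(k)}=O(\log n+n\log n/k)$, and $\tcoal=O(n)$ for the shifted sequence, and adding back the $T_0$ burn-in, which can only delay any hit, cover, or coalescence event by $T_0$, gives the claimed bounds. The main obstacle is the second step: certifying the $O(1)$ spectral gap and $O(n)$ hitting time for $\LMWP(G_t)$ on each individual realised $G_t$, rather than for an actual $G(n,\mu)$ — since the $G_t$ are correlated across $t$ — but each such bound is a single-time-slice statement about a distribution of independent edges with marginals $\Theta(\mu)$, so standard dense-random-graph expansion arguments apply once the degree concentration of Step 1 is in place.
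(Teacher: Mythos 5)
Your Steps 1 and 2 (burn-in of length $O\bigl(\frac{\max\{1,q/p\}}{p+q}\bigr)$ so that edge marginals are $\Theta(p/(p+q))$, then Chernoff over cuts and degrees plus Cheeger to get $\trel(\LMWP(G_t))=O(1)$, hence $\thit(\LMWP(G_t))=O(n)$ via \cref{lem:hitbounds} and $\tsep^{(\pi)}=O(\log n)$ via \cref{prop:mixing_SZ07}) match the computation inside the paper's \cref{lem:EMexpander}. The gap is in Step 3. You cannot apply \cref{thm:main_hitting_cover,thm:main_coalescing} to the shifted sequence $(P_{T_0+t})_{t\geq 1}$, because $\thitmax$ and $\trelmax$ are suprema over \emph{all} $t\geq 1$, and each individual $G_t$ is only good with probability $1-n^{-\Omega(1)}$. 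Over an infinite horizon the edge process (an irreducible finite chain when $q>0$) almost surely visits disconnected --- indeed empty --- configurations infinitely often, so $\thitmax$ of the shifted infinite sequence is almost surely infinite and the event you condition on has probability zero. Restricting the union bound to a ``polynomially-long time horizon'' does not repair this: the quantities being bounded are expectations of stopping times, so the contribution of the tail beyond the horizon must be controlled, and the general theorems give you no handle on it.

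The paper's proof is built precisely to circumvent this. \Cref{lem:EMexpander} partitions time into phases, where phase $\ell$ contains $\ell$ sub-windows, each consisting of a burn-in of length $I$ (re-applied at the start of every sub-window, so the edge marginals are near-stationary regardless of how the conditioning on earlier failures has biased the edge states) followed by a working window of length $J$; the event that \emph{some} sub-window in phase $\ell$ has $\trelmax\leq C$ fails with probability at most $n^{-\ell}$, so a union bound over all phases is summable and yields an event of probability $1-O(1/n)$ under which every phase contains a good window. The hitting/cover/coalescing analysis then never invokes \cref{thm:main_hitting_cover,thm:main_coalescing} directly; instead it applies the restartable one-window estimates \cref{lem:hitting_probability_bound,lem:cover_probability_bound,lem:coalcons} (which hold conditioned on an arbitrary state at an arbitrary time) inside the good window of each phase to get a constant success probability per phase, and concludes with \cref{lem:ex_stopping}, whose hypothesis of non-uniform phase lengths $T_{\ell+1}-T_\ell=\ell(I+J)$ still sums to $O(I+J)$ against the geometric factor. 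To fix your proof you would need to replace Step 3 with an argument of this restart-per-phase type; as written, the reduction to the general theorems does not go through.
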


%




\subsection{Pull voting on dynamic graphs} \label{sec:application_to_pull_voting}
In the \emph{(weighted) pull voting},
we consider an $n$-vertex graph $G=(V,E)$
where each vertex $v\in V$ holds
an opinion $\sigma_v\in\Sigma$
for a finite set $\Sigma\subseteq\{0,\dots,n-1\}$
of possible opinions.
Let $P\in[0,1]^{V\times V}$ be a
    transition matrix.
At every discrete time step,
each vertex $u$ chooses a random neighbor
according to the distribution $P(u,\cdot)$
and then
updates
its opinion with the neighbor's opinion.
The aim of the protocol is to
reach consensus in which
every vertex supports the same opinion.
The \emph{consensus time} $\tau_{\mathrm{cons}}$
is the time of the process to reach
consensus.

We consider the \emph{pull voting on dynamic graphs}.
Specifically, let
$\Pseq=(P_t)_{t\geq 1}$ be a transition matrix sequence.
In the \emph{pull voting according to $\Pseq$},
at the $t$-th round,
vertices perform the one-round pull voting according to $P_t$.
We denote by $\taucons(\Pseq)$ the consensus time of the
pull voting according to $\Pseq$
and consider $\tcons(\Pseq)\defeq\E[\taucons(\Pseq)]$.
Combining \cref{thm:main_coalescing} and the idea of well-known \emph{duality}
    between the coalescing random walk and pull voting~\cite{HP01}
    with some additional argument,
    we prove the following.

\begin{theorem}[Consensus time] \label{thm:consensus_time}
Let $\mathcal{P}=(P_t)_{t\geq 1}$ be a sequence of irreducible, lazy, and reversible transition matrices.
Suppose that all $P_t$ have the same stationary distribution $\pi$.
Then, $\tcons(\Pseq)\leq O(\thitmax(\Pseq))$.
\end{theorem}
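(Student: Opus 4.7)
The plan is to combine a time-dependent version of the classical pull-voting / coalescing-walk duality with \cref{thm:main_coalescing} and a standard tail-amplification argument.

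First, I would establish the duality for an arbitrary horizon $T\geq 1$. Couple the pull voting as follows: when vertex $v$ pulls at real time $t\in[1,T]$, treat the chosen neighbor as the position at dual time $T-t+1$ of a walker that sits at $v$ at dual time $T-t$ and moves according to $P_t$. Following opinions backwards, the opinion of $v$ at real time $T$ equals the initial opinion of the vertex $f(v)$ at which the backward trajectory from $v$ arrives at dual time $T$. Two such backward trajectories, once they coincide, remain equal thereafter; hence the $|V|$ backward trajectories evolve exactly as a coalescing random walk whose transition matrices, read off in dual-time order, are $P_T, P_{T-1}, \ldots, P_1$. When the initial opinions are all distinct (the worst case), $f$ is constant on $V$ if and only if this dual walk has fully coalesced by dual time $T$, giving
\[
\Pr[\taucons(\mathcal{P})>T] \;=\; \Pr\bigl[\taucoal(\mathcal{P}^{\leftarrow T}) > T\bigr],
\]
where $\mathcal{P}^{\leftarrow T}=(P_T,P_{T-1},\ldots,P_1,P_1,P_1,\ldots)$; the extension past dual time $T$ is immaterial since $\{\taucoal>T\}$ depends only on the first $T$ matrices. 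Each matrix in $\mathcal{P}^{\leftarrow T}$ is irreducible, lazy, and reversible with stationary distribution $\pi$.

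Second, I would apply \cref{thm:main_coalescing} to $\mathcal{P}^{\leftarrow T}$. Because $\thitmax$ depends only on the \emph{set} of matrices used, $\thitmax(\mathcal{P}^{\leftarrow T})\leq \thitmax(\mathcal{P})$, so there is an absolute constant $C$ with $\tcoal(\mathcal{P}^{\leftarrow T})\leq C\,\thitmax(\mathcal{P})$. Setting $T^\star = 2C\,\thitmax(\mathcal{P})$ and applying Markov's inequality then gives $\Pr[\taucons(\mathcal{P})>T^\star]\leq 1/2$.

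Third, I would iterate, using the fact that pull voting is a Markov chain on opinion configurations. Let $A_k$ denote the event that consensus has not been reached by time $kT^\star$. Conditioned on $A_k$ and on the opinion configuration at time $kT^\star$, the same duality applied to the block $(P_{kT^\star+1},\ldots,P_{(k+1)T^\star})$ shows that the conditional probability of still not having consensus at time $(k+1)T^\star$ is at most the probability that the corresponding dual coalescing walk (starting from all of $V$) fails to fully coalesce in $T^\star$ steps; full coalescence is sufficient (if not necessary) for consensus, regardless of the current configuration. By \cref{thm:main_coalescing} and Markov as before, this is at most $1/2$. Therefore $\Pr[A_k]\leq 2^{-k}$, and summing tails,
\[
\tcons(\mathcal{P}) = \sum_{t\geq 0}\Pr[\taucons(\mathcal{P})>t] \;\leq\; \sum_{k\geq 0} T^\star \Pr[A_k] \;\leq\; 2T^\star \;=\; O(\thitmax(\mathcal{P})).
\]

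The main obstacle is executing step one rigorously: the classical duality is naturally stated for static $P$, and in the dynamic setting the reversal of time in the dual walk forces the matrix sequence to be read backwards, so vertices and their backward trajectories must be coupled on a common probability space with care. It is precisely this reversal that makes the permutation invariance of $\thitmax(\cdot)$ essential; without it, one could not identify $\thitmax(\mathcal{P}^{\leftarrow T})$ with $\thitmax(\mathcal{P})$ and invoke \cref{thm:main_coalescing} uniformly in $T$.
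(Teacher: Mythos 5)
Your proposal is correct and follows essentially the same route as the paper: a time-reversal duality identifying $\Pr[\taucons(\Pseq)>T]$ with the non-coalescence probability of the walk driven by the reversed sequence $(P_T,\dots,P_1,\dots)$ (the paper's \cref{prop:duality_dynamic}, proved there via a measure-preserving bijection on selection-matrix sequences rather than your backward-trajectory coupling, but the idea is identical), followed by \cref{thm:main_coalescing} plus Markov's inequality and the standard block-iteration of \cref{cor:key_cor}. Your explicit remarks that $\thitmax$ is unchanged under reversal and that full coalescence of the dual walk forces consensus from any configuration are exactly the observations the paper relies on.
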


On the other hand,
    the consensus time
    can be exponential in general.


\begin{proposition}\label{prop:tcons_lower_bound}
There is a sequence $(G_t)_{t\geq 1}$ of $n$-vertex connected graphs such that
    $\tcons(\Pseq) = 2^{\Omega(n)}$
    for $\Pseq=(\LSimP(G_t))_{t\geq 1}$.
\end{proposition}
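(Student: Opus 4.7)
The plan is to reuse the graph sequence $(G_t)_{t\geq 1}$ from Olshevsky and Tsitsiklis that underlies \cref{prop:exponential_lower_meet} and translate its exponential meeting-time barrier into an exponential consensus-time barrier via the standard forward coupling of pull voting with coalescing random walks. The structural property I would extract from the OT construction is an \emph{absorbing partition} $V=S\sqcup\bar S$ with both parts of linear size such that, uniformly over starting vertices $v\in V$, a lazy simple random walk on $\mathcal{G}$ (or on its time-reversed sub-sequence) started at $v$ stays in the part of the partition containing $v$ throughout any window of length $T\leq 2^{cn}$ with probability $1-2^{-\Omega(n)}$, for some small absolute constant $c>0$. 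This is essentially the reason two walkers started on opposite sides of the cut take exponential time to meet in \cref{prop:exponential_lower_meet}, and it can be extracted from a conductance-to-hitting-time argument applied to the exponentially small cut that the OT construction places across the bipartition.

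Given this partition, I initialize pull voting with opinion $0$ on every vertex of $S$ and opinion $1$ on every vertex of $\bar S$. I realize the process through the usual coupling: at each round $t$ independently draw a random function $f_t\colon V\to V$ with $f_t(u)\sim P_t(u,\cdot)$ and set $\sigma_t(u)=\sigma_{t-1}(f_t(u))$. Iterating gives $\sigma_T(u)=\sigma_0(W^{(u)}_T)$ with $W^{(u)}_T=f_1\circ f_2\circ\cdots\circ f_T(u)$, and marginally the backward walker $(W^{(u)}_k)_{0\leq k\leq T}$ has the law of a lazy simple random walk started at $u$ on the time-reversed sequence $(G_{T-k+1})_k$. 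Applying the absorbing-partition bound to this marginal and taking a union bound over the $n$ starting vertices yields: with probability at least $1-n\cdot 2^{-\Omega(n)}=1-o(1)$, every $W^{(u)}$ stays in the part containing $u$ for every $k\leq T$. On this event every $u\in S$ retains opinion $0$ and every $u\in\bar S$ retains opinion $1$, so consensus is not reached and $\taucons(\Pseq)>T$, giving $\tcons(\Pseq)\geq T\cdot(1-o(1))=2^{\Omega(n)}$.

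The main obstacle is extracting the absorbing-partition property from the OT construction in a form strong enough to survive the union bound over all $n$ starting vertices; \cref{prop:exponential_lower_meet} as stated only concerns a single pair of walkers, so one must revisit the underlying OT argument and show that the exponentially small conductance cut simultaneously obstructs escape from either side for every starting point. A secondary, minor technical point is checking that this obstruction is invariant under time reversal of the sequence, which is immediate because each $P_t$ is reversible and the partition $(S,\bar S)$ is a topological feature of the individual graphs $G_t$.
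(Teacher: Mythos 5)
Your overall architecture --- realize the pull voting through random functions $f_t$, write $\sigma_T(u)=\sigma_0(f_1\circ\cdots\circ f_T(u))$, and show that no backward walker crosses the $U$/$W$ partition --- is sound, and since consensus is absorbing you indeed only need the $n$ backward walkers launched from time $T$. The genuine gap is the key property you propose to extract. For the Olshevsky--Tsitsiklis sequence the claimed uniform bound (``for every starting vertex $v$, the walk stays on $v$'s side for $2^{cn}$ steps with probability $1-2^{-\Omega(n)}$'') is simply false: the $U$/$W$ cut is the single edge between the two current hubs, so a walker that begins at the hub of the first graph of its (reversed) window crosses in one step with probability $\tfrac{1}{2m}=\Theta(1/n)$, and a walker starting at the vertex that becomes the hub after $k$ rotations escapes with probability about $2^{-k}/n$. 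Relatedly, the mechanism you invoke is not the one at work: the cut has conductance $\Theta(1/n)$, not $2^{-\Omega(n)}$ (on the static graph the crossing time is polynomial); the exponential barrier in \cref{prop:exponential_lower_meet} is a purely dynamic effect --- the hub rotates one position per step, so a leaf walker can only become the hub by self-looping $m-1$ consecutive times --- and that proof crucially starts the walkers at $u_{m-1},w_{m-1}$, the positions farthest from the rotation. Consequently your union bound $1-n\cdot 2^{-\Omega(n)}$ does not go through as written. It is repairable, since the non-uniform per-vertex escape probabilities sum to $O(1/n)+nT\,2^{-\Omega(n)}=o(1)$ for $T=2^{cn}$, but that requires a different accounting than the lemma you state, and it cannot be obtained ``from a conductance-to-hitting-time argument.''

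The paper sidesteps all of this: it freezes $W$ at opinion $1$ (a monotone coupling), and bounds the first time $\tau'$ at which opinion $1$ reaches the single position $u_0$. Minimality of $\tau'$ forces a specific chain of $m-1$ consecutive self-loops (by $u_{m-1},u_{m-2},\dots,u_1$ at rounds $\tau',\tau'-1,\dots$), each of probability $\tfrac12$, so $\Pr[\tau'=t]\leq 2^{-(m-1)}$ for every fixed $t$ and a union bound over $t$ alone suffices; no uniformity over starting vertices is ever needed.
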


Another important question concerning pull voting is the probability that the process finally agrees with a specific opinion $\sigma\in\Sigma$.
Using the \emph{voting martingale} argument (e.g., \cite{CR16}), we obtain the following result.
\begin{proposition}[Winning probability] \label{prop:winning_probability_dynamic}
Let $\mathcal{P}=(P_t)_{t\geq 1}$ be a sequence of irreducible, lazy, and reversible transition matrices.
Suppose that all $P_t$ have the same stationary distribution $\pi$.
Consider the pull voting over opinion set $\Sigma$ according to $\Pseq$.
Then, the process finally agrees with $\sigma\in\Sigma$ with probability
$\sum_{v\in V_\sigma} \pi(v)$,
where $V_\sigma$ is the set
    of vertices initially holding
    opinion $\sigma$.
\end{proposition}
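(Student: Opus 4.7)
My plan is to adapt the classical \emph{voting martingale} argument (e.g., \cite{HP01,CR16}) to the time-inhomogeneous setting. For each time $t\geq 0$ and each $v\in V$, let $\sigma_v(t)$ denote the opinion of $v$ after $t$ rounds of the pull voting process, with $\sigma_v(0)=\sigma_v$; for $\sigma\in\Sigma$, define
\[
M_t^\sigma \defeq \sum_{v\in V\,:\,\sigma_v(t)=\sigma}\pi(v),
\]
the total $\pi$-mass of vertices currently supporting $\sigma$. By construction, $M_0^\sigma=\sum_{v\in V_\sigma}\pi(v)$ and $M_t^\sigma\in[0,1]$ for every $t$.

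The crucial step is to show that $(M_t^\sigma)_{t\geq 0}$ is a martingale with respect to the natural filtration $(\mathcal{F}_t)_{t\geq 0}$ of the process. In round $t+1$, each vertex $u$ independently samples a neighbor $w$ with probability $P_{t+1}(u,w)$ and adopts its opinion, so $\Pr[\sigma_u(t+1)=\sigma\mid\mathcal{F}_t]=\sum_{w\in V}P_{t+1}(u,w)\mathbf{1}[\sigma_w(t)=\sigma]$. Multiplying by $\pi(u)$, summing over $u$, and swapping the order of summation gives
\[
\E[M_{t+1}^\sigma\mid\mathcal{F}_t]=\sum_{w\in V}\mathbf{1}[\sigma_w(t)=\sigma]\sum_{u\in V}\pi(u)P_{t+1}(u,w)=\sum_{w\in V}\mathbf{1}[\sigma_w(t)=\sigma]\pi(w)=M_t^\sigma,
\]
where the decisive equality uses exactly the hypothesis that $\pi$ is stationary for $P_{t+1}$. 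Note that reversibility and laziness of $P_{t+1}$ are not invoked here; they enter only through \cref{thm:consensus_time}.

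To conclude, I would apply the optional stopping theorem to the bounded martingale $(M_t^\sigma)$ at the stopping time $\taucons(\Pseq)$. By \cref{thm:consensus_time}, $\E[\taucons(\Pseq)]=O(\thitmax(\Pseq))<\infty$, so $\taucons(\Pseq)<\infty$ almost surely, and the uniform bound $0\leq M_t^\sigma\leq 1$ legitimizes bounded optional stopping, yielding $\E[M_{\taucons}^\sigma]=M_0^\sigma=\sum_{v\in V_\sigma}\pi(v)$. At the consensus time, $M_{\taucons}^\sigma$ equals $1$ if the common opinion is $\sigma$ and $0$ otherwise, so $\E[M_{\taucons}^\sigma]=\Pr[\text{the process finally agrees with }\sigma]$, which proves the claim. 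There is no real obstacle: the martingale property is a direct consequence of the common stationarity of $\pi$, so the proof carries over essentially verbatim from the static case once almost-sure termination is supplied by \cref{thm:consensus_time}.
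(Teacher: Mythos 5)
Your proposal is correct and follows essentially the same route as the paper: the same voting martingale $\sum_{v:\sigma_v(t)=\sigma}\pi(v)$, the same one-line computation using stationarity of $\pi$ under each $P_{t+1}$, and the same application of optional stopping at $\taucons$ (the paper handles general $\Sigma$ by first reducing to the binary case, which is only cosmetically different from your direct use of the indicator of opinion $\sigma$). Your explicit remark that almost-sure finiteness of $\taucons$ comes from \cref{thm:consensus_time} and that reversibility/laziness are not needed for the martingale property is a correct and slightly more careful rendering of the same argument.
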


\subsection{Proof overview}
\label{sec:proof_outline}
We overview the proof of the main theorems (\cref{thm:main_mixing,thm:main_hitting_cover,thm:main_coalescing}).
%
Throughout this section, we assume that 
$\Pseq=(P_t)_{t\geq 1}$ is a sequence of irreducible and reversible transition matrices in which all $P_t$ have the same stationary distribution $\pi\in (0,1]^V$.
For $f,g\in\Real^V$, define the inner product $\langle\cdot,\cdot\rangle_\pi$ as $\langle f,g\rangle_\pi =\sum_{v\in V}f(v)g(v)\pi(v)$ and the induced norm $\|f\|_{2,\pi}=\sqrt{\sum_{v\in V}f(v)^2\pi(v)}$.
Define $\frac{f}{\pi}\in \Real^V$ by $\left(\frac{f}{\pi}\right)(v)=\frac{f(v)}{\pi(v)}$.
Let $\mathbbm{1}$ denote the $|V|$-dimensional all-one vector. 

For a vertex $w\in V$, let $D_w\in \{0,1\}^{V\times V}$ be the diagonal matrix defined by $D_w(v,v)=1$ if $v\neq w$ and $D_w(w,w)=0$.
We are interested in
    the substochastic matrix $D_wPD_w$.
Note that $D_wPD_w$ is the matrix obtained by replacing elements of $P(w,\cdot)$ and $P(\cdot,w)$ with $0$.
It is known that $\rho(D_wPD_w)\leq 1-\frac{1}{\thit(P)}$,
where $\rho(M)$ is the spectral
    radius of $M$
    (see \cref{lem:hiteigen} or Section 3.6.5 in \cite{AF02}).
    

%
\paragraph*{Mixing time (\cref{sec:mixing_time}).}
Let $\mu_0\in [0,1]^V$ be an initial distribution and $\mu_T=\mu_0 \prod_{i=1}^T P_i$ be the distribution of $X_T$.
Let $d^{(2,\pi)}(\mu_T)\defeq \left\|\frac{\mu_T}{\pi}-\mathbbm{1}\right\|_{2,\pi}$, 
i.e., the $\ell^2$-distance between $\mu_T/\pi$ and $\mathbbm{1}$.
It is known that $\max_{v\in V}\left|\frac{\mu_T(v)}{\pi(v)}-1\right|$ can be bounded
    in terms of $d^{(2,\pi)}(\cdot)$
    (see, e.g., \cite{LP17,SZ07} or \cref{eq:l2_to_unif}).
Henceforth, we focus on bounding $d^{(2,\pi)}(\mu_T)$.

The main part of the proof of \cref{thm:main_mixing} is to prove the following inequality:
For any $t\geq 0$ and \emph{lazy} $\Pseq=(P_t)_{t\geq 1}$, $d^{(2,\pi)}(\mu_{t+1})^2\leq d^{(2,\pi)}(\mu_t)^2\left(1-\frac{d^{(2,\pi)}(\mu_t)^2}{\thitmax(\Pseq)}\right)$ holds.
Applying this inequality repeatedly, we obtain $\tmix^{(\infty,\pi)}(\Pseq)=O(\thitmax(\Pseq))$ 
(see \cref{lem:mix_teq_SZ} or \cite{SZ19} for detail).

The aforementioned inequality comes from a variant of Mihail's identity (\cref{lem:Mihail}): For any reversible and \emph{lazy} $P$ and any probability vector $\mu$, $d^{(2,\pi)}(\mu P)^2\leq d^{(2,\pi)}(\mu)^2-\mathcal{E}_{P,\pi}(\mu/\pi)$ holds.
Here, $\mathcal{E}_{P,\pi}(f)=\frac{1}{2}\sum_{u,v\in V}\pi(u)P(u,v)(f(u)-f(v))^2=\langle f,f \rangle_\pi-\langle f,Pf \rangle_\pi$ is the Dirichlet form of $P$ and $\pi$.
In~\cite{SZ19}, authors consider a lazy simple random walk and give lower bounds of $\mathcal{E}_{\LSimP,\pi}(\mu/\pi)$ in terms of some graph parameters, e.g., $\mathcal{E}_{\LSimP,\pi}(\mu/\pi)\geq d^{(2,\pi)}(\mu)^4/n^2$ for regular graphs.
This lower bound means $d^{(2,\pi)}(\mu_{t+1})^2\leq d^{(2,\pi)}(\mu_t)^2\left(1-\frac{d^{(2,\pi)}(\mu_t)^2}{n^2}\right)$ holds.
%
Our technical contribution is to generalize the previous lower bound of $\mathcal{E}_{P,\pi}(\mu/\pi)$ for any reversible and lazy $P$ in terms of the hitting time (\cref{lem:dirichlethit}): $\mathcal{E}_{P,\pi}(\mu/\pi)\geq d^{(2,\pi)}(\mu)^4/\thit(P)$ holds.
This implies the desired inequality, $d^{(2,\pi)}(\mu_{t+1})^2\leq d^{(2,\pi)}(\mu_t)^2\left(1-\frac{d^{(2,\pi)}(\mu_t)^2}{\thitmax(\Pseq)}\right)$.

The proof of the key lemma (\cref{lem:dirichlethit}) 
    consists of four steps.
First, observe that $\mathcal{E}_{P,\pi}(f)=\mathcal{E}_{P,\pi}(-f)$ and  $\mathcal{E}_{P,\pi}(f)=\mathcal{E}_{P,\pi}(f+c\mathbbm{1})$ for any $f\in\mathbb{R}^V$ and $c\in\mathbb{R}$.
Hence, 
$\mathcal{E}_{P,\pi}(f)=\mathcal{E}_{P,\pi}(f_{\max}\mathbbm{1}-f)$ holds for $f_{\max}\defeq \max_{v\in V}f(v)$.
Second, 
let $g\defeq f_{\max}\mathbbm{1}-f$.
Note that $g(w)=0$ for $w\in V$ satisfying $f(w)=f_{\max}$.
Therefore, for any $u,v\in V$, we have
    $g(v)P(v,u)g(u)=g(v)(D_wPD_w)g(u)$
    and thus $\langle g,Pg \rangle_\pi = \langle g, D_wPD_w g \rangle_\pi$ holds.
Hence, we have
$\mathcal{E}_{P,\pi}(g)=\langle g,g \rangle_\pi - \langle g,Pg\rangle_\pi = \mathcal{E}_{D_wPD_w,\pi}(g)$.
Third, from the known inequality 
    $\langle g,D_wPD_wg\rangle_\pi \leq \rho(D_wPD_w) \|g\|_{2,\pi}^2$ for the spectral radius (\cref{lem:MatrixEigen})
    and $\rho(D_wPD_w) \leq 1-\frac{1}{\thit(P)}$ (\cref{lem:hiteigen}),
we have $\mathcal{E}_{P,\pi}(g)=\mathcal{E}_{D_{w}PD_{w},\pi}(g)\geq \frac{\|g\|_{2,\pi}^2}{\thit(P)}$.
Finally, from a carefully calculation, it is not difficult to see that $\|g\|_{2,\pi}^2\geq d^{(2,\pi)}(\mu)^4$ for $f=\mu/\pi$.

\paragraph{Hitting and cover times (\cref{sec:hitting_and_cover_times}).}
Let $(X_t)_{t\geq 0}$ be the random walk according to $\Pseq=(P_t)_{t\geq 1}$.
To obtain an upper bound of the hitting time, it suffices to bound the probability $\Pr\left[\land_{t=0}^T\{X_t\neq w\}\right]$ for any fixed vertex $w$.
Sauerwald and Zanetti~\cite{SZ19}
    used the conditional expectation approach to bound this probability.
Instead of this strategy, we use the following key lemma.
\begin{lemma}
\label{lem:HTL}
Suppose that $X_0$ is sampled from $\pi$.
Then, for any vertex $w\in V$ and $T\geq 0$,
\begin{align*}
    \Pr\left[\bigwedge_{t=0}^T\{X_t\neq w\}\right]
    \leq \prod_{t=1}^T\left(1-\frac{1}{\thit(P_t)}\right)
    \leq \exp\left(-\frac{T}{\thitmax(\Pseq)}\right).
\end{align*}
\end{lemma}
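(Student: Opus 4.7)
The plan is to reformulate the target avoidance probability as an inner product in the $\pi$-weighted Hilbert space and then iterate the operator-norm bound furnished by \cref{lem:hiteigen}.

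Expanding the path probabilities along the trajectory and using $D_w^2=D_w$, one obtains
\begin{align*}
\Pr\!\left[\bigwedge_{t=0}^T\{X_t\neq w\}\right]
= \pi\, D_w\, (P_1 D_w)(P_2 D_w)\cdots(P_T D_w)\,\mathbbm{1}
= \pi\, D_w\!\left(\prod_{t=1}^T D_w P_t D_w\right)\!\mathbbm{1}.
\end{align*}
Combining the identity $\pi\phi=\langle\mathbbm{1},\phi\rangle_\pi$ for any column vector $\phi$, the self-adjointness of $D_w$ with respect to $\langle\cdot,\cdot\rangle_\pi$, and the equality $\bigl(\prod_{t=1}^T D_w P_t D_w\bigr)\mathbbm{1} = \bigl(\prod_{t=1}^T D_w P_t D_w\bigr)D_w\mathbbm{1}$ (again from $D_w^2=D_w$), this can be rewritten as
\begin{align*}
\Pr\!\left[\bigwedge_{t=0}^T\{X_t\neq w\}\right]
= \left\langle D_w\mathbbm{1},\,\prod_{t=1}^T K_t\cdot D_w\mathbbm{1}\right\rangle_\pi,
\qquad K_t\defeq D_w P_t D_w.
\end{align*}

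Because each $P_t$ is reversible with respect to $\pi$ and $D_w$ is diagonal, every $K_t$ is self-adjoint with respect to $\langle\cdot,\cdot\rangle_\pi$, so its operator norm equals its spectral radius $\rho(K_t)$. One application of the Cauchy--Schwarz inequality followed by $T$ applications of $\|K_t f\|_{2,\pi}\le\rho(K_t)\|f\|_{2,\pi}$ then yields
\begin{align*}
\Pr\!\left[\bigwedge_{t=0}^T\{X_t\neq w\}\right]
\le \|D_w\mathbbm{1}\|_{2,\pi}^2 \prod_{t=1}^T \rho(K_t)
\le (1-\pi(w))\prod_{t=1}^T\!\left(1-\frac{1}{\thit(P_t)}\right),
\end{align*}
where the second inequality uses $\rho(D_w P_t D_w)\le 1-1/\thit(P_t)$ from \cref{lem:hiteigen} and $\|D_w\mathbbm{1}\|_{2,\pi}^2=1-\pi(w)$. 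Both inequalities in the statement then follow since $1-\pi(w)\le 1$, $\thit(P_t)\le\thitmax(\Pseq)$, and $1-x\le e^{-x}$.

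The only step that requires real care is the first reformulation: shuffling copies of $D_w$ so as to expose the symmetric form $\langle D_w\mathbbm{1},\prod_t K_t\,D_w\mathbbm{1}\rangle_\pi$. The stationary initialization $X_0\sim\pi$ is precisely what places the column vector $D_w\mathbbm{1}$ on the left side of the inner product and yields the clean bound $\|D_w\mathbbm{1}\|_{2,\pi}^2\le 1$; after that, \cref{lem:hiteigen} delivers the uniform contraction $1-1/\thit(P_t)$ at each time step, and everything else is routine.
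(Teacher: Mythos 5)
Your proof is correct and follows essentially the same route as the paper: expand the avoidance probability into the product $\prod_t D_wP_tD_w$, apply Cauchy--Schwarz in the $\pi$-weighted inner product, and iterate the bound $\|D_wP_tD_wf\|_{2,\pi}\le\rho(D_wP_tD_w)\|f\|_{2,\pi}$ together with $\rho(D_wP_tD_w)\le 1-1/\thit(P_t)$. The only (harmless) difference is that you symmetrize with $D_w\mathbbm{1}$ on both sides, which yields the slightly sharper prefactor $1-\pi(w)$ that you then discard.
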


If the walker starts according to the stationary distribution, \cref{lem:HTL} immediately gives the bounds of the hitting and cover times:
Since the event $\land_{t=0}^T\{X_t\neq w\}$ means that the walk does not hit $w$ until time step $T$, the expected hitting time is upper bounded by $\sum_{T=0}^\infty\left(1-\frac{1}{\thitmax(\Pseq)}\right)^T=\thitmax(\Pseq)$.
From the union bound, the probability that the cover time is larger than $2\thitmax(\Pseq)\log n$ is upper bounded by $n\left(1-\frac{1}{\thitmax(\Pseq)}\right)^{2\thitmax(\Pseq)\log n}\leq 1/n$.
This argument can be
    easily extended to the case of
    $k$ independent random walks starting from positions according to the stationary distribution; the
    expected hitting time is bounded
    by $\sum_{T=0}^\infty\left(1-\frac{1}{\thitmax(\Pseq)}\right)^{kT}=O(\thitmax(\Pseq)/k)$.


\Cref{lem:HTL} and the bounds of the separation distance (\cref{thm:main_mixing}) enable us to obtain upper bounds of the hitting and cover times of $k$ walkers from the \emph{worst} initial positions.
Here is the proof sketch.
Let $(X_t(1),\dots,X_t(k))_{t\geq 0}$ be $k$ independent walks starting from the worst initial positions.
By definition of the separation distance \cref{def:tsep}, the probability that the walker $i\in [k]$ is on $u\in V$ at time $\tsep^{(\pi)}(\Pseq)$ is $P_{[1,\tsep^{(\pi)}(\Pseq)]}(X_0(i),u)\geq \frac{1}{2}\pi(u)$.
Therefore, in expectation, half of the walkers are distributed according to $\pi$ after $\tsep^{(\pi)}(\Pseq)$ steps.
From \cref{lem:HTL}, the probability that all of such walkers do not hit a specific vertex $w$ within $T$ steps is at most roughly
    $(1-1/\thitmax(\Pseq))^{Tk/2}$.
Taking $T=O(\thitmax(\Pseq)/k)$, this probability can be bounded by some constant probability.
This implies $\thit^{(k)}(\Pseq)=O(\tsep^{(\pi)}(\Pseq)+\thitmax(\Pseq)/k)$ (see \cref{lem:key_lemma} for detail).
The proof for the cover time proceeds in a similar way: From the union bound over $w\in V$, the probability that there is a vertex $w$ such that all the $\pi$-distributed walkers do not hit $w$ within $T$ steps is at most roughly
    $n(1-1/\thitmax(\Pseq))^{Tk/2}$.


The proof of \cref{lem:HTL} goes as follows.
For a fixed $w\in V$,
    consider the sequence of substochastic matrices
    $(D_wP_tD_w)_{t\geq 1}$.
By definition of $D_w$ and the Cauchy--Schwarz inequality, we have
$\Pr\left[\bigwedge_{t=0}^T\{X_t\neq w\}\right]=\sum_{x,y\in V}\pi(x)\left(\prod_{t=1}^T(D_wP_tD_w)\right)(x,y)\leq \left\|\left(\prod_{t=1}^T(D_wP_tD_w)\right)\mathbbm{1}\right\|_{2,\pi}$. 
Furthermore, from the reversibility of $P_t$, all $D_wP_tD_w$ are reversible. 
Thus we can apply a variant
of the Courant--Fischer theorem for the $\pi$-inner product (\cref{lem:MatrixEigen}) repeatedly and obtain $\left\|\left(\prod_{t=1}^T(D_wP_tD_w)\right)\mathbbm{1}\right\|_{2,\pi}\leq \prod_{t=1}^T\rho(D_wP_tD_w)$. 
Since $\rho(D_wPD_w)\leq 1-1/\thit(P)$ (\cref{lem:hiteigen}),
 we obtain \cref{lem:HTL}.

\paragraph*{Meeting and coalescing times (\cref{sec:coalescing_time}).}
To give an upper bound of the coalescing time, we recall the powerful \emph{Meeting Time Lemma} given by Oliveira~\cite{Oliveira12} (for continuous-time walks) and Oliveira and Peres~\cite{OP19} (for discrete-time lazy  walks).
These are originally results for
time-homogeneous random walks.
Our key observation is that Meeting Time Lemma indeed holds for
    time-inhomogeneous random walks with the same stationary distribution.
Suppose that $\Pseq$ is lazy (Note that \cref{lem:HTL} does not need the laziness assumption).
\begin{lemma}
\label{lem:MTL}
Suppose that $X_0$ is sampled from $\pi$.
Then, for any sequence of vertices $(w_t)_{t\geq 0}$,
\begin{align*}
    \Pr\left[\bigwedge_{t=0}^T \{X_t\neq w_t\}\right] \leq
    \prod_{t=1}^T \left(1-\frac{1}{\thit(P_t)}\right)
    \leq \exp\left(-\frac{T}{\thitmax(\Pseq)}\right).
\end{align*}
\end{lemma}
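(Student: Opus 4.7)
The plan is to imitate the proof of \cref{lem:HTL} while exploiting the laziness of each $P_t$ to cope with the fact that the killing vertex $w_t$ now varies with time. Set $g_0:=D_{w_0}\mathbbm{1}$ and $g_t:=D_{w_t}P_tg_{t-1}$ for $t\geq 1$, so that $g_t=D_{w_t}g_t$ for every $t$. Using reversibility in exactly the way the excerpt derives the evolution of the density $r_t$, one checks the identity
\begin{align*}
\Pr\!\left[\bigwedge_{t=0}^T\{X_t\neq w_t\}\right]=\langle\mathbbm{1},g_T\rangle_\pi.
\end{align*}
Cauchy--Schwarz with $\|\mathbbm{1}\|_{2,\pi}=1$, combined with the bound $\|g_0\|_{2,\pi}^2=1-\pi(w_0)\leq 1$, then reduces the lemma to the one-step contraction $\|g_t\|_{2,\pi}\leq(1-1/\thit(P_t))\,\|g_{t-1}\|_{2,\pi}$.

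To prove the contraction, I would expand $\|g_t\|_{2,\pi}^2$, use the self-adjointness of $P_t$ and $D_{w_t}$ on $\ell^2(\pi)$, and insert extra copies of $D_{w_{t-1}}$ via $g_{t-1}=D_{w_{t-1}}g_{t-1}$, obtaining
\begin{align*}
\|g_t\|_{2,\pi}^2=\langle g_{t-1},M_tg_{t-1}\rangle_\pi\leq \rho(M_t)\,\|g_{t-1}\|_{2,\pi}^2,\qquad M_t:=D_{w_{t-1}}P_tD_{w_t}P_tD_{w_{t-1}}.
\end{align*}
With $C_t:=D_{w_t}P_tD_{w_{t-1}}$ one has $M_t=C_t^*C_t$, so $M_t$ is self-adjoint and positive semidefinite and $\rho(M_t)=\|M_t\|_{\mathrm{op}}$. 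The entire proof therefore reduces to the spectral estimate $\rho(M_t)\leq (1-1/\thit(P_t))^2$.

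The spectral estimate is the heart of the argument, and is where laziness enters essentially. Since $P_t$ is lazy and $\pi$-reversible, it acts as a positive semidefinite operator on $\ell^2(\pi)$ and admits a self-adjoint positive semidefinite square root $P_t^{1/2}$. Setting $S:=P_t^{1/2}D_{w_{t-1}}$ and $N:=P_t^{1/2}D_{w_t}P_t^{1/2}$ gives the factorization $M_t=S^*NS$, so submultiplicativity of the operator norm yields $\rho(M_t)\leq \|S\|_{\mathrm{op}}^2\cdot\|N\|_{\mathrm{op}}$. The first factor equals $\rho(D_{w_{t-1}}P_tD_{w_{t-1}})$ and is $\leq 1-1/\thit(P_t)$ by \cref{lem:hiteigen}. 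For the second, since $AB$ and $BA$ share the same nonzero spectrum, $N$ has the same nonzero eigenvalues as $D_{w_t}P_t$, and $\rho(D_{w_t}P_t)=\rho(D_{w_t}P_tD_{w_t})$ (because any nonzero eigenvector of $D_{w_t}P_t$ must vanish at $w_t$, an observation already used in the proof of \cref{lem:HTL}), so $\rho(N)\leq 1-1/\thit(P_t)$ as well. Multiplying the two bounds gives $\rho(M_t)\leq(1-1/\thit(P_t))^2$, and iterating the contraction across $t=1,\dots,T$ completes the proof. The main obstacle is precisely this spectral estimate: without laziness the factorization $P_t=P_t^{1/2}P_t^{1/2}$ as a self-adjoint positive semidefinite operator on $\ell^2(\pi)$ may fail, and the direct spectral bound for the non-self-adjoint building block $D_{w_t}P_tD_{w_{t-1}}$ is then unavailable.
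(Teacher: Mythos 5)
Your proof is correct and follows essentially the same route as the paper: both reduce the non-hitting probability via Cauchy--Schwarz to an iterated $\ell^2(\pi)$ contraction by the operators $D_{w_t}P_tD_{w_{t-1}}$, and both establish the per-step factor $1-1/\thit(P_t)$ by writing the lazy reversible $P_t$ as $P_t^{1/2}P_t^{1/2}$ and invoking $\rho(D_wP_tD_w)\le 1-1/\thit(P_t)$ (\cref{lem:hiteigen}). Your spectral estimate $\rho(M_t)\le(1-1/\thit(P_t))^2$ for $M_t=C_t^*C_t$ is just the squared form of the paper's \cref{lem:normmeet}, obtained by the same square-root/adjointness argument.
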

Suppose that the initial positions of two independent random walks $(X_t(1))_{t\geq 0}$ and $(X_t(2))_{t\geq 0}$ are according to the stationary distribution. 
\Cref{lem:MTL} gives an upper bound of the probability that two walkers do not meet in the dynamic setting: $\Pr\left[\land_{t=0}^T\{X_t(1)\neq X_t(2)\}\right]\leq \exp\left(-\frac{T}{\thitmax(\Pseq)}\right)$, i.e., the expected meeting time is upper bounded by $\thitmax(\Pseq)$.
Note that $\Pr\left[\land_{t=0}^T\{X_t(1)\neq X_t(2)\}\right]\leq \Pr\left[\land_{t=0}^T\{X_t(1)\neq w_t)\}\right]$ holds for some sequence of vertices $(w_t)_{t\geq 0}$.

The proof of our $O(\thitmax(\Pseq))$ coalescing time bound essentially consists of three parts.
In the first part, we bound
the probability that
coalescing time is larger than $C\thitmax(\Pseq)$ for a suitable constant $C>0$ 
in terms of the probability of a suitable event of independent multiple random walks (\cref{lem:coal_mult}).
This part proceeds similarly to the argument of static setting in \cite{OP19} by coupling arguments.
In the second part, similar to the arguments of multiple random walks, we show that the positions of an appropriate group of walkers at $t=\tsep^{(\pi)}(\Pseq)$ are according to the stationary distribution with constant probability.
In the third part,
we apply \cref{lem:MTL} to such group of walkers (\cref{lem:coalcons}). Since \cref{lem:MTL} works in the dynamic setting, we can complete this part in a similar way to the static setting \cite{OP19}. 

In the proof of \cref{lem:MTL}, the laziness assumption of walkers is essential as well as that of~\cite{OP19}. 
Since $P$ is positive semidefinite by the laziness,
we rewrite
$D_xPD_y=D_x\sqrt{P}\sqrt{P}D_y$, where $\sqrt{P}$ is the square root of $P$. 
Furthermore, the reversibility of $P$ implies that $D_x\sqrt{P}$ is the adjoint of $\sqrt{P}D_x$ (that is, $\pi(u)(D_x\sqrt{P})(u,v)=\pi(v)(\sqrt{P}D_x)(v,u)$ for any $u,v\in V$).
Hence, we have $\|D_xPD_yf\|_{2,\pi}\leq \sqrt{\rho(D_xPD_x)\rho(D_yPD_y)}\|f\|_{2,\pi}\leq \left(1-\frac{1}{\thit(P)}\right)\|f\|_{2,\pi}$ for any vector $f$ (\cref{lem:normmeet,lem:hiteigen}). 
Combining this inequality and similar arguments in the proof of \cref{lem:HTL}, we obtain $\Pr\left[\bigwedge_{t=0}^T\{X_t\neq w_t\}\right]
\leq \left\|\left(\prod_{t=1}^T(D_{w_{t-1}}P_tD_{w_t})\right)\mathbbm{1}\right\|_{2,\pi}\leq \prod_{t=1}^T\left(1-\frac{1}{\thit(P_t)}\right)$.

\if0 
\color{red}
\paragraph*{Post-mixing phase.}
In the post-mixing phase, we recall the powerful \emph{Meeting Time Lemma} given by Oliveira~\cite{Oliveira12} (for continuous-time walks) and Oliveira and Peres~\cite{OP19} (for discrete-time walks),
This is originally a result for
the case for a static transition matrix (i.e., $P_t=P$ for all $t$).
Our key observation is that this result indeed holds in the dynamic graph setting.
\begin{lemma}[Meeting Time Lemma on dynamic graphs] \label{lem:MTL}
Consider a random walk $(X_t)_{t\geq 0}$ according to a sequence $(P_t)_{t\geq 1}$ of lazy reversible transition matrices such that
all $P_t$ have the same stationary distribution $\pi$.
Suppose that $X_0$ is sampled from $\pi$.
Then, for any sequence $w_1,\dots,w_T$ of vertices,
\begin{align*}
    \Pr\left[\bigwedge_{t=0}^T \{X_t\neq w_t\}\right] \leq
    \prod_{t=1}^T \left(1-\frac{1}{\thit(P_t)}\right)
    \leq \exp\left(-\frac{T}{\thitmax(\Pseq)}\right).
\end{align*}
\end{lemma}

Conceptually,
\cref{lem:MTL} reveal the relation between the behavior of a random walk on a dynamic graph and that on a static graph
under the assumption of the time-homogeneity of stationary distributions.
Therefore, we can exploit several techniques known in the literature of random walks on a static graph,
yielding several bounds concerning the hitting, cover, and coalescing time.

Recall $\thitmax=O(n^2)$ for the lazy Metropolis walk on any fixed connected graph.
Therefore, \cref{lem:MTL} immediately implies
$\Pr[\land_{t=0}^T\{X_t\neq w\}] \leq \exp(-\Omega(T/n^2))$ for any fixed $w\in V$.
Since the event $\land_{t=0}^T\{X_t\neq w\}$ means that the walk does not hit $w$,
we can obtain an $O(n^2)$ hitting time for the Metropolis walk.

Because of the exponential decay of \cref{lem:MTL},
    from the union bound, 
    we can see that
    the walk visits all vertices
    with high probability
    if we set, say, $T=10\thitmax(\Pseq)\log n$.
Therefore, the Metropolis walk has an $O(n^2\log n)$ cover time.

Consider $k$ independent walks $(X_t(1),\dots,X_t(k))_{t\geq 0}$ starting from the worst initial positions.
In the post-mixing phase, these $k$ walkers mix enough.
Specifically, for a fixed $T=\Omega(\tmix)$, with probability $\Omega(1)$, there is a set $S\subseteq [k]$ of indices with $|S|=\Omega(k)$ such that $(X_T(i))_{i\in S}$ are independent stationary distribution $\pi$;
Indeed, we can write $\prod_{t=1}^T P_t(u,v)=c_1\pi(v) + (1-c_1)q(u,v)$ for some constant $c_1>0$ and some transition matrix $q\in[0,1]^{V\times V}$
(this follows from \cref{thm:uniform_mixing}).
Therefore, the probability that a walk $(X_t(i))_{t\geq 0}$ for $i\in S$ hits a fixed vertex within $T+t$ steps is roughly $\exp(-\Omega(t/n^2))$ for any $t$ from the Meeting Time Lemma~(\cref{lem:MTL}).
By setting a suitable parameter $t$,
we can bound $\tcov^{(k)}$.
See \cref{sec:hitting_and_cover_times} for details.

Intuitively speaking, coalescing time $\tcoal$ is the expected time of $n$ independent walkers starting from $n$ distinct initial positions to merge into a unique one, where two walkers merge once they meet.
The proof of our $O(n^2)$ coalescing time bound is based on the proof of Theorem~6 by Oliveira and Peres \cite{OP19} (knows as ``killed particle" argument), which states that $\tcoal = O(\thit)$ holds for any lazy reversible and irreducible Markov chain.
Since their proof relies on the Meeting Time Lemma, we can obtain the same result for the dynamic graph setting based on \cref{lem:MTL}.
See \cref{sec:coalescing_time} for details.




\color{black}

\fi 

\subsection{Related work} \label{sec:previous_work}

\paragraph*{Random walk on a static graph.}
Consider a simple random walk on a connected graph $G$ of $n$ vertices and $m$ edges.
Aleliunas, Karp, Lipton, Lov{\'a}sz, and Rackoff~\cite{Aleliunas79} showed that 
the cover time is at most
$2m(n-1)$. 
Kahn, Linial, Nisan, and Saks~\cite{KLNS89} showed that the cover time is at most  $16mn/d_{\min}$,
while
the hitting time is at least $(1/2)m/d_{\min}$ (see Corollary 3.3 of Lov{\'a}sz~\cite{Lovasz93}).
%
Brightwell and Winkler~\cite{BW90} presented the lollipop graph on which the hitting time is approximately $(4/27)n^3$ as $n$ increases,
while
Feige~\cite{Feige95up} gave proved that the cover time is $(4/27)n^3+O(n^{5/2})$
for any graph.
In addition to the trivial relation of $\thit\leq \tcov$, it is known that $\tcov \leq \thit \log n$ holds for any $G$ (see Matthews~\cite{Matthews88}).

The cover time $\tcov^{(k)}$ of $k$ independent simple random walks has been investigated in \cite{BKRU94,AAKKLT11,ES11,ER09,RSS21}.
If $k$ walkers start from the stationary distribution, Broder, Karlin, Raghavan, and Upfal \cite{BKRU94} showed that the cover time is at most $O\bigl(\left(\frac{m}{k}\right)^2\log ^3n\bigr)$.
Very recently, Rivera, Sauerwald, and Sylvester~\cite{RSS21} proved an improved bound of $O\bigl(\bigl(\frac{m}{kd_{\min}}\bigr)^2\log^2n\bigr)$.
From the worst initial positions of $k$ walkers, 
Els{\"a}sser and Sauerwald \cite{ES11} showed that $\tcov^{(k)}=O\left(\tmix+\frac{\thit \log n}{k} \right)$ for $k\leq n$.

It is known that local degree information provides surprising power with random walks.
For example, 
the \emph{$\beta$-random walk} proposed by Ikeda, Kubo, Okumoto, and Yamashita~\cite{Ikeda03,Ikeda09} and the Metropolis walk proposed by Nonaka, Ono, Sadakane, and Yamashita~(\cite{NOSY10}, the definition is \cref{def:Metropolis_walk}) have 
the $O(n^2)$ hitting time and $O(n^2\log n)$ cover time for any $G$.
These bounds improve the worst-case $\Omega(n^3)$ hitting time of the simple random walk (on the lollipop graph). 
Recently,  
David and Feige~\cite{DF18} showed the $O(n^2)$ cover time for the \emph{minimum-degree random walk} proposed by Abdullah, Cooper, and Draief~\cite{ACD15}. 
This is best possible since any random walk on the path has $\Omega(n^2)$ cover time~\cite{Ikeda09}.
It is easy to see that
    the $\beta$-random walk and the minimum-degree random walk
    have exponential hitting times on the Sisyphus wheel.
The meeting and coalescing times have been well investigated in the context of distributed computation such as leader election and consensus protocols \cite{HP01}.
Consider the simple random walk on a connected and nonbipartite $G$.
Tetali and Winkler showed that the meeting time is at most $(16/27+o(1))n^3$. 
Hassin and Peleg~\cite{HP01} showed $\tcoal\leq \tmeet\log n$, while $\tmeet\leq \tcoal$ is trivial.
Recent works on the meeting and coalescing times consider the lazy simple random walk on a connected graph $G$ \cite{CEOR13,BGKM16,KMS19,OP19}.
For example, Kanade, Mallmann-Trenn, and Sauerwald~\cite{KMS19} showed $\tcoal=O\left(\tmeet\left(1+\sqrt{\frac{\tmix}{\tmeet}}\log n\right)\right)$.
Oliveira and Peres~\cite{OP19} proved $\tcoal=O(\thit)$.

\paragraph*{Random walk on dynamic graphs.}
Avin, Kouck{\'y} and Lotker~\cite{AKL18} presented
the Sisyphus wheel on which the hitting time is $2^{\Omega(n)}$ for the lazy simple random walk (\cref{fig:sisyphus}).
To avoid the issue of the exponential hitting time, Avin et al.~\cite{AKL18} considered the \emph{$d_{\max}$-lazy random walk} on $G$ 
and showed that the cover time of this random walk on any sequence of connected graphs is $O(n^5\log^2n)$. 
They also showed that the mixing time of the walk is $O(n^3\log n)$. 
Denysyuk and Rodrigues \cite{DR14} improved the bound of the cover time to $O(n^3\log n)$.
Sauerwald and Zanetti~\cite{SZ19}
    considered a lazy simple random
    walk on a sequence $(G_t)_{t\geq 1}$ of graphs such that all $G_t$ have a common degree distribution.
 They showed that the $O(n/\pi_{\min})$ mixing time and $O((n\log n)/\pi_{\min})$ hitting time. 
 Moreover, if all $G_t$ are $d$-regular, then the hitting time is $O(n^2)$.
 This bound matches that of static regular graphs~\cite{KLNS89}.

The study of time inhomogeneous Markov chains has applications in a wide range of fields, including consensus algorithm~\cite{OT11} and cryptography~\cite{MPS04}.
In an early work, 
Griffeath~\cite{Gri75} studied the ergodic theorem of time inhomogeneous Markov chains.
Saloff-Conste and Z\'{u}\~{n}iga~\cite{SZ09,SZ11}
considered the merging time for time inhomogeneous Markov chains
    in terms of the stability of stationary distributions.
In another paper~\cite{SZ07}, they obtained an upper bound of the mixing time for time inhomogeneous Markov chains under the assumption that the chains are irreducible and have a common stationary distribution.

Cai, Sauerwald, and Zaneti~\cite{CSZ20} considered the lazy simple random walk on a sequence of edge-Markovian random graphs.
They introduced the notion of mixing time on this sequence (note that the stationary distribution changes over time) and obtained several mixing time bounds. 
Lamprou, Martin, and Spirakis~\cite{LMS18} studied the cover time of the simple random walk on a variant of edge-Markovian random graphs.


\paragraph*{Pull voting.}
The pull voting according to $\LSimP(G)$ for a static graph $G$ has been intensively studied~\cite{HP01,CEOR13}
    in the literature of distributed computing and stochastic process.
It is widely known that
    the consensus time and the coalescing time
    are equal.
Therefore, bounds for coalescing time
    yields bounds for consensus time.
In particular, the consensus time on any (static) connected nonbipartite graphs
    is $O(n^3\log n)$ from Hassin and Peleg~\cite{HP01}.

Berenbrink, Giakkoupis, Kermarrec, and Mallmann-trenn~\cite{BGKM16} studied
the pull voting according to $(\LSimP(G_t))_{t\geq 1}$ for a sequence $(G_t)_{t\geq 1}$ of graphs constructed by an \emph{adaptive adversary}.
That is, for every $t$, the graph $G_t$
    can depend on the history of opinion configurations.
Under the assumption that all $G_t$ must have the same degree distribution,
they obtained an upper bound of $\taucons$ in
terms of the conductance of $G_t$
for the binary opinion setting (i.e., $|\Sigma|=2$).

%


\section{Notations and definitions}
\label{sec:mixing_time_definition}
This section defines the mixing, hitting, cover, meeting, and coalescing times formally.
For $b\geq a \geq 1$ and a sequence $(P_t)_{t\geq 1}$ of transition matrices, let $P_{[a,b]}\defeq P_aP_{a+1}\cdots P_b$.
For $\pi\in[0,1]^V$, let $\pi_{\min}\defeq \min_{v\in V}\pi(v)$.

For $p\geq 1$ and probability vectors $\mu\in [0,1]^V$ and $\pi\in (0,1]^V$, let
\begin{align*}
    d^{(p,\pi)}(\mu)&\defeq 
    \begin{cases}
    \left\|\frac{\mu}{\pi}-\mathbbm{1}\right\|_{p,\pi}=\left(\sum_{v\in V}\pi(v)\left|\frac{\mu(v)}{\pi(v)}-1\right|^p\right)^{1/p} &\text{if }1\leq p<\infty, \\
    \max_{v\in V}\left|\frac{\mu(v)}{\pi(v)}-1\right| & \text{if }p=\infty
    \end{cases}
\end{align*}
be the $\ell^p$-distance between $\mu/\pi$ and $\mathbbm{1}$.
It is known that $d^{(p,\pi)}(\mu)\leq d^{(p+1,\pi)}(\mu)$ holds for any $p\geq 1$ (see, e.g., Section 4.7 in \cite{LP17}).
For example, $\sum_{v\in V}|\mu(v)-\pi(v)|=d^{(1,\pi)}(\mu)\leq d^{(2,\pi)}(\mu)\leq d^{(\infty,\pi)}(\mu)$.
For $\Pseq=(P_t)_{t\geq 1}$, a probability vector $\pi\in (0,1]^V$, and $\epsilon>0$, we define the $\ell^p$-mixing time as
\begin{align*}
\tmix^{(p,\pi)}(\Pseq,\epsilon)\defeq \min\left\{t\geq 0: \max_{s\geq 1,v\in V}d^{(p,\pi)}\left(P_{[s+1,s+t]}(v,\cdotp)\right)\leq \epsilon\right\}.
\end{align*}
Write $\tmix^{(p,\pi)}(\Pseq)\defeq \tmix^{(p,\pi)}(\Pseq,1/2)$.
Consider $k$ independent random walks $(X_t(1))_{t\geq 0}, \ldots, (X_t(k))_{t\geq 0}$, where each walk is according to $\Pseq=(P_t)_{t\geq 1}$.
Let $\tauhit^{(k)}(\Pseq,w)$ (for $w\in V$) and $\taucov^{(k)}(\Pseq)$ be the
random variables denoting hitting and cover times of the $k$ random walks, respectively.
Formally,
\begin{align}
    \tauhit^{(k)}(\Pseq,w)&=\inf\left\{t\geq 0:\bigcup_{i\in[k],0\leq s\leq t} \{X_s(i)\} \ni w \right\}, \label{eq:hitting_time_def}\\
    \taucov^{(k)}(\Pseq) &= \inf\left\{t\geq 0:\bigcup_{i\in [k],0\leq s\leq t}\{X_s(i)\}=V\right\}. \label{eq:cover_time_def}
\end{align}
Let $\thit^{(k)}(\Pseq)\defeq \max_{x\in V^k,w\in V}\E\left[\tauhit^{(k)}(\Pseq,w)\middle| X_0=x\right]$ for the expected hitting time of $k$ random walks.
Here, $X_t=(X_t(1),\ldots,X_t(k))\in V^k$ is a vector-valued random variable. 
Similarly, the expected cover time of $k$ random walks is defined by $\tcov^{(k)}(\Pseq)\defeq \max_{x\in V^k}\E\left[\taucov^{(k)}(\Pseq)\middle| X_0=x\right]$.
In particular, let $\thit(\Pseq) = \thit^{(1)}(\Pseq)$ and $\tcov(\Pseq) = \tcov^{(1)}(\Pseq)$.
%
%
%

Let $(X_t(1))_{t\geq 0}$ and $(X_t(2))_{t\geq 0}$ be two independent random walks, where each walker is according to $\Pseq=(P_t)_{t\geq 1}$.
Write $X_t=(X_t(1),X_t(2))\in V^2$.
Then, let $\taumeet(\mathcal{P})\defeq \min\{t\geq 0:X_t(1)=X_t(2)\}$ and define the meeting time of $\Pseq$ as $\tmeet(\Pseq)\defeq \max_{x\in V^2}\E\left[\taumeet(\Pseq)\middle|X_0=x\right]$.

Let $(\Co_t(1))_{t\geq 0}, (\Co_t(2))_{t\geq 0}, \ldots, (\Co_t(n))_{t\geq 0}$ denote the coalescing random walks according to $\Pseq=(P_t)_{t\geq 1}$.
In the coalescing random walks, once two or more walkers meet at the same vertex, they merge into one walker.
Formally, from a given initial state $\Co_{0}=(\Co_{0}(1),\ldots,\Co_{0}(n))\in V^n$, we inductively determine $\Co_t(a)$ for each $t\geq 1$ and $a\in [n]$, as follows.
Suppose that $\Co_{t-1}=(\Co_{t-1}(1),\ldots,\Co_{t-1}(n))$ and $\Co_t(1), \ldots, \Co_t(a-1)$ are determined.
If there is some $b<a$ such that $\Co_{t-1}(a)=\Co_{t-1}(b)$, let $\Co_t(a)\defeq\Co_t(b)$. 
Otherwise, $\Co_t(a)$ is determined by the random walk according to $P_{t}$, i.e., $\Pr[\Co_{t}(a)=v|\Co_{t-1}(a)=u]=P_t(u,v)$ for $u,v\in V$.
%
For $x=(x_1,x_2,\ldots,x_n)\in  V^n$, let $S(x)\defeq \bigcup_{i=1}^n\{x_i\}$ (e.g., $S(x)=\{a,b\}$ for $x=(a,a,b)$).
Then, let $\taucoal(\Pseq)=\min\{t\geq 0: |S(\Co_t)|=1\}$
and define the coalescing time of $\Pseq$ as $\tcoal(\Pseq)\defeq \max_{x\in V^n}\E[\taucoal(\Pseq)|\Co_0=x]$.

\section{Mixing time}
\label{sec:mixing_time}
In this section, we show the following lemma, which immediately yields \cref{thm:main_mixing}.
\begin{lemma}
\label{lem:uniform_main}
Let $\mathcal{P}=(P_t)_{t\geq 1}$ be a sequence of irreducible, reversible, and lazy transition matrices.
Suppose that all $P_t$ have the same stationary distribution $\pi$.
Then, for any $u,v\in V$ and any $0<\epsilon<1$,  
$\left|\frac{P_{[1,2T]}(u,v)}{\pi(v)}-1\right|\leq \epsilon^2$ holds if
$T\geq  \frac{\mathrm{e}^2}{\mathrm{e}-1}\thitmax(\Pseq)+2\log(4\thitmax(\Pseq))+1+\trelmax(\Pseq)\log \epsilon^{-1}$.
\end{lemma}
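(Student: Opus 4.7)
The plan is to reduce $\bigl|P_{[1,2T]}(u,v)/\pi(v)-1\bigr|$ to a product of two $\ell^2(\pi)$-distances via a Cauchy--Schwarz manipulation based on the $\ell^2(\pi)$-adjoint of $P_{[T+1,2T]}$, and then bound each $\ell^2(\pi)$-distance by $\epsilon$ through a two-phase contraction: a hitting-time phase that drives $d^{(2,\pi)}$ from its worst-case initial value down to $O(1)$ in $\Theta(\thitmax)$ steps, followed by a relaxation-time phase that shrinks it further to $\epsilon$ in $\Theta(\trelmax\log\epsilon^{-1})$ steps. Since Cauchy--Schwarz returns a product of the form $\epsilon\cdot\epsilon$, this is precisely the right way to obtain the $\epsilon^2$ conclusion.

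For the reduction, set $Q\defeq P_{[T+1,2T]}$ and let $Q^*$ be its $\ell^2(\pi)$-adjoint. Since every $P_t$ is reversible and hence self-adjoint in $\ell^2(\pi)$, one gets $Q^*=P_{2T}P_{2T-1}\cdots P_{T+1}$, another product of lazy reversible transition matrices sharing the stationary distribution $\pi$ and with the same values of $\thitmax$ and $\trelmax$ as $\Pseq$. Using that $P_{[1,T]}(u,\cdot)$ and $Q^*(v,\cdot)$ are both probability vectors, a short calculation yields
\[
    \frac{P_{[1,2T]}(u,v)}{\pi(v)}-1 = \Bigl\langle \frac{P_{[1,T]}(u,\cdot)}{\pi}-\mathbbm{1},\ \frac{Q^*(v,\cdot)}{\pi}-\mathbbm{1}\Bigr\rangle_\pi,
\]
so that Cauchy--Schwarz gives $\bigl|P_{[1,2T]}(u,v)/\pi(v)-1\bigr|\leq d^{(2,\pi)}(P_{[1,T]}(u,\cdot))\cdot d^{(2,\pi)}(Q^*(v,\cdot))$. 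It thus suffices to show that, for either the forward or reversed sequence, $T$ steps from any point-mass initial distribution push the $\ell^2(\pi)$-distance to $\mathbbm{1}$ below $\epsilon$.

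Writing $a_t\defeq d^{(2,\pi)}(\mu_t)^2$, the classical estimate $\pi_{\min}^{-1}\leq 1+\thitmax$ (the expected return time from stationarity) ensures $a_0\leq \thitmax$. In Phase 1 I apply Mihail's identity $a_{t+1}\leq a_t-\mathcal{E}_{P_{t+1},\pi}(\mu_t/\pi)$, valid for lazy reversible $P_{t+1}$, together with the key Dirichlet-form lower bound $\mathcal{E}_{P,\pi}(\mu/\pi)\geq d^{(2,\pi)}(\mu)^4/\thit(P)$ sketched in Section~1.3 (shift to $g=f_{\max}\mathbbm{1}-f$, observe $\langle g,Pg\rangle_\pi=\langle g,D_wPD_wg\rangle_\pi$ for a maximiser $w$, and invoke $\rho(D_wPD_w)\leq 1-1/\thit(P)$). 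This gives the recurrence $a_{t+1}\leq a_t(1-a_t/\thitmax)$, equivalent to $1/a_{t+1}\geq 1/a_t+1/\thitmax$; iterating for $T_1\geq \frac{\mathrm{e}^2}{\mathrm{e}-1}\thitmax$ steps forces $a_{T_1}\leq (\mathrm{e}-1)/\mathrm{e}^2<1/\mathrm{e}$. In Phase 2, the spectral inequality $\|Pf\|_{2,\pi}\leq \lambda_\star(P)\|f\|_{2,\pi}$ applied to $f=\mu_t/\pi-\mathbbm{1}$, which is $\pi$-orthogonal to $\mathbbm{1}$, yields $a_{t+1}\leq (1-1/\trelmax)^2 a_t$, so an additional $T_2\geq \trelmax\log\epsilon^{-1}$ steps drive $a_{T_1+T_2}\leq \epsilon^2$, and $d^{(2,\pi)}(\mu_T)\leq \epsilon$ as required.

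The main obstacle is pinning down the precise additive term $2\log(4\thitmax)+1$ in the final bound. The two-phase argument naturally produces $\frac{\mathrm{e}^2}{\mathrm{e}-1}\thitmax+\trelmax\log\epsilon^{-1}$, and the extra slack has to absorb the rounding from fractional step counts in Phase 1, the bridging overhead where a term $(\trelmax/2)\log a_{T_1}$ must be controlled uniformly, and any calibration needed to upgrade the Cauchy--Schwarz product from ``at most $\epsilon\cdot\epsilon$'' to exactly $\epsilon^2$. This is a careful but non-structural computation, likely exploiting the standard inequality $\trelmax\leq\thitmax$ for lazy reversible chains to bound the bridging term by a $\log(\thitmax)$-scale quantity; all of the conceptual ingredients (the adjoint identity, Mihail's inequality, the hitting-time Dirichlet bound, and the spectral relaxation bound) are already available from the proof overview.
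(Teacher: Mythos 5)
Your proposal is correct and follows essentially the same route as the paper: the Cauchy--Schwarz reduction via the reversed product $P_{2T}P_{2T-1}\cdots P_{T+1}$ is exactly the paper's computation, and the two-phase $\ell^2$ contraction (Mihail's identity plus the Dirichlet-form/hitting-time bound, followed by the spectral-gap decay) is precisely how the paper bounds $d^{(2,\pi)}(\mu P_{[1,T]})$. The only divergence is that you solve the recurrence $a_{t+1}\le a_t(1-a_t/\thitmax(\Pseq))$ by the reciprocal trick $1/a_{t+1}\ge 1/a_t+1/\thitmax(\Pseq)$, reaching $a_{T_1}\le (\mathrm{e}-1)/\mathrm{e}^2<1$ with no logarithmic overhead, whereas the paper invokes a lemma from [SZ19] whose $\log x(0)$ term (with $x(0)\le \pi_{\min}^{-2}\le (4\thitmax(\Pseq))^2$) is exactly the source of the additive $2\log(4\thitmax(\Pseq))+1$ you were worried about --- so your argument meets the stated threshold a fortiori and the ``obstacle'' in your last paragraph is not actually needed.
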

To show \cref{lem:uniform_main}, we introduce some terminology.
Let $\pi\in (0,1]^V$ be a positive probability distribution and $f\in \mathbbm{R}^V$ be a vector.
Then, let $\E_\pi(f)\defeq \sum_{v\in V}\pi(v)f(v)=\langle f,\mathbbm{1} \rangle_{\pi}$
and $\Var_\pi(f)\defeq \sum_{v\in V}\pi(v)(f(v)-\E_\pi(f))^2=\langle f,f \rangle_{\pi}-\langle f,\mathbbm{1} \rangle_{\pi}^2$.
Note that, for any probability vector $\mu\in [0,1]^V$, we have
\begin{align*}
d^{(2,\pi)}(\mu)^2
=\sum_{v\in V}\pi(v)\left(\frac{\mu(v)}{\pi(v)}-1\right)^2
=\sum_{v\in V}\pi(v)\left(\frac{\mu(v)}{\pi(v)}\right)^2-1
=\Var_\pi\left(\frac{\mu}{\pi}\right).
\end{align*}
For a transition matrix $P$ such that $\pi(u)P(u,v)=\pi(v)P(v,u)$ holds for all $u,v\in V$,
let 
$\mathcal{E}_{P,\pi}(f)\defeq \langle f,f\rangle_\pi-\langle f,Pf\rangle_\pi= \frac{1}{2}\sum_{u,v\in V}\pi(u)P(u,v)(f(u)-f(v))^2$
be the \emph{Dirichlet form}.
\subsection{Key lemma}
The following key lemma connects the Dirichlet form, $\ell^2$-distance, and hitting time.
\begin{lemma}
\label{lem:dirichlethit}
Let $\mathcal{P}=(P_t)_{t\geq 1}$ be a sequence of irreducible, reversible, and lazy transition matrices.
Suppose that all $P_t$ have the same stationary distribution $\pi$.
Then, for any probability vector $\mu \in [0,1]^V$, 
\begin{align*}
\mathcal{E}_{P,\pi}\left(\frac{\mu}{\pi}\right)
\geq \frac{\Var_\pi\left(\frac{\mu}{\pi}\right)^2}{\thit(P)}.
\end{align*}
\end{lemma}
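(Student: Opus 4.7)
The plan is to exploit two elementary invariances of the Dirichlet form, together with the spectral-radius bound $\rho(D_w P D_w) \le 1 - 1/\thit(P)$, to reduce the statement to a clean variance estimate. Note that although the hypothesis is phrased in terms of the whole sequence $\mathcal{P}$, both sides of the claimed inequality involve only a single transition matrix $P$, so the proof really concerns an arbitrary irreducible, reversible, lazy $P$ with stationary distribution $\pi$.

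First, I would record that $\mathcal{E}_{P,\pi}(h) = \tfrac12 \sum_{u,v} \pi(u) P(u,v) (h(u)-h(v))^2$ depends only on differences of the values of $h$; in particular $\mathcal{E}_{P,\pi}(h) = \mathcal{E}_{P,\pi}(-h) = \mathcal{E}_{P,\pi}(h + c\mathbbm{1})$ for every $c \in \Real$. Setting $f = \mu/\pi$ and $f_{\max} = \max_{v} f(v)$, I then apply these invariances to replace $f$ by $g \defeq f_{\max}\mathbbm{1} - f$, so that $g \ge 0$ pointwise, $g(w) = 0$ for any $w$ achieving the maximum, and $\mathcal{E}_{P,\pi}(f) = \mathcal{E}_{P,\pi}(g)$.

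Second, because $g(w) = 0$, every term in $\langle g, P g\rangle_\pi = \sum_{u,v} \pi(u) g(u) P(u,v) g(v)$ involving the row or column indexed by $w$ vanishes, so $\langle g, P g\rangle_\pi = \langle g, D_w P D_w\, g\rangle_\pi$, and hence
\[
\mathcal{E}_{P,\pi}(f) \;=\; \mathcal{E}_{P,\pi}(g) \;=\; \|g\|_{2,\pi}^2 \;-\; \langle g, D_w P D_w\, g\rangle_\pi.
\]
Reversibility of $P$ with respect to $\pi$ is inherited by $D_w P D_w$, so the variational bound $\langle g, D_w P D_w\, g\rangle_\pi \le \rho(D_w P D_w)\, \|g\|_{2,\pi}^2$ applies (\cref{lem:MatrixEigen}); combined with $\rho(D_w P D_w) \le 1 - 1/\thit(P)$ (\cref{lem:hiteigen}), this gives $\mathcal{E}_{P,\pi}(f) \ge \|g\|_{2,\pi}^2 / \thit(P)$.

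The remaining and most delicate step is to show $\|g\|_{2,\pi}^2 \ge \Var_\pi(f)^2$. Since $\mu$ is a probability vector and $\pi$ is the stationary distribution, $\E_\pi[f] = \sum_v \mu(v) = 1$, so $\E_\pi[g] = f_{\max} - 1$, and Jensen's inequality gives $\|g\|_{2,\pi}^2 \ge (\E_\pi[g])^2 = (f_{\max}-1)^2$. To upgrade this first-moment estimate into the quartic quantity I need, I would use nonnegativity of $f$ to write $\E_\pi[f^2] \le f_{\max} \cdot \E_\pi[f] = f_{\max}$, so that $\Var_\pi(f) = \E_\pi[f^2] - 1 \le f_{\max} - 1$; squaring then gives $\Var_\pi(f)^2 \le (f_{\max}-1)^2 \le \|g\|_{2,\pi}^2$, which completes the proof. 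The main obstacle is precisely this bootstrap: a naive Cauchy--Schwarz estimate only recovers the variance itself on the right-hand side, and obtaining the \emph{square} of the variance relies on two nonobvious ingredients, namely the nonnegativity of $\mu/\pi$ (ensuring $\E_\pi[f^2] \le f_{\max}$) and the fact that $g$ is anchored at the maximum of $f$ rather than at its mean (so that the first-moment lower bound $(f_{\max}-1)^2$ dominates $\Var_\pi(f)^2$).
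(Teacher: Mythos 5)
Your proposal is correct and follows essentially the same route as the paper's proof: pass from $f=\mu/\pi$ to $g=f_{\max}\mathbbm{1}-f$ using the translation/negation invariance of the Dirichlet form, insert $D_wPD_w$ at a maximizer $w$ of $f$, apply the spectral-radius bound $\rho(D_wPD_w)\leq 1-1/\thit(P)$, and finish with $\Var_\pi(f)\leq f_{\max}-1$ (via $\E_\pi[f^2]\leq f_{\max}$). The only cosmetic difference is that the paper computes $\|g\|_{2,\pi}^2=(f_{\max}-1)^2+\Var_\pi(f)$ exactly (the cross term vanishes since $\E_\pi[f]=1$), whereas you lower-bound it by $(\E_\pi[g])^2$ via Jensen; both yield the same conclusion.
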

\begin{proof}

Write $f=\frac{\mu}{\pi}$ and
let $f_{\max}\defeq \max_{v\in V}f(v)$ and $g\defeq f_{\max}\mathbbm{1}-f$.
Since $f(u)-f(v)=g(v)-g(u)$ for any $u,v\in V$, we have
\begin{align*}
\mathcal{E}_{P,\pi}(f)=\frac{1}{2}\sum_{u,v\in V}\pi(u)P(u,v)(f(u)-f(v))^2=\mathcal{E}_{P,\pi}(g)=\langle g,g \rangle_{\pi}-\langle Pg,g\rangle_\pi.
\end{align*}
Let $v_{\max}$ denote a vertex satisfying $f(v_{\max})=f_{\max}$.
Recall that $D_w\in \{0,1\}^{V\times V}$ is a diagonal matrix where $D_w(v,v)\defeq \mathbbm{1}_{v\neq w}$ for all $v\in V$. 
From \cref{lem:MatrixEigen,lem:hiteigen}, we have
\begin{align*}
\langle Pg,g\rangle_\pi
&=\langle D_{v_{\max}}PD_{v_{\max}}g,g\rangle_\pi
\leq \rho(D_{v_{\max}}PD_{v_{\max}})\langle g,g\rangle_\pi
\leq \left(1-\frac{1}{\thit(P)}\right)\langle g,g\rangle_\pi.
\end{align*}
Note that we have $g(u)P(u,v)g(v)=g(u)(D_{v_{\max}}PD_{v_{\max}})(u,v)g(v)$ holds for any $u,v\in V$.
Furthermore,
\begin{align*}
\langle g,g\rangle_\pi
&=\sum_{v\in V}\pi(v)(f_{\max}-f(v))^2
=\sum_{v\in V}\pi(v)(f_{\max}-1+1-f(v))^2\\
&=(f_{\max}-1)^2+\sum_{v\in V}\pi(v)(1-f(v))^2
\geq  \Var_\pi\left(f\right)^2+\Var_\pi\left(f\right).
\end{align*}
The last inequality follows from 
$
0\leq \Var_\pi\left(f\right)
=\sum_{v\in V}\pi(v)\left(\frac{\mu(v)}{\pi(v)}\right)^2-1
\leq 
f_{\max}-1
$.
Therefore, we obtain
\begin{align*}
\mathcal{E}_{P,\pi}(f)
&=\langle g,g \rangle_{\pi}-\langle Pg,g\rangle_\pi
\geq \frac{\langle g,g \rangle_{\pi}}{\thit(P)}
\geq \frac{\Var_\pi\left(f\right)^2}{\thit(P)}.
\end{align*}
\end{proof}
\subsection{Upper bound of mixing time}
To show \cref{lem:uniform_main}, we introduce the following bound shown in \cite{SZ07}.
\begin{lemma}[\cite{SZ07}]
\label{lem:mix_treldecay}
Let $\mathcal{P}=(P_t)_{t\geq 1}$ be a sequence of irreducible, aperiodic, and reversible transition matrices.
Suppose that all $P_t$ have the same stationary distribution $\pi$.
Then, for any probability vector $\mu \in [0,1]^V$, 
\begin{align*}
d^{(2,\pi)}\left(\mu P_{[1,T]}\right)
&\leq d^{(2,\pi)}(\mu)\prod_{t=1}^T\lambda_\star(P_t)
\leq d^{(2,\pi)}(\mu)\exp\left(-\frac{T}{\trelmax(\Pseq)}\right).
\end{align*}
\end{lemma}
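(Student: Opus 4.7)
The plan is to convert the evolution of the measure $\mu$ into a repeated application of the self-adjoint operator $P_t$ on $\ell^2(\pi)$, which then shrinks at rate $\lambda_\star(P_t)$ on the orthogonal complement of $\mathbbm{1}$. First I would establish the density identity
\begin{align*}
\frac{(\mu P_t)(v)}{\pi(v)} = \sum_{u\in V}\frac{\mu(u)}{\pi(u)}\cdot\frac{\pi(u)P_t(u,v)}{\pi(v)} = \sum_{u\in V}P_t(v,u)\frac{\mu(u)}{\pi(u)} = \left(P_t\tfrac{\mu}{\pi}\right)(v),
\end{align*}
where the second equality uses reversibility $\pi(u)P_t(u,v)=\pi(v)P_t(v,u)$ with the \emph{common} stationary distribution $\pi$. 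Iterating along $t=1,\dots,T$ and using $P_t\mathbbm{1}=\mathbbm{1}$ to pull the constant through each factor yields
\begin{align*}
\frac{\mu P_{[1,T]}}{\pi}-\mathbbm{1} = P_T P_{T-1}\cdots P_1\!\left(\frac{\mu}{\pi}-\mathbbm{1}\right).
\end{align*}

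Next I would check that the residual $h_0\defeq\mu/\pi-\mathbbm{1}$ lies in the orthogonal complement of $\mathbbm{1}$ with respect to $\langle\cdot,\cdot\rangle_\pi$, which is immediate from $\langle h_0,\mathbbm{1}\rangle_\pi=\sum_v\mu(v)-\sum_v\pi(v)=0$, and that this subspace is preserved by each $P_t$: self-adjointness of $P_t$ on $\ell^2(\pi)$ (a direct consequence of reversibility with respect to $\pi$) gives $\langle P_th,\mathbbm{1}\rangle_\pi=\langle h,P_t\mathbbm{1}\rangle_\pi=\langle h,\mathbbm{1}\rangle_\pi=0$ whenever $\langle h,\mathbbm{1}\rangle_\pi=0$. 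I would then invoke the Courant--Fischer-type bound already used elsewhere in the paper (\cref{lem:MatrixEigen}), which states $\|P_th\|_{2,\pi}\leq\lambda_\star(P_t)\|h\|_{2,\pi}$ for $h$ in this subspace, and apply it inductively at $t=1,\dots,T$ to obtain
\begin{align*}
d^{(2,\pi)}\!\left(\mu P_{[1,T]}\right) = \|P_T\cdots P_1 h_0\|_{2,\pi} \leq \left(\prod_{t=1}^T\lambda_\star(P_t)\right)\|h_0\|_{2,\pi} = d^{(2,\pi)}(\mu)\prod_{t=1}^T\lambda_\star(P_t).
\end{align*}

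The exponential form then follows from $\lambda_\star(P_t)=1-1/\trel(P_t)\leq 1-1/\trelmax(\Pseq)\leq\exp(-1/\trelmax(\Pseq))$, so the product is at most $\exp(-T/\trelmax(\Pseq))$. I do not expect any substantive obstacle: the only mildly delicate point is the order reversal when passing from the measure-side action $\mu\mapsto\mu P_t$ to the density-side action $f\mapsto P_tf$, which is exactly why reversibility with a common $\pi$ (rather than commutativity of the $P_t$'s) is the crucial hypothesis, together with the observation that orthogonality to $\mathbbm{1}$ is preserved at each step so the $\ell^2$-shrinkage can be chained across a time-inhomogeneous product.
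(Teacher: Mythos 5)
Your proposal is correct and follows essentially the same route as the paper: the paper's proof applies \cref{lem:normmix} (whose proof is exactly your density identity $\frac{\mu P}{\pi}=P\frac{\mu}{\pi}$ plus orthogonality to $\mathbbm{1}$ and \cref{lem:MatrixEigen}) repeatedly, and then bounds $\prod_{t=1}^T\lambda_\star(P_t)=\prod_{t=1}^T\left(1-\frac{1}{\trel(P_t)}\right)\leq\exp\left(-\frac{T}{\trelmax(\Pseq)}\right)$ exactly as you do. You have merely inlined the one-step contraction lemma rather than citing it, so there is no substantive difference.
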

\begin{proof}
Applying \cref{lem:normmix} repeatedly to $d^{(2,\pi)}\left(\mu P_{[1,T]}\right)=\left\| \frac{\mu P_{[1,T]}}{\pi} -\mathbbm{1} \right\|_{2,\pi}$, we have 
\begin{align*}
\left\| \frac{\mu P_{[1,T]}}{\pi} -\mathbbm{1} \right\|_{2,\pi}
\leq \left\| \frac{\mu P_{[1,T-1]}}{\pi} -\mathbbm{1}\right\|_{2,\pi}\lambda_\star(P_T) 
\leq \cdots \leq \left\| \frac{\mu}{\pi} -\mathbbm{1}\right\|_{2,\pi} \prod_{t=1}^T\lambda_\star(P_t).
\end{align*}
Then, since
\begin{align*}
\prod_{t=1}^T \lambda_{\star}(P_{t})
= \prod_{i=1}^T \left(1-\frac{1}{\trel(P_t)}\right)
\leq \exp\left(-\sum_{t=1}^T\frac{1}{\trel(P_t)}\right)
\leq \exp\left(-\frac{T}{\trelmax(\Pseq)}\right)
\end{align*}
holds, we obtain the claim.
\end{proof}
Combining \cref{lem:mix_treldecay,lem:dirichlethit}, we obtain the following bounds of $\ell^2$-distance. 
\begin{lemma}
\label{lem:main_mix}
Let $\mathcal{P}=(P_t)_{t\geq 1}$ be a sequence of irreducible, reversible, and lazy transition matrices.
Suppose that all $P_t$ have the same stationary distribution $\pi$.
Then, for any probability distribution $\mu\in [0,1]^V$ and any $0<\epsilon<1$,  
$d^{(2,\pi)}(\mu P_{[1,T]})\leq \epsilon$ holds if
$T\geq  \frac{\mathrm{e}^2}{\mathrm{e}-1}\thitmax(\Pseq)+2\log(4\thitmax(\Pseq))+1+\trelmax(\Pseq)\log \epsilon^{-1}$.
\end{lemma}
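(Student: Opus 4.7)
The plan is to analyze the squared $\ell^2$-distance $V_t\defeq d^{(2,\pi)}(\mu P_{[1,t]})^2 = \Var_\pi(\mu P_{[1,t]}/\pi)$ and show that it decays in two regimes: a slow ``harmonic'' regime driven by $\thitmax(\Pseq)$ while $V_t$ is large, and a fast exponential regime driven by $\trelmax(\Pseq)$ once $V_t$ is small. The $T$ in the statement will correspond to a harmonic phase of length $\frac{\mathrm{e}^2}{\mathrm{e}-1}\thitmax(\Pseq)$ (plus a small slack) followed by an exponential phase of length $\trelmax(\Pseq)\log\epsilon^{-1}$.

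The heart of the argument is a one-step recurrence
\[
V_{t+1}\ \leq\ V_t\!\left(1 - \frac{V_t}{\thitmax(\Pseq)}\right).
\]
To prove it, I write $\mu_t\defeq\mu P_{[1,t]}$ and apply the lazy reversible form of Mihail's identity (\cref{lem:Mihail}) to get $V_t - V_{t+1} \geq \mathcal{E}_{P_{t+1},\pi}(\mu_t/\pi)$; the Dirichlet-form lower bound of \cref{lem:dirichlethit} then gives $\mathcal{E}_{P_{t+1},\pi}(\mu_t/\pi) \geq V_t^2/\thit(P_{t+1})$, and $\thit(P_{t+1}) \leq \thitmax(\Pseq)$ finishes the step. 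A free byproduct, obtained by combining this recurrence at $t=0$ with the trivial $V_1\geq 0$, is that $V_0\leq \thitmax(\Pseq)$ for any starting distribution $\mu$; this is exactly what lets the final bound avoid any $\pi_{\min}$ dependence.

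With the recurrence in hand, I iterate. Taking reciprocals via $1/(1-x)\geq 1+x$ and telescoping from $V_0\leq\thitmax(\Pseq)$ yields the harmonic bound $V_t \leq \thitmax(\Pseq)/(t+1)$, so after $t_1 = \lceil\frac{\mathrm{e}^2}{\mathrm{e}-1}\thitmax(\Pseq)\rceil$ steps one has $V_{t_1}\leq (\mathrm{e}-1)/\mathrm{e}^2 < 1$. From $t_1$ onward I switch to the spectral estimate $V_{t+1}\leq V_t\lambda_\star(P_{t+1})^2 \leq V_t\mathrm{e}^{-2/\trelmax(\Pseq)}$ from \cref{lem:mix_treldecay}; running it for an additional $\trelmax(\Pseq)\log\epsilon^{-1}$ steps shrinks $V$ by a factor $\epsilon^2$, yielding $V_T\leq \epsilon^2(\mathrm{e}-1)/\mathrm{e}^2 \leq \epsilon^2$, i.e.\ $d^{(2,\pi)}(\mu P_{[1,T]})\leq\epsilon$. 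The residual slack $2\log(4\thitmax(\Pseq))+1$ in the statement comfortably covers the ceiling in $t_1$, the constant factor $(\mathrm{e}-1)/\mathrm{e}^2$, and the gap between $(1-1/\trelmax)^2$ and $\mathrm{e}^{-2/\trelmax}$ incurred when passing to the exponential form.

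The hard part is the one-step recurrence itself. The laziness of each $P_t$ is essential: it forces every eigenvalue $\lambda$ of $P_t$ into $[0,1]$, which is what gives $\mathcal{E}_{P_t^2,\pi}(g)\geq\mathcal{E}_{P_t,\pi}(g)$ (because $1-\lambda^2 \geq 1-\lambda$ when $\lambda\geq 0$). Without this step one cannot turn the exact identity $V_t - V_{t+1} = \mathcal{E}_{P_{t+1}^2,\pi}(\mu_t/\pi - \mathbbm{1})$ that underlies Mihail's identity into a lower bound in terms of $\mathcal{E}_{P_{t+1},\pi}$, and hence cannot invoke \cref{lem:dirichlethit}. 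Everything after the recurrence is established amounts to routine bookkeeping of the two-phase iteration.
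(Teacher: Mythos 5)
Your proof is correct and follows essentially the same route as the paper: the identical one-step recurrence $V_{t+1}\leq V_t\bigl(1-V_t/\thitmax(\Pseq)\bigr)$ obtained from \cref{lem:Mihail} together with \cref{lem:dirichlethit}, followed by the spectral contraction of \cref{lem:mix_treldecay} once $V_t\leq 1$. The only (welcome) variation is in the harmonic phase: you observe $V_0\leq\thitmax(\Pseq)$ directly from the recurrence and telescope reciprocals to get $V_t\leq\thitmax(\Pseq)/(t+1)$, whereas the paper starts from $V_0\leq\pi_{\min}^{-2}$ and invokes \cref{lem:mix_teq_SZ}, which is what forces the extra $2\log(4\thitmax(\Pseq))$ term that your argument does not actually need.
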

\begin{proof}
Since $P$ is reversible with respect to $\pi$, 
\begin{align}
\left(\frac{\mu P}{\pi}\right)(v)
&=\frac{\sum_{u\in V}\mu(u)P(u,v)}{\pi(v)}
=\sum_{u\in V}P(v,u)\frac{\mu(u)}{\pi(u)}
=\left(P\left(\frac{\mu}{\pi}\right)\right)(v)
\label{eq:reversibled2norm}
\end{align}
holds for any $v\in V$, i.e., $\frac{\mu P}{\pi}=P\left(\frac{\mu}{\pi}\right)$ holds.
Combining \cref{eq:reversibled2norm,lem:Mihail,lem:dirichlethit}, we have
\begin{align}
\Var_\pi\left(\frac{\mu P}{\pi}\right)
=\Var_\pi\left(P\left(\frac{\mu}{\pi}\right)\right)
\leq \Var_\pi\left(\frac{\mu}{\pi}\right)-\mathcal{E}_{P,\pi}\left(\frac{\mu}{\pi}\right)
\leq \Var_\pi\left(\frac{\mu}{\pi}\right)\left(1-\frac{\Var_\pi\left(\frac{\mu}{\pi}\right)}{\thit(P)}\right).
\label{eq:decreasevar}
\end{align}
Write $x(i)=\Var_\pi(\mu P_{[1,i]}/\pi)=d^{(2,\pi)}(\mu P_{[1,i]})^2$.
From \cref{eq:decreasevar}, $x$ is non-increasing and $x(t+1)\leq x(t)\left(1-\frac{x(t)}{\thitmax(\Pseq)}\right)$ holds.
Hence, applying \cref{lem:mix_teq_SZ} to $x$, $x(L)\leq 1$ holds if $L\geq \frac{\mathrm{e}^2}{\mathrm{e}-1}\thitmax(\Pseq)+2\log \pi_{\min}^{-1}+1=\left(\frac{\mathrm{e}^2}{\mathrm{e}-1}+o_n(1)\right)\thitmax(\Pseq)$.
Note that $\thit(P)\geq \pi_{\min}^{-1}(1-\pi_{\min})^2\geq \pi_{\min}^{-1}/4$ holds (see e.g., \cite{AF02}).
From \cref{lem:mix_treldecay}, 
\begin{align*}
d^{(2,\pi)}(\mu P_{[1,T]})
&\leq d^{(2,\pi)}(\mu P_{[1,L]})\exp\left(-\frac{T-L}{\trelmax(\Pseq)}\right)
\leq 
\exp\left(-\frac{\trelmax(\Pseq)\log\epsilon^{-1}}{\trelmax(\Pseq)}\right)\leq  \epsilon.
\end{align*}
\end{proof}

\begin{proof}[Proof of \cref{lem:uniform_main}]
Write $P_{[2T,T+1]}=P_{2T}P_{2T-1}\cdots P_{T+1}$ for convenience.
From the reversibility, it is easy to see that $\pi(u)P_{[T+1,2T]}(u,v)=\pi(v)P_{[2T,T+1]}(v,u)$ holds. Hence, we have
\begin{align*}
    \frac{P_{[1,2T]}(u,v)}{\pi(v)}
    &=\frac{\sum_{w\in V}P_{[1,T]}(u,w)P_{[T+1,2T]}(w,v)}{\pi(v)}
    =\sum_{w\in V}\pi(w)\frac{P_{[1,T]}(u,w)}{\pi(w)}\frac{P_{[2T,T+1]}(v,w)}{\pi(w)}\\
    &=\sum_{w\in V}\pi(w)\left(\frac{P_{[1,T]}(u,w)}{\pi(w)}-1\right)\left(\frac{P_{[2T,T+1]}(v,w)}{\pi(w)}-1\right)+1.
\end{align*}
Combining the above and the Cauchy--Schwarz inequality,
\begin{align}
    \left|\frac{P_{[1,2T]}(u,v)}{\pi(v)}-1\right|
    &\leq \sqrt{\sum_{w\in V}\pi(w)\left(\frac{P_{[1,T]}(u,w)}{\pi(w)}-1\right)^2}\sqrt{\sum_{w'\in V}\pi(w')\left(\frac{P_{[2T,T+1]}(v,w')}{\pi(w')}-1\right)^2}\nonumber \\
&=d^{(2,\pi)}\left(P_{[1,T]}(u,\cdotp)\right)d^{(2,\pi)}\left(P_{[2T,T+1]}(v,\cdotp)\right) \label{eq:l2_to_unif}
\end{align}
holds.
Hence, from \cref{lem:main_mix}, we obtain the claim. 
\end{proof}

For completeness, we prove \cref{prop:mixing_SZ07}.
\begin{proof}[Proof of \cref{prop:mixing_SZ07}]
We have 
$d^{(2,\pi)}(\mu)=\sqrt{\sum_{v\in V}\pi(v)\left(\frac{\mu(v)}{\pi(v)}-1\right)^2}\leq \frac{1}{\pi_{\min}}
$ for any probability distribution $\mu$.
Let $T\geq \trelmax(\Pseq)\log(\pi_{\min}^{-1}\epsilon^{-1})$.
Applying \cref{lem:mix_treldecay}, we have $d^{(2,\pi)}(\mu P_{[1,T]})\leq \epsilon$. 
Hence, from \cref{eq:l2_to_unif}, it holds for any $u,v\in V$ that $\left|\frac{P_{[1,2T]}(u,v)}{\pi(v)}-1\right|\leq \epsilon^2$.
\end{proof}

\section{Hitting and cover times} \label{sec:hitting_and_cover_times}
In this section, 
    we consider $k$ independent random walks
    $(X_t(1))_{t\geq 0},\dots,(X_t(k))_{t\geq 0}$
    according to $\Pseq=(P_t)_{t\geq 0}$.
Let $(X_t)_{t\geq 0}$ be a random variable defined as $X_t=(X_t(1),\ldots,X_t(k))\in V^k$.
Let $\tauhit^{(k)}(\Pseq,w)$ (for $w\in V$) and $\taucov^{(k)}(\Pseq)$ be the
random variables denoting hitting and cover times of the $k$ random walks, defined by \cref{eq:hitting_time_def,eq:cover_time_def}.
We bound the expected hitting and cover times: $\thit^{(k)}(\Pseq)$ and $\thit^{(k)}(\Pseq)$ (see \cref{sec:mixing_time_definition} for the definitions).
We sometimes abbreviate $\Pseq$ and write $\tauhit^{(k)}(w)$ and $\taucov^{(k)}$
    if $\Pseq$ is clear from the context.
This section is devoted to the proof of the following results.
\begin{theorem}[Hitting time bound of \cref{thm:main_hitting_cover}] \label{thm:hitting_time}
Let $\mathcal{P}=(P_t)_{t\geq 1}$ be a sequence of irreducible and reversible transition matrices.
Suppose that all $P_t$ have the same stationary distribution $\pi$.
Then, for any $k$,
\begin{align*}
\thit^{(k)}(\Pseq)\leq 20\tsep^{(\pi)}(\Pseq)+\frac{400\thitmax(\Pseq)}{k}.
\end{align*}
\end{theorem}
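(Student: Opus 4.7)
The plan is to establish a single-interval tail bound on the non-hitting probability over a window of length $O(\tsep^{(\pi)}(\Pseq) + \thitmax(\Pseq)/k)$, and then iterate this bound via the Markov property. First, fix a target $w \in V$ and a worst-case initial configuration $x \in V^k$, and set $T_1 \defeq \tsep^{(\pi)}(\Pseq)$ with $T_2 > 0$ to be chosen. By \cref{def:tsep}, for every walker $i$ and every $u \in V$, $P_{[1, T_1]}(X_0(i), u) \geq \pi(u)/2$, so the time-$T_1$ distribution decomposes as $P_{[1, T_1]}(X_0(i), \cdot) = \tfrac{1}{2}\pi + \tfrac{1}{2}\nu_i$ for some probability distribution $\nu_i$. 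I would exploit this via a coupling: augment the probability space with i.i.d.\ Bernoulli$(1/2)$ bits $B_1, \dots, B_k$ such that, conditional on $B_i = 1$, $X_{T_1}(i) \sim \pi$, and conditional on $B_i = 0$, $X_{T_1}(i) \sim \nu_i$ (the marginal law of the walks is preserved). Setting $S \defeq \{i : B_i = 1\}$, we have $|S| \sim \mathrm{Bin}(k, 1/2)$, and conditional on $S$, the walkers indexed by $S$ are at i.i.d.\ $\pi$-samples at time $T_1$ and subsequently evolve independently according to the shifted sequence $(P_{T_1+s})_{s \geq 1}$.

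Next, I would derive the quantitative bound. Observing that $\{\tauhit^{(k)}(\Pseq, w) > T_1 + T_2\}$ implies that every walker in $S$ avoids $w$ throughout $[T_1, T_1 + T_2]$, and that the shifted sequence still satisfies the hypotheses of \cref{lem:HTL} with maximum hitting time at most $\thitmax(\Pseq)$, I would apply \cref{lem:HTL} to each walker in $S$ and use conditional independence to obtain
\begin{align*}
    \Pr\!\left[\text{every walker in } S \text{ avoids } w \text{ on } [T_1, T_1 + T_2] \,\middle|\, S\right] \leq \exp\!\left(-\frac{|S|\, T_2}{\thitmax(\Pseq)}\right).
\end{align*}
Setting $q \defeq \exp(-T_2/\thitmax(\Pseq))$ and taking expectation with the Binomial MGF $\E[q^{|S|}] = ((1+q)/2)^k$ then yields
\begin{align*}
    \Pr_x\!\bigl[\tauhit^{(k)}(\Pseq, w) > T_1 + T_2\bigr] \leq \left(\frac{1 + \exp(-T_2/\thitmax(\Pseq))}{2}\right)^k.
\end{align*}

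Third, I would choose $T_2 = c_1\,\thitmax(\Pseq)/k + c_2\,\thitmax(\Pseq)$ with appropriate absolute constants $c_1, c_2$ so that the above right-hand side is at most some $p < 1$ uniformly in $k \geq 1$: for large $k$, $(1 + e^{-c_1/k})/2 \leq 1 - \Omega(c_1/k)$, so the $k$-th power is $\leq e^{-\Omega(c_1)}$; for small $k$, the additive $c_2\,\thitmax$ term makes $e^{-T_2/\thitmax}$ tiny. Since the hypotheses on $\Pseq$ are preserved by time shifts (both $\tsep^{(\pi)}$ and $\thitmax$ are suprema over starting times), the same bound holds for every shifted sequence, so iterating via the Markov property gives $\thit^{(k)}(\Pseq) = O(T_1 + T_2) = O(\tsep^{(\pi)}(\Pseq) + \thitmax(\Pseq)/k)$; the explicit constants $20$ and $400$ emerge from careful numerical tuning of $c_1, c_2, p$.

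The main obstacle is making the single-interval tail bound uniform in $k$. For $k = 1$, the coupling refreshes the walker to $\pi$ only with probability $1/2$, so $(1 + e^{-T_2/\thitmax})/2 \geq 1/2$ regardless of $T_2$; this is precisely why $T_2$ must contain the additive $c_2\,\thitmax$ term rather than being the naive $\Theta(\thitmax/k)$. The resulting single-interval length degrades to $\Theta(\thitmax)$ for $k = O(1)$, which is consistent with the theorem: the $\thitmax/k$ scaling dominates only once $k$ is sufficiently large.
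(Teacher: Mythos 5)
Your overall strategy is the same as the paper's: decompose $P_{[1,\tsep^{(\pi)}(\Pseq)]}(u,\cdot)=\tfrac12\pi+\tfrac12\nu_u$ via the separation time, couple with Bernoulli bits to extract a random subset $S$ of walkers that are $\pi$-distributed at time $T_1$, apply \cref{lem:HTL} to those walkers on the time-shifted sequence, and iterate the resulting one-window tail bound via the restart argument (\cref{cor:key_cor}). Your treatment of the randomness of $|S|$ through the binomial moment generating function is a clean, slightly sharper alternative to the paper's route, which instead applies the Chernoff bound to get $\Pr[|S|<k/4]\leq e^{-k/16}$ and conditions on $|S|\geq k/4$ (\cref{lem:hitting_probability_bound}).

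However, your choice $T_2=c_1\thitmax(\Pseq)/k+c_2\thitmax(\Pseq)$ does not deliver the stated bound. With an absolute constant $c_2>0$ present for all $k$, the window length is $T_1+T_2=\tsep^{(\pi)}(\Pseq)+\Theta(\thitmax(\Pseq))$ even when $k$ is large, so the iteration yields only $O(\tsep^{(\pi)}(\Pseq)+\thitmax(\Pseq))$, not the claimed $O(\tsep^{(\pi)}(\Pseq)+\thitmax(\Pseq)/k)$; your final sentence asserting the latter therefore does not follow from your choice of $T_2$. Moreover, the additive term is unnecessary: the ``obstacle'' you describe at $k=1$ is not one, because a one-window failure probability of $\tfrac{1+e^{-c_1}}{2}\approx\tfrac12$ is a constant strictly below $1$, which is all the restart argument requires. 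Indeed, with $T_2=c_1\thitmax(\Pseq)/k$ alone and $q=e^{-c_1/k}$,
\begin{align*}
\left(\frac{1+q}{2}\right)^k\leq\exp\left(-\frac{k\left(1-e^{-c_1/k}\right)}{2}\right)\leq\exp\left(-\frac{(1-e^{-1})\min\{c_1,k\}}{2}\right)\leq e^{-(1-e^{-1})/2}<0.73
\end{align*}
uniformly over $k\geq1$ once $c_1\geq1$, using $1-e^{-x}\geq(1-e^{-1})\min\{x,1\}$. Dropping $c_2$ entirely (or following the paper's variant, which with $T_2=20\thitmax(\Pseq)/k$ gives one-window failure probability at most $e^{-5}+e^{-1/16}<0.95$ and hence the constants $20$ and $400$ after iterating) repairs the argument.
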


\begin{theorem}[Cover time bound of \cref{thm:main_hitting_cover}] \label{thm:cover_time}
Let $\mathcal{P}=(P_t)_{t\geq 1}$ be a sequence of irreducible and reversible transition matrices.
Suppose that all $P_t$ have the same stationary distribution $\pi$.
Then, for any $k$,
\begin{align*}
\tcov^{(k)}(\Pseq)\leq 20\tsep^{(\pi)}(\Pseq)+\frac{400\thitmax(\Pseq)\log n}{k}.
\end{align*}
\end{theorem}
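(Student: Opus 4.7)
The plan is to mimic the hitting time argument, replacing the lower bound $\Pr[\tauhit^{(k)}(w)>T]$ for a single target $w$ by a union bound over all $w\in V$. Concretely, set $t_0 \defeq \tsep^{(\pi)}(\Pseq)$. During the first $t_0$ steps, I simply wait, paying an additive cost of $t_0$; after this waiting period, I want to argue that a positive fraction of the $k$ walkers is essentially distributed according to $\pi$, and then invoke \cref{lem:HTL} for each of them in parallel.

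First, I would couple the walkers at time $t_0$ with stationary walkers. By the definition of the separation time \cref{def:tsep}, for every starting vertex $u$ and every $v$,
\begin{equation*}
P_{[1,t_0]}(u,v) \;\geq\; \tfrac{1}{2}\pi(v),
\end{equation*}
so the distribution of $X_{t_0}(i)$ conditioned on $X_0(i)=u$ can be decomposed as a mixture $\frac{1}{2}\pi + \frac{1}{2} q_u$ for some probability distribution $q_u$. Introducing independent Bernoulli$(1/2)$ labels $B_1,\dots,B_k$, this lets me couple the walkers so that, independently for each $i\in[k]$, if $B_i=1$ then $X_{t_0}(i)\sim\pi$. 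Let $S\defeq\{i\in[k]:B_i=1\}$. Then $|S|$ is $\mathrm{Binomial}(k,1/2)$, so a standard Chernoff bound gives $\Pr[|S|\geq k/4]\geq 1-\mathrm{e}^{-\Omega(k)}$, and on this event, conditional on $S$, the positions $(X_{t_0}(i))_{i\in S}$ are i.i.d.\ $\pi$-samples.

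Next, fix any vertex $w\in V$ and consider only the walkers in $S$, which evolve independently from time $t_0$ onward under the shifted sequence $(P_{t_0+t})_{t\geq 1}$. Since this shifted sequence satisfies the same hypotheses, \cref{lem:HTL} applied to each $i\in S$ gives
\begin{equation*}
\Pr\!\left[\forall t\in[0,T]: X_{t_0+t}(i)\neq w \,\middle|\, |S|\geq k/4\right]
\;\leq\; \exp\!\left(-\tfrac{T}{\thitmax(\Pseq)}\right),
\end{equation*}
and independence over $i\in S$ yields $\exp(-|S|T/\thitmax(\Pseq))\leq \exp(-kT/(4\thitmax(\Pseq)))$ for the event that \emph{no} walker in $S$ visits $w$ within $T$ steps. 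Taking a union bound over the $n$ choices of $w$, the probability that some vertex is unvisited by the walkers in $S$ within $T$ steps after $t_0$ is at most $n\exp(-kT/(4\thitmax(\Pseq)))$. Choosing $T=c\thitmax(\Pseq)\log(n)/k$ for a sufficiently large constant $c$ makes this at most, say, $1/4$, and together with the Chernoff failure this gives $\Pr[\taucov^{(k)}>t_0+T]\leq 1/2$. Iterating the argument in blocks of length $t_0+T$ (using the Markov property of the coupled process, restarted from whatever configuration the walkers are in) yields $\E[\taucov^{(k)}]\leq 2(t_0+T)$, which after tracking constants produces the stated bound $20\tsep^{(\pi)}(\Pseq)+400\thitmax(\Pseq)\log n/k$.

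The main obstacle I expect is bookkeeping rather than conceptual: one has to be careful that the coupling of the separation-time mixture and the Bernoulli labels is done \emph{jointly} with the subsequent random walk trajectories so that, on $\{B_i=1\}$, the post-$t_0$ evolution of walker $i$ is a genuine random walk according to the shifted sequence started from $\pi$. Also, the iteration/restart step needs a crude worst-case bound to ensure the geometric tail sums correctly; using \cref{lem:HTL} on the restart (which only requires starting from $\pi$, re-established by another $\tsep^{(\pi)}$ wait) keeps the constants independent of the current configuration and yields the claimed linear bound on the expectation.
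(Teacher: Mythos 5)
Your proposal is correct and follows essentially the same route as the paper: decompose $P_{[1,\tsep^{(\pi)}(\Pseq)]}$ as $\frac{1}{2}\pi+\frac{1}{2}Q$ via the separation time, use a Chernoff bound to get a set $S$ of $\Omega(k)$ walkers that are $\pi$-distributed, apply \cref{lem:HTL} to each walker in $S$ with a union bound over the $n$ target vertices, and restart via \cref{cor:key_cor}. The only caveat is quantitative: for small $k$ (e.g., $k=1$) the Chernoff failure probability $\Pr[|S|<k/4]$ is itself close to $1/2$, so your claimed per-block success probability of $1/2$ and restart factor of $2$ do not hold uniformly in $k$; the paper absorbs this by accepting a per-block failure probability of $0.95$ and a restart factor of $20$, which is exactly where the constants $20$ and $400$ in the statement come from.
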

The constant factors $20$ and $400$
in \cref{thm:hitting_time,thm:cover_time} may be improved by tuning parameters
but we do not focus on it.

\subsection{Key lemma} \label{sec:key_lemma}
We prove the key result~\cref{lem:HTL}.
\begin{lemma}[Reminder of \cref{lem:HTL}]
\label{lem:HTL_formal}
Let $\mathcal{P}=(P_t)_{t\geq 1}$ be a sequence of irreducible and reversible transition matrices.
Suppose that all $P_t$ have the same stationary distribution $\pi$.
Let $(X_t)_{t\geq 0}$ be a random walk according to $\Pseq=(P_t)_{t\geq 1}$.
Suppose that $X_0$ is sampled from $\pi$.
Then, for any $w\in V$,
\begin{align*}
    \Pr\left[\bigwedge_{t=0}^T\{X_t\neq w\}\right] 
    \leq 
    \prod_{t=1}^T\left(1-\frac{1}{\thit(P_t)}\right)
    \leq
    \exp\left(-\frac{T}{\thitmax(\Pseq)}\right).
\end{align*}
\end{lemma}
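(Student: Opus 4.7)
The plan is to rewrite the non-hitting event in terms of the sub-stochastic matrices $D_wP_tD_w$ and then to estimate the resulting iterated product in the $\pi$-weighted Euclidean norm. First I would verify, by unrolling the definition of the random walk and using $D_w^2=D_w$, that for any $x,y\in V$ the matrix entry $\bigl(\prod_{t=1}^T D_wP_tD_w\bigr)(x,y)$ is exactly the total probability of length-$T$ trajectories from $x$ to $y$ that avoid $w$ at every time $0,1,\dots,T$. Together with $X_0\sim\pi$, this yields the identity
$$\Pr\!\left[\bigwedge_{t=0}^T\{X_t\neq w\}\right]=\sum_{x,y\in V}\pi(x)\!\left(\prod_{t=1}^T D_wP_tD_w\right)\!(x,y)=\left\langle \mathbbm{1},\left(\prod_{t=1}^T D_wP_tD_w\right)\mathbbm{1}\right\rangle_\pi.$$

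Next I would apply Cauchy--Schwarz in the $\pi$-inner product and use $\|\mathbbm{1}\|_{2,\pi}=1$ to bound the right-hand side by $\bigl\|\bigl(\prod_{t=1}^T D_wP_tD_w\bigr)\mathbbm{1}\bigr\|_{2,\pi}$. Because each $P_t$ is reversible with respect to the same $\pi$, each factor $D_wP_tD_w$ inherits the detailed-balance identity $\pi(u)(D_wP_tD_w)(u,v)=\pi(v)(D_wP_tD_w)(v,u)$ and is therefore self-adjoint on $L^2(\pi)$. Consequently, the operator-norm inequality $\|Mf\|_{2,\pi}\leq \rho(M)\|f\|_{2,\pi}$ of Lemma~\ref{lem:MatrixEigen} applies to every factor. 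Peeling off one factor at a time produces
$$\left\|\left(\prod_{t=1}^T D_wP_tD_w\right)\mathbbm{1}\right\|_{2,\pi}\leq \prod_{t=1}^T \rho(D_wP_tD_w).$$

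Finally I would invoke Lemma~\ref{lem:hiteigen}, which gives $\rho(D_wPD_w)\leq 1-1/\thit(P)$ for any irreducible reversible $P$ and any $w\in V$. This yields the first inequality in the statement, and the elementary bound $1-x\leq\mathrm{e}^{-x}$ together with the definition $\thitmax(\Pseq)=\max_t\thit(P_t)$ produces the exponential form. I expect the most delicate point to be justifying that the spectral-radius estimate can be iterated across \emph{different} $P_t$: this is legitimate only because all $D_wP_tD_w$ are self-adjoint with respect to the \emph{same} $\pi$-inner product, which is exactly where the common-stationary-distribution hypothesis is used. The stationary start $X_0\sim\pi$ is equally essential, since it is precisely what allows the outer sum to collapse into $\langle\mathbbm{1},\cdot\rangle_\pi$ so that Cauchy--Schwarz and the spectral bound can be brought to bear; any deterministic starting distribution would require additional work (handled in later lemmas via the separation-time assumption) to reduce to the stationary case.
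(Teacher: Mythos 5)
Your proposal is correct and follows essentially the same route as the paper's proof: the identity expressing the avoidance probability via $\prod_{t=1}^T D_wP_tD_w$, the Cauchy--Schwarz step exploiting $X_0\sim\pi$, the iterated application of the $\pi$-self-adjoint operator-norm bound from \cref{lem:MatrixEigen}, and the spectral-radius estimate $\rho(D_wP_tD_w)\leq 1-1/\thit(P_t)$ from \cref{lem:hiteigen}. Your closing remarks correctly identify where the common-stationary-distribution and stationary-start hypotheses enter.
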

\begin{proof}
Recall $D_w\in \{0,1\}^{V\times V}$ is a diagonal matrix defined by $D_w(x,x)\defeq \mathbbm{1}_{x\neq w}$.
From the definition of $D_w$, it is easy to see that
\begin{align*}
    \Pr\left[\bigwedge_{t=0}^T\{X_t\neq w\}, X_T=y\middle|X_0=x\right]=(D_wP_1D_wP_2D_w\cdots D_wP_TD_w)(x,y)
\end{align*}
holds for any $x,y\in V$.
Hence, from the assumption on $X_0$ and the Cauchy--Schwarz inequality, we have
\begin{align}
    \Pr\left[\bigwedge_{t=0}^T\{X_t\neq w\}\right]
    &=\sum_{x\in V}\pi(x) \sum_{y\in V}\left(\prod_{t=1}^T(D_wP_tD_w)\right)(x,y)
    \leq \left\|\left(\prod_{t=1}^T(D_wP_tD_w)\right)\mathbbm{1}\right\|_{2,\pi}.
    \label{eq:DPD1}
\end{align}
From the reversibility of $P_t$, we have $\pi(u)(D_wP_tD_w)(u,v)=\pi(v)(D_wP_tD_w)(v,u)$ for any $u,v\in V$ and $t\geq 1$.
Hence, we can apply \cref{lem:MatrixEigen} repeatedly to \cref{eq:DPD1} and obtain
\begin{align*}
\left\|\left(\prod_{t=1}^T(D_wP_tD_w)\right)\mathbbm{1}\right\|_{2,\pi}
\leq \rho(D_wP_tD_w)\left\|\left(\prod_{t=2}^T(D_wP_tD_w)\right)\mathbbm{1}\right\|_{2,\pi}
\leq \cdots \leq \prod_{t=1}^T\rho(D_wP_tD_w).
\end{align*}
Then, using \cref{lem:hiteigen}, $\rho(D_wP_tD_w)\leq 1-\frac{1}{\thit(P_t)}$ holds for all $t$.
Thus, we obtain the claim.
Note that all $P_t$ are irreducible by assumption.
\end{proof}

\subsection{Upper bound of hitting time}

\begin{lemma} \label{lem:hitting_probability_bound}
Let $\mathcal{P}=(P_t)_{t\geq 1}$ be a sequence of irreducible and reversible transition matrices.
Suppose that all $P_t$ have the same stationary distribution $\pi$.
Then, for any $k$, $x\in V^k$ and $t\geq 0$,
it holds that
\begin{align*}
    \Pr\left[\tauhit^{(k)}(w)\geq \tsep^{(\pi)}(\Pseq)+\frac{20\thitmax(\Pseq)}{k}+t \middle| X_t=x\right] < 0.9.
\end{align*}
\end{lemma}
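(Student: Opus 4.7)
The plan is to combine the hitting-time key lemma \cref{lem:HTL_formal} with a separation-based coupling, exactly as sketched in \cref{sec:proof_outline}. Fix $w\in V$ and $x\in V^k$, and set $S \defeq \tsep^{(\pi)}(\Pseq)$ and $M \defeq 20\thitmax(\Pseq)/k$. Since the event $\{\tauhit^{(k)}(w)\geq t+S+M\}$ is contained in the event that none of the $k$ walkers visits $w$ during the window $[t+S,\,t+S+M]$, it suffices to bound the probability of this latter event by $0.9$.

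The coupling is the first step. By the definition of $\tsep^{(\pi)}(\Pseq,1/2)$ applied with shift parameter $s = t$, the inequality $P_{[t+1,\,t+S]}(u,v)\geq \tfrac{1}{2}\pi(v)$ holds for every $u,v\in V$, which lets us write $P_{[t+1,\,t+S]}(u,\cdot) = \tfrac{1}{2}\pi + \tfrac{1}{2}Q_u$ for some probability distribution $Q_u$ on $V$. For each walker $i$ I would introduce an independent fair coin $Z_i\in\{0,1\}$ and sample $X_{t+S}(i)$ from $\pi$ when $Z_i = 1$ and from $Q_{X_t(i)}$ when $Z_i=0$, then continue the trajectory from $X_{t+S}(i)$ by the Markov property. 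The coins are independent across walkers, and conditional on $Z_i=1$ the state $X_{t+S}(i)$ is exactly $\pi$-distributed, independently of the rest of the configuration. Now let $A_i$ denote the event that walker $i$ avoids $w$ throughout $[t+S,\,t+S+M]$. Conditional on $Z_i=1$, the sub-walk $(X_{t+S+j}(i))_{j\geq 0}$ starts from $\pi$ and follows the shifted sequence $(P_{t+S+j})_{j\geq 1}$, which again consists of irreducible reversible matrices with common stationary distribution $\pi$ and maximum hitting time bounded by $\thitmax(\Pseq)$. Applying \cref{lem:HTL_formal} to this shifted sequence therefore yields $\Pr[A_i \mid Z_i=1]\leq e^{-M/\thitmax(\Pseq)} = e^{-20/k}$, hence $\Pr[A_i]\leq \tfrac12(1+e^{-20/k})$. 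Mutual independence of the walkers and coins then gives
\[
\Pr\!\left[\tauhit^{(k)}(w)\geq t+S+M \,\middle|\, X_t = x\right]
\;\leq\; \prod_{i=1}^{k}\Pr[A_i]
\;\leq\; \left(\tfrac{1}{2}\bigl(1+e^{-20/k}\bigr)\right)^{k}.
\]

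What remains is the purely numerical verification that $\bigl(\tfrac{1}{2}(1+e^{-20/k})\bigr)^k < 0.9$ uniformly in $k\geq 1$. Using $1-e^{-x}\geq x/(1+x)$ bounds this quantity by $\exp(-10k/(k+20))$, which is already $e^{-10/21}\approx 0.62$ at $k=1$ and tends to $e^{-10}$ as $k\to\infty$, leaving ample slack. The main, and essentially only, delicate point is the coupling: the auxiliary coins $Z_i$ must be constructed jointly with $X_{t+S}(i)$ from the split $P_{[t+1,t+S]}(u,\cdot)=\tfrac12\pi+\tfrac12 Q_u$ and independently across walkers, so that conditioning on $Z_i=1$ does not bias the subsequent trajectory of walker $i$. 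Given that standard splitting construction, all the probabilistic weight is carried by \cref{lem:HTL_formal}, and I do not anticipate further obstacles.
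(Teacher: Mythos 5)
Your proof is correct and follows the same core strategy as the paper: use the separation time to split $P_{[t+1,t+S]}(u,\cdot)=\tfrac12\pi+\tfrac12 Q_u$, couple each walker with an independent fair coin so that a head makes its position at time $t+S$ exactly $\pi$-distributed, and then invoke \cref{lem:HTL_formal} on the shifted sequence. The one place you diverge is the aggregation step: the paper conditions on the random set $S$ of head-walkers, applies a Chernoff bound to get $|S|\geq k/4$ with probability $1-e^{-k/16}$, and adds the two failure probabilities, arriving at $\exp(-5)+\exp(-1/16)<0.95$; you instead compute the unconditional per-walker avoidance probability $\Pr[A_i]\leq\tfrac12(1+e^{-20/k})$ and multiply over the $k$ independent walkers, giving $\exp(-10k/(k+20))\leq e^{-10/21}\approx 0.62$. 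Your version is cleaner (no Chernoff needed) and yields a strictly better constant that comfortably clears the stated threshold $0.9$ — which the paper's own displayed computation technically only bounds by $0.95$ — so this is a small but genuine improvement rather than a gap.
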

\begin{proof}
Let $T\defeq\tsep^{(\pi)}(\Pseq)=\tsep^{(\pi)}(\Pseq,1/2)$,
    where $\tsep$ is the separation time defined by \cref{def:tsep}.
Then, the matrix $P_{[1,T]}=\prod_{t=1}^{T}P_t$ can be written as
\begin{align*}
    P_{[1,T]}(u,v) = \frac{1}{2}\pi(u)+\frac{1}{2}Q(u,v)
\end{align*}
for some transition matrix $Q\in[0,1]^{V\times V}$.
Thus, the position of a walker after a transition according to $P_{[1,T]}$
has the same distribution as
the following transition:
 The walker flips a fair coin.
If it is head, the walker moves according to the stationary distribution $\pi$.
Otherwise, the next position of the walker at vertex $u\in V$ is determined by the distribution $Q(u,\cdot)$.

Suppose $k$ independent walkers
    flip their own coins
    and then move to the position
    $X_{T}\in V^k$ according to the transition probability $P_{[1,T]}$.
Let $S\subseteq[k]$ be the random subset of indices of walkers
with a head coin.
Then, the distribution of $X_{T}(i)$ conditioned on $i\in S$ is $\pi$.
Let $w\in V$ be a target vertex.
From \cref{lem:HTL_formal} and the independency of the walkers, for any $T'\geq 0$ and $U\subseteq [k]$, we obtain
\begin{align}
    \Pr\left[\bigwedge_{i\in U}\bigwedge_{T\leq t \leq T+T'}\{X_t(i)\neq w\}\middle| S=U\right] \leq \prod_{i\in U}\prod_{T\leq t \leq T+T'} \left(1-\frac{1}{\thit(P_t)}\right) \leq \exp\left(-\frac{|U|T'}{\thitmax(\Pseq)}\right). \label{eq:hitting_probability}
\end{align}

From the Chernoff inequality (\cref{lem:Chernoff}),
we have $\Pr[|S|<k/4]\leq\exp(-k/16)$ (note that $\E[|S|]=k/2$).
For any events $\mathcal{A}$ and $\mathcal{B}$,
$\Pr[\mathcal{A}]=\Pr[\mathcal{A}\land \mathcal{B}]+\Pr[\mathcal{A}\land \overline{\mathcal{B}}] \leq \Pr[\mathcal{A}|\mathcal{B}]+\Pr[\overline{\mathcal{B}}]$ holds.
Therefore, setting $T'=20\thitmax/k$, we obtain
\begin{align*}
    \Pr[\tauhit^{(k)}(\Pseq)\geq T+T'] &\leq
 \Pr\left[\bigwedge_{i\in [k]}\bigwedge_{T\leq t \leq T+T'}\{X_t(i)\neq y\}\right] \\
 &\leq 
    \Pr\left[\bigwedge_{i\in S}\bigwedge_{T\leq t \leq T+T'}\{X_t(i)\neq y\}\middle| |S|\geq k/4\right] +\Pr[|S|<k/4] \\
    &\leq \exp\left(-\frac{kT'}{4\thitmax}\right)+\exp\left(-\frac{k}{16}\right)\\
    &=\exp(-5)+\exp\left(-\frac{1}{16}\right) < 0.95
\end{align*}
for any $k\geq 1$.
Since this inequality holds
    regardless of the initial position $X_0\in V^k$,
we obtain the claim.
\end{proof}

\begin{proof}[Proof of \cref{thm:hitting_time}]
\Cref{thm:hitting_time} follows from \cref{cor:key_cor,lem:hitting_probability_bound}.
\end{proof}

\subsection{Upper bound of cover time}

\begin{lemma} \label{lem:cover_probability_bound}
Let $\mathcal{P}=(P_t)_{t\geq 1}$ be a sequence of irreducible and reversible transition matrices.
Suppose that all $P_t$ have the same stationary distribution $\pi$.
Then, for any $k$, $x\in V^k$, $t\geq 0$ and every sufficiently large $n$,
it holds that
\begin{align*}
    \Pr\left[\taucov^{(k)}(w)\geq \tsep^{(\pi)}(\Pseq)+\frac{20\thitmax(\Pseq)\log n}{k}+t \middle| X_t=x\right] < 0.95.
\end{align*}
\end{lemma}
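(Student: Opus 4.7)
The plan is to adapt the proof of \cref{lem:hitting_probability_bound} by replacing the single-target analysis with a union bound over all $w\in V$, at the cost of an extra $\log n$ factor absorbed into the time window. Let $T\defeq\tsep^{(\pi)}(\Pseq)$. By the definition of the separation time, for every starting time $t$ and every $u,v\in V$ we can decompose $P_{[t+1,t+T]}(u,v)=\tfrac{1}{2}\pi(v)+\tfrac{1}{2}Q(u,v)$ for some transition matrix $Q$. Thus each of the $k$ walkers, after the $T$ steps following time $t$, can be coupled so that with probability $1/2$ its position is an independent sample from $\pi$, and with probability $1/2$ it follows $Q$; let $S\subseteq[k]$ be the random set of indices of walkers that get resampled. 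A Chernoff bound yields $\Pr[|S|<k/4]\le\exp(-k/16)$, and conditional on $S$ the positions $(X_{t+T}(i))_{i\in S}$ are i.i.d.\ from $\pi$.

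Next, I would fix a target vertex $w\in V$ and condition on $\{|S|\ge k/4\}$. Applying \cref{lem:HTL_formal} to each walker in $S$ started from time $t+T$, and using independence across $i\in S$ together with the fact that all $P_\tau$ share the common stationary distribution $\pi$, one obtains
\[
\Pr\!\left[\bigwedge_{i\in S}\bigwedge_{0\le s\le T'}\{X_{t+T+s}(i)\ne w\}\ \middle|\ |S|\ge k/4\right]\ \le\ \exp\!\left(-\frac{|S|\,T'}{\thitmax(\Pseq)}\right)\ \le\ \exp\!\left(-\frac{kT'}{4\,\thitmax(\Pseq)}\right).
\]
Taking a union bound over $w\in V$ and choosing $T'=20\,\thitmax(\Pseq)\log n/k$, the probability that some vertex fails to be visited by any walker in $S$ during the window $[t+T,t+T+T']$ is at most $n\exp(-5\log n)=n^{-4}$.

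Combining these two contributions gives
\[
\Pr\!\left[\taucov^{(k)}(\Pseq)\ge t+T+T'\ \middle|\ X_t=x\right]\ \le\ n^{-4}+\exp(-k/16)\ <\ 0.95
\]
for every $k\ge 1$, provided $n$ is large enough that $n^{-4}<0.95-\exp(-1/16)\approx 0.01$; this yields the stated bound after substituting $T+T'=\tsep^{(\pi)}(\Pseq)+20\,\thitmax(\Pseq)\log n/k$.

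I do not foresee a substantive obstacle, since the entire argument is a routine extension of \cref{lem:hitting_probability_bound}. The only care points are that \cref{lem:HTL_formal} is stated for a single walker started from $\pi$, so it must be invoked once per walker $i\in S$ and the bounds multiplied by independence; and that the decomposition $P_{[t+1,t+T]}(u,v)=\tfrac{1}{2}\pi(v)+\tfrac{1}{2}Q(u,v)$ holds uniformly in the starting state $x$, which lets us discharge the conditioning on $X_t=x$ throughout.
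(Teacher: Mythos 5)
Your proposal is correct and follows essentially the same route as the paper: the separation-time decomposition $P_{[t+1,t+T]}=\tfrac12\pi+\tfrac12 Q$, a Chernoff bound giving $\Pr[|S|<k/4]\leq \exp(-k/16)$, \cref{lem:HTL_formal} applied to each resampled walker with independence, and a union bound over $w\in V$ with $T'=20\thitmax(\Pseq)\log n/k$. The only cosmetic difference is that the paper reuses the already-established inequality \cref{eq:hitting_probability} from the hitting-time lemma rather than re-deriving the per-walker product, and it splits the failure probability as $n\exp(-5\log n)+\exp(-k/16)<0.94+O(n^{-4})<0.95$, which matches your accounting.
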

\begin{proof}
From \cref{eq:hitting_probability}
    and the union bound over the target vertex $w\in V$,
    we have
\begin{align*}
       \Pr\left[\bigvee_{w\in V}\bigwedge_{i\in U}\bigwedge_{T\leq t \leq T+T'}\{X_t(i)\neq w\}\middle| S=U\right] 
       &\leq n \prod_{i\in U}\prod_{T\leq t \leq T+T'} \left(1-\frac{1}{\thit(P_t)}\right) \\
       &\leq n\exp\left(-\frac{|U|T'}{\thitmax(\Pseq)}\right).
\end{align*}
Setting $T'=20\thitmax\log n/k$, we obtain
\begin{align*}
    \Pr[\taucov^{(k)}(\Pseq)> T+T'] &\leq
 \Pr\left[\bigvee_{w\in V}\bigwedge_{i\in [k]}\bigwedge_{T\leq t \leq T+T'}\{X_t(i)\neq y\}\right] \\
 &\leq 
    \Pr\left[\bigvee_{w\in V}\bigwedge_{i\in S}\bigwedge_{T\leq t \leq T+T'}\{X_t(i)\neq y\}\middle| |S|\geq k/4\right] +\Pr[|S|<k/4] \\
    &\leq n\exp\left(-\frac{kT'}{4\thitmax}\right)+\exp\left(-\frac{k}{16}\right)\\
    &=n\exp(-5\log n)+\exp\left(-\frac{k}{16}\right) \\
    &< 0.94+O(n^{-4})<0.95
\end{align*}
for any $k\geq 1$ and every sufficiently large $n$.
Since this inequality holds
    regardless of the initial position $X_0\in V^k$,
we obtain the claim.
\end{proof}

\begin{proof}[Proof of \cref{thm:cover_time}]
\Cref{thm:cover_time} follows from \cref{cor:key_cor,lem:cover_probability_bound}.
\end{proof}

\section{Meeting and coalescing times}
\label{sec:coalescing_time}
We show \cref{prop:exponential_lower_meet,thm:main_coalescing} in this section.
\subsection{Key lemma}
\begin{lemma}[Meeting Time Lemma on dynamic graphs; Reminder of \cref{lem:MTL}] \label{lem:MTL_formal}
Let $\mathcal{P}=(P_t)_{t\geq 1}$ be a sequence of irreducible, reversible, and lazy transition matrices.
Suppose that all $P_t$ have the same stationary distribution $\pi$.
Let $(X_t)_{t\geq 0}$ be a random walk according to $\Pseq=(P_t)_{t\geq 1}$.
Suppose that $X_0$ is sampled from $\pi$.
Then, for any sequence $w_0,w_1,\dots,w_T$ of vertices,
\begin{align*}
    \Pr\left[\bigwedge_{t=0}^T\{X_t\neq w_t\}\right] \leq 
    \prod_{t=1}^T\left(1-\frac{1}{\thit(P_t)}\right)
    \leq \exp\left(-\frac{T}{\thitmax(\Pseq)}\right).
\end{align*}
\end{lemma}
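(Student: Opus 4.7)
The plan is to adapt the proof of \cref{lem:HTL_formal} to a time-varying sequence of forbidden vertices $(w_t)_{t\geq 0}$. Following the same matrix-based approach, observe that for any $x,y\in V$,
\[
    \Pr\left[\bigwedge_{t=0}^T\{X_t\neq w_t\},\,X_T=y\,\middle|\,X_0=x\right]
    =\bigl(D_{w_0}P_1D_{w_1}P_2D_{w_2}\cdots P_TD_{w_T}\bigr)(x,y).
\]
Since $X_0\sim\pi$, summing over $x,y$ and applying the Cauchy--Schwarz inequality in the $\pi$-inner product (exactly as in the proof of \cref{lem:HTL_formal}) bounds the probability by $\bigl\|M\mathbbm{1}\bigr\|_{2,\pi}$, where $M\defeq D_{w_0}P_1D_{w_1}\cdots P_TD_{w_T}$. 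The task reduces to showing that $M$ has operator norm (with respect to $\|\cdot\|_{2,\pi}$) at most $\prod_{t=1}^T(1-1/\thit(P_t))$.

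The main obstacle, and the reason \cref{lem:HTL_formal} does not immediately apply, is that each factor $D_{w_{t-1}}P_tD_{w_t}$ is \emph{asymmetric}: the left and right diagonal matrices kill different vertices, so $D_{w_{t-1}}P_tD_{w_t}$ is not self-adjoint with respect to $\langle\cdot,\cdot\rangle_\pi$ and the Courant--Fischer bound $\|Af\|_{2,\pi}\leq\rho(A)\|f\|_{2,\pi}$ used in \cref{lem:MatrixEigen} cannot be invoked directly. This is precisely where laziness enters: each $P_t$ is positive semidefinite, so we can take a symmetric square root $\sqrt{P_t}$ (e.g., through the spectral decomposition in the $\pi$-weighted Hilbert space). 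Reversibility of $P_t$ makes $\sqrt{P_t}$ self-adjoint in $\langle\cdot,\cdot\rangle_\pi$, so $D_{w_{t-1}}\sqrt{P_t}$ and $\sqrt{P_t}D_{w_{t-1}}$ are adjoints of each other, and similarly for $w_t$.

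Writing $D_{w_{t-1}}P_tD_{w_t}=\bigl(D_{w_{t-1}}\sqrt{P_t}\bigr)\bigl(\sqrt{P_t}D_{w_t}\bigr)$ and using $\|AB\|\leq\|A\|\cdot\|B\|$ for operator norms, together with the fact that $\|D_w\sqrt{P_t}\|_{2,\pi}^2 = \|\sqrt{P_t}D_w\|_{2,\pi}^2 = \rho(D_w P_t D_w)$ (this is the content of \cref{lem:normmeet}), one obtains
\[
    \bigl\|D_{w_{t-1}}P_tD_{w_t}f\bigr\|_{2,\pi}
    \leq \sqrt{\rho(D_{w_{t-1}}P_tD_{w_{t-1}})\,\rho(D_{w_t}P_tD_{w_t})}\,\|f\|_{2,\pi}
    \leq \left(1-\frac{1}{\thit(P_t)}\right)\|f\|_{2,\pi},
\]
where the last inequality uses \cref{lem:hiteigen} applied to both symmetric substochastic matrices.

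Finally, I iterate this bound: starting from $M\mathbbm{1}$, peel off one factor $D_{w_{t-1}}P_tD_{w_t}$ at a time (for $t=1,2,\dots,T$), picking up a factor $(1-1/\thit(P_t))$ at each step. Since $\|\mathbbm{1}\|_{2,\pi}=1$, this yields
\[
    \|M\mathbbm{1}\|_{2,\pi}\leq\prod_{t=1}^T\left(1-\frac{1}{\thit(P_t)}\right)\leq\exp\left(-\sum_{t=1}^T\frac{1}{\thit(P_t)}\right)\leq\exp\left(-\frac{T}{\thitmax(\Pseq)}\right),
\]
which is exactly the desired bound. The only additional ingredient beyond the proof of \cref{lem:HTL_formal} is the square-root factorization enabled by laziness; everything else is a straightforward adaptation.
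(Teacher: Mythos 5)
Your proposal is correct and follows essentially the same route as the paper's proof: the same matrix identity for the survival probability, the same Cauchy--Schwarz reduction to $\|M\mathbbm{1}\|_{2,\pi}$, and the same peeling argument using \cref{lem:normmeet} (whose square-root factorization via laziness and the adjointness of $D_w\sqrt{P_t}$ and $\sqrt{P_t}D_w$ you correctly identify as the key point) together with \cref{lem:hiteigen}. No gaps.
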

\begin{proof}
Recall that $D_w\in \{0,1\}^{V\times V}$ is a diagonal matrix defined by $D_w(x,x)=\mathbbm{1}_{x\neq w}$.
In the same
    way as the proof of \cref{lem:HTL_formal}, 
\begin{align*}
    \Pr\left[\bigwedge_{t=0}^T\{X_t\neq w_t\},X_T=y\middle|X_0=x\right]
    &=(D_{w_0}P_1D_{w_1}P_2D_{w_2}\cdots D_{w_{T-1}}P_TD_{w_{T}})(x,y)
\end{align*}
holds for any $x,y\in V$, and hence we have 
\begin{align*}
    \Pr\left[\bigwedge_{t=0}^T\{X_t\neq w_t\}\right]
    &=\sum_{x\in V}\pi(x)\sum_{y\in V}\left(\prod_{t=1}^T(D_{w_{t-1}}P_t D_{w_t})\right)(x,y)
    \leq \left\|\left(\prod_{t=1}^T(D_{w_{t-1}}P_tD_{w_t})\right)\mathbbm{1}\right\|_{2,\pi}.
\end{align*}
Here, we used the Caushy--Schwarz inequality.
Since $P_t$ is reversible and lazy,
applying \cref{lem:normmeet} repeatedly yields
\begin{align*}
\left\|\left(\prod_{t=1}^T(D_{w_{t-1}}P_tD_{w_t})\right)\mathbbm{1}\right\|_{2,\pi}
&\leq \sqrt{\rho(D_{w_0}P_1D_{w_0})\rho(D_{w_1}P_1D_{w_1})}\left\|\left(\prod_{t=2}^T(D_{w_{t-1}}P_tD_{w_t})\right)\mathbbm{1}\right\|_{2,\pi}\\
&\leq \cdots \leq \prod_{t=1}^T\sqrt{\rho(D_{w_{t-1}}P_tD_{w_{t-1}})\rho(D_{w_t}P_tD_{w_t})}.
\end{align*}
Finally, from \cref{lem:hiteigen}, we have $\rho(D_wP_tD_w)\leq 1-\frac{1}{\thit(P_t)}$ for any $t$.
Thus, we obtain the claim.
Note that $P_t$ is irreducible for any $t$.
\end{proof}

\subsection{Upper bound of coalescing time}
Consider the coalescing random walks
$(\Co_t(1))_{t\geq 0}, (\Co_t(2))_{t\geq 0}, \ldots, (\Co_t(n))_{t\geq 0}$  according to $\Pseq=(P_t)_{t\geq 1}$
(see \cref{sec:mixing_time_definition} for the definition).
This section is devoted to the proof of
\cref{thm:main_coalescing}.
\begin{theorem}[Precise statement of \cref{thm:main_coalescing}] \label{thm:coalescing_time_bound}
Let $\mathcal{P}=(P_t)_{t\geq 1}$ be a sequence of irreducible, reversible, and lazy transition matrices.
Suppose that all $P_t$ have the same stationary distribution $\pi$.
Then, $\tcoal(\Pseq)\leq C\thitmax(\Pseq)$ holds for some positive constant $C$.
\end{theorem}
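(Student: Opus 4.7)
The plan is to adapt the ``killed particle'' framework of Oliveira and Peres~\cite{OP19} from the static to the time-inhomogeneous setting, using our dynamic Meeting Time Lemma (\cref{lem:MTL_formal}) wherever the static version was invoked in~\cite{OP19}. The proof naturally decomposes into the three parts outlined in \cref{sec:proof_outline}. First, I would couple the coalescing system $(\Co_t)_{t\geq 0}$ with a family of $n$ independent random walks $(Y_t(1),\dots,Y_t(n))_{t\geq 0}$ according to $\Pseq$ in which each $Y(i)$ is the ``would-be'' trajectory of walker $i$ if coalescences were ignored; the set of surviving cluster representatives is then a monotonically shrinking function of the meeting times among the $Y(i)$. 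Standard coupling identities reduce $\Pr[\taucoal(\Pseq)>T]$ to a question about multiple meeting events in this independent system, paralleling the corresponding step in~\cite{OP19}.

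Second, I would perform a burn-in of length $T_0\defeq\tsep^{(\pi)}(\Pseq)$. By definition of the separation distance, $P_{[1,T_0]}(u,\cdot) = \tfrac{1}{2}\pi + \tfrac{1}{2} Q(u,\cdot)$ for some stochastic matrix $Q$, so the coin-flip argument already used in the proof of \cref{lem:hitting_probability_bound} shows that, with constant probability, at least a constant fraction of any prescribed group of $k$ independent walkers is simultaneously i.i.d.\ $\pi$-distributed at time $T_0$. Since $\Pseq$ is lazy, \cref{thm:main_mixing} together with $\trelmax(\Pseq)\leq \thitmax(\Pseq)$ yields $T_0\leq \tmix^{(\infty,\pi)}(\Pseq) = O(\thitmax(\Pseq))$, so this burn-in is absorbed into the final bound.

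Third, I would apply \cref{lem:MTL_formal} to the well-mixed walkers and iterate over phases. For each $\pi$-distributed walker $Y(i)$ and any (suitably adapted) target trajectory $(w_t)_{t\geq T_0}$, \cref{lem:MTL_formal} gives $\Pr[\forall t\in [T_0,T_0+s]:Y_t(i)\neq w_t] \leq \exp(-s/\thitmax(\Pseq))$, and independence across the $\Omega(k)$ well-mixed walkers amplifies this to $\exp(-\Omega(ks/\thitmax(\Pseq)))$. Taking $s = \Theta(\thitmax(\Pseq)/k)$ then forces at least one well-mixed walker to meet the target within a single phase of that length with constant probability. Organising the coalescence into $O(\log n)$ phases with geometrically shrinking cluster counts $k_j \asymp n/2^j$ and correspondingly growing phase lengths $O(\thitmax(\Pseq)/k_j)$ yields a total $\sum_{j=0}^{O(\log n)} O(\thitmax(\Pseq)\cdot 2^j/n) = O(\thitmax(\Pseq))$, giving $\tcoal(\Pseq)=O(\thitmax(\Pseq))$.

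The main obstacle I expect is Step 1: arranging the coupling so that the target trajectory a given walker must avoid is adapted to a filtration compatible with \cref{lem:MTL_formal}, which is stated only for \emph{deterministic} target sequences. The natural targets are other walkers' positions, which are random, so one needs a conditioning or strong-Markov argument in the spirit of~\cite{OP19} that freezes one walker's trajectory and evaluates the other against it; because our kernels $P_t$ depend on $t$ alone and not on any walker's history, this conditioning does not alter the marginal law of the tracked walker, and \cref{lem:MTL_formal} applied conditionally on the frozen trajectory delivers the needed bound. The laziness hypothesis enters at exactly the same place as in~\cite{OP19}: it is needed inside the proof of \cref{lem:MTL_formal} via the positive semidefiniteness of each $P_t$ and the resulting factorisation $D_xP_tD_y=(D_x\sqrt{P_t})(\sqrt{P_t}D_y)$.
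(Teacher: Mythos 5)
Your proposal follows essentially the same route as the paper's proof: a coupling of the coalescing system with independent walks via Oliveira's ``allowed killings'' construction, a single burn-in of length $\tsep^{(\pi)}(\Pseq)$ using the coin-flip decomposition $P_{[1,\tsep]}=\tfrac12\pi+\tfrac12 Q$ to obtain a constant fraction of $\pi$-distributed walkers in each dyadic block, and then \cref{lem:MTL_formal} applied (after conditioning on the target walker's trajectory) over dyadic phases whose lengths sum geometrically to $O(\thitmax(\Pseq))$, finished off by a restart argument. You also correctly identify both the role of laziness and the need to freeze one walker's trajectory so that the Meeting Time Lemma's deterministic-target hypothesis applies, which is exactly how the paper handles it.
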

To prove \cref{thm:coalescing_time_bound}, we introduce the following notation. 
Let $\ell_x\defeq \lceil \log_2(x)\rceil$ and $\ell\defeq \ell_n$.
Let $K$ be a suitable constant that will be determined later.
Define $L_\ell, L_{\ell-1}, \ldots, L_0$ recursively 
by $L_\ell=\tsep^{(\pi)}(\Pseq)$ and $L_{i}=L_{i+1}+\left\lceil \frac{K\thitmax(\Pseq)}{2^i}\right\rceil$.
In other words,
$L_i=\tsep^{(\pi)}(\Pseq)+\sum_{j=i}^{\ell-1}\left\lceil \frac{K\thitmax(\Pseq)}{2^j}\right\rceil$ and 
$L_0=\tsep^{(\pi)}(\Pseq)+\sum_{j=0}^{\ell-1}\left\lceil \frac{K\thitmax(\Pseq)}{2^j}\right\rceil\leq \tsep^{(\pi)}(\Pseq)+\log_2(n)+2K\thitmax(\Pseq)$.

The following result states a relation between the coalescing random walk and the independent random walks.
The proof is
essentially based on the argument of \cite{Oliveira12}.
\begin{lemma}
\label{lem:coal_mult}
Let $(X_t)_{t\geq 0}=((X_t(1),\ldots,X_t(n)))_{t\geq 0}$ be $n$ independent random walks, where each walker is according to $\Pseq=(P_t)_{t\geq 1}$.
Suppose that $\Co_0=x$ and $X_0=x$ for an arbitrary initial position of walkers $x=(x_1, x_2, \ldots, x_n)\in V^n$.
Then, we have
\begin{align*}
    \Pr\left[\taucoal(\Pseq)>L_0\right]
    \leq 
    \Pr\left[\bigvee_{a=2}^n\bigwedge_{b\in [2^{\ell_a}-1]}\bigwedge_{t\in [L_{\ell_b+1},L_{\ell_b}]}\{X_t(b)\neq X_t(a)\}\right].
\end{align*}
\end{lemma}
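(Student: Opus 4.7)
The plan is to couple the coalescing random walks $(\Co_t)_{t\geq 0}$ with the $n$ independent random walks $(X_t)_{t\geq 0}$, both started from the configuration $x=(x_1,\dots,x_n)$, following the argument of Oliveira~\cite{Oliveira12}. In the natural coupling, walker $a$ in the coalescing system follows the independent trajectory $X_t(a)$ up to the coalescence time $\tau_a\defeq\min\{t\geq 0: \Co_t(a)=\Co_t(b)\text{ for some }b<a\}$, after which it adopts the trajectory of the lowest-indexed walker it has merged with. By induction on $a$, this coupling maintains the cluster invariant that $\Co_t(b)=X_t(b^{\ast}(t))$ for all $t\geq 0$, where $b^{\ast}(t)\leq b$ denotes the minimum index in the coalesced cluster containing walker $b$ at time $t$.

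The first step is to observe that on the event $\{\taucoal(\Pseq)>L_0\}$, the coalesced partition at time $L_0$ has at least two clusters; letting $a$ be the smallest index that does not lie in walker $1$'s cluster at time $L_0$, every index $b<a$ lies in walker $1$'s cluster, and in particular walker $a$ has never coalesced with any smaller-index walker over $[0,L_0]$. Hence $\Co_t(a)=X_t(a)$ for every $t\leq L_0$, and disjointness of clusters yields $X_t(a)=\Co_t(a)\neq \Co_t(b)=X_t(b^{\ast}(t))$ for every $b<a$ and every $t\leq L_0$.

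The second step, which is the combinatorial heart of the lemma, is to leverage the dyadic structure of the time windows $[L_{j+1},L_j]$ of length $\lceil K\thitmax(\Pseq)/2^j\rceil$ and the index range $b\in[2^{\ell_a}-1]$ in order to translate the collection of non-meetings with cluster representatives $b^{\ast}(t)$ into the non-meetings $X_t(a)\neq X_t(b)$ claimed on the right-hand side. Following the argument of~\cite{Oliveira12}, a level-by-level selection over $j=0,1,\dots,\ell_a-1$, where at each level the $2^{j-1}$ candidate indices $b$ with $\ell_b=j$ are matched to the geometric shrinkage of the window length $\lceil K\thitmax(\Pseq)/2^j\rceil$, shows that the cluster-history obstruction can be overcome for the chosen $a$. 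The resulting deterministic event containment then yields the claimed inequality upon taking probabilities.

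The main obstacle is precisely that $\Co_t(b)$ may fail to equal $X_t(b)$ once walker $b$ is absorbed into a smaller-index cluster, so the collection of non-meetings $\{X_t(a)\neq X_t(b^{\ast}(t))\}$ supplied by the first step is only indirectly related to the desired collection $\{X_t(a)\neq X_t(b)\}$ appearing on the right-hand side. Resolving this requires carefully bookkeeping the coalescing history within $\{1,\dots,2^{\ell_a}-1\}$ and exploiting the calibrated dyadic schedule $L_\ell,L_{\ell-1},\dots,L_0$ so that, in the subsequent application of the Meeting Time Lemma (\cref{lem:MTL_formal}) carried out in \cref{lem:coalcons}, the product of non-meeting probabilities telescopes to a constant; the hierarchical choices of index range and window lengths made in the definition of the $L_i$ are tailored precisely for this purpose.
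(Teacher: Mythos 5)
Your setup (the natural coupling, picking $a$ as the smallest index outside walker $1$'s cluster, concluding $\Co_t(a)=X_t(a)$ and $X_t(a)\neq \Co_t(b)$ for all $b<a$, $t\leq L_0$) matches the first part of the paper's argument, and you correctly identify the central difficulty: once walker $b$ is absorbed, $\Co_t(b)$ tracks $X_t(b^\ast(t))$ rather than $X_t(b)$, so the non-meetings you obtain are with cluster representatives, not with the independent trajectories appearing on the right-hand side. But your ``second step,'' which you yourself call the combinatorial heart of the lemma, is only asserted, not proved: ``a level-by-level selection \dots shows that the cluster-history obstruction can be overcome'' is not an argument, and in fact the conclusion you claim for it is false as stated. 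Under the natural coupling there is \emph{no} deterministic containment of $\{\taucoal(\Pseq)>L_0\}$ in the event $\bigvee_a\bigwedge_b\bigwedge_t\{X_t(b)\neq X_t(a)\}$: if walker $b$ is absorbed into walker $1$'s cluster before its designated window $(L_{\ell_b+1},L_{\ell_b}]$, its independent copy $X_t(b)$ keeps evolving and may well collide with $X_t(a)$ during that window, violating the right-hand event while the coalescing process still has two clusters. The inequality is genuinely distributional and cannot be obtained by pathwise inclusion along the coupling you describe.

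The paper's resolution, which is the idea missing from your proposal, is to interpose the ``random walks with a list of allowed killings'' $(Z_t)$: walker $a$ may be killed \emph{only} by a walker $b$ with $2^i\leq b<2^{i+1}$ and $a\geq 2^{i+1}$, and only during the window $L_{i+1}<t\leq L_i$. Two facts make this work. First, restricting the killing opportunities can only increase the number of survivors, so $\Pr[\taucoal>L_0]=\Pr[|\mathcal{S}(Y_{L_0})|\geq 2]\leq\Pr[|\mathcal{S}(Z_{L_0})|\geq 2]$, where $Y$ is the usual killed-particle process. Second, the schedule is arranged so that a level-$j$ killer can itself be killed only in windows $i<j$, which occur \emph{after} its own killing window $(L_{j+1},L_j]$; hence every designated killer still follows its independent trajectory $X_t(b)$ throughout its window, and the survival event for $Z$ is \emph{exactly} the event on the right-hand side, now phrased in terms of the independent walks. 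Without this modified process (or an equivalent device), the passage from non-meetings with cluster representatives to non-meetings with the original $X_t(b)$ does not go through, so the proof as proposed has a genuine gap.
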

\begin{proof}
First, we define the \emph{random walks with killings} $(Y_t)_{t\geq 0}=((Y_t(1),\ldots,Y_t(n)))_{t\geq 0}$, where a walker is killed when it meets another walker with a smaller index.
Formally, $(Y_t)_{t\geq 0}$ is a
    Markov chain on a state space $(V\cup \{\partial\})^n$,
    where $\partial\not\in V$ is a \emph{coffin state}.
Set $Y_t(1)=X_t(1)$ for all $t\geq 0$.
For $t\geq 0$ and $a\geq 1$, define $Y_t(a)$ inductively as follows.
Suppose that $(Y_s)_{s=0}^{t-1}$ and $Y_t(1), \ldots, Y_t(a-1)$ are determined.
Then, define
\begin{align*}
\tau(a)&\defeq \min\{s\geq 0: \textrm{$X_s(a)=Y_s(b)$ for some $b<a$}\}, \hspace{1em} \mathrm{and}\\
Y_t(a)&\defeq 
\begin{cases}
X_t(a) &\text{if $t<\tau(a)$},\\
\partial &\text{otherwise}.
\end{cases}
\end{align*}
%

Next, we define the \emph{random walks with a list of allowed killings} $(Z_t)_{t\geq 0}=((Z_t(1),\ldots,Z_t(n)))_{t\geq 0}$, where  
a walker $a\geq 2^{i+1}$ is killed when it meets a ``killer'' walker $2^i\leq b<2^{i+1}$ during the time period $L_{i+1}< t\leq L_i$.
\Cref{fig:coalescing} is an overall picture.
\begin{figure}
    \centering
    \includegraphics[width=\hsize]{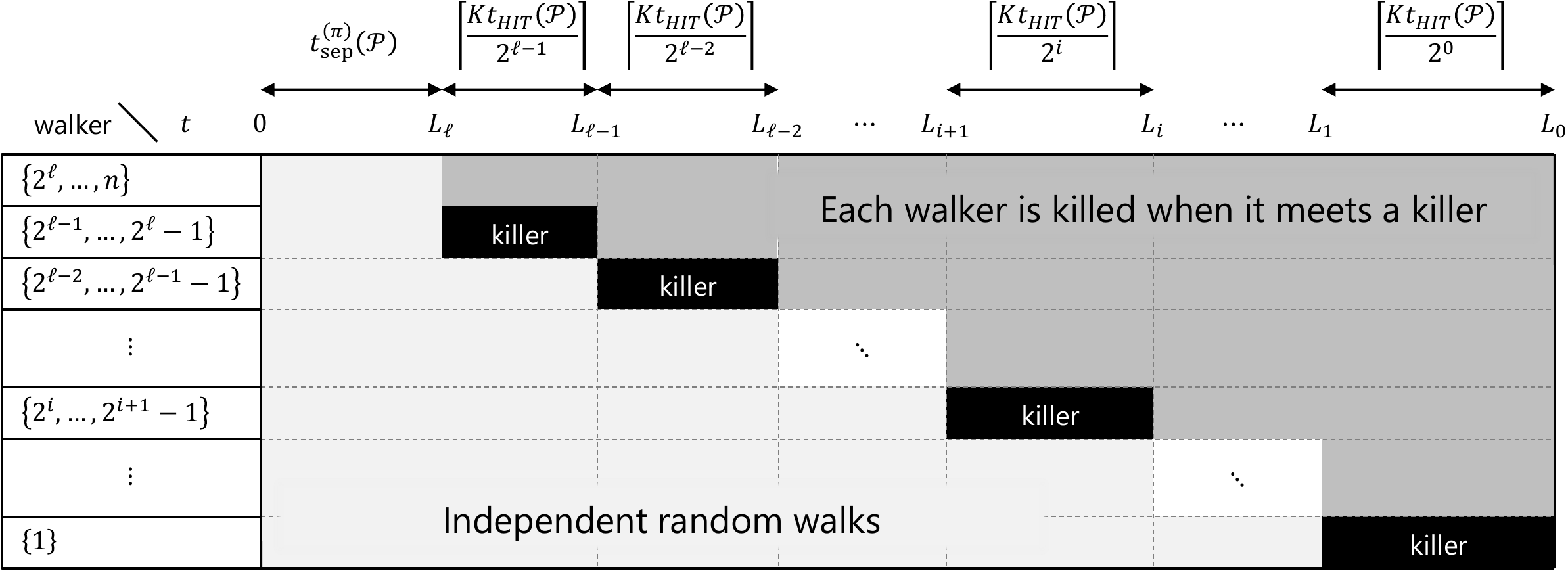}
    \caption{Random walk with a list of allowed killings (\cite{Oliveira12}).}
    \label{fig:coalescing}
\end{figure}
Formally, define the list of allowed killings $\mathcal{A}=(A_t)_{t\geq 0}$ as 
$A_t\defeq \emptyset$ for $t\leq L_{\ell}$ and  
$A_t\defeq \left \{(b,a): 2^i\leq b<2^{i+1}, a\geq 2^{i+1} \right\}$ for $L_{i+1}< t\leq L_i$ with $0\leq i<\ell$.
Let $Z_t(1)=X_t(1)$ for all $t\geq 0$.
For $t\geq 0$ and $a\geq 1$, $Z_t(a)$ is inductively defined as follows.
Suppose that $(Z_s)_{s=0}^{t-1}$ and $Z_t(1), \ldots, Z_t(a-1)$ are determined.
Then, let
\begin{align*}
\tau^\mathcal{A}(a)&\defeq \min\{s\geq 0: \textrm{$X_s(a)=Z_s(b)$ for some $(b,a)\in A_s$}\}, \hspace{1em} \mathrm{and}\\
Z_t(a)&\defeq 
\begin{cases}
X_t(a) &\text{if $t< \tau^\mathcal{A}(a)$,}\\
\partial &\text{otherwise}.
\end{cases}
\end{align*}

For a vector $x\in (V\cup \{\partial\})^n$, let $\mathcal{S}(x)=\{i\in [n]: x(i)\neq \partial\}$.
Obviously, $\Pr[|S(\Co_t)|\geq z|\Co_0= x]=\Pr[|\mathcal{S}(Y_t)|\geq z|X_0=x]$ holds for any $t\geq 0$, $z\geq 0$ and $x\in V^n$.
Furthermore, $\Pr[|\mathcal{S}(Y_t)|\geq z|X_0=x]\leq \Pr[|\mathcal{S}(Z_t)|\geq z|X_0=x]$ holds for any $t\geq 0$, $z\geq 0$ and $x\in V^n$.
To see this, consider using $(\Co_t)_{t\geq 0}$ instead of $(X_t)_{t\geq 0}$ in definitions of both $(Y_t)_{t\geq 0}$ and $(Z_t)_{t\geq 0}$.
Then, $|\mathcal{S}(Y_t)|\leq |\mathcal{S}(Z_t)|$ holds for any $t\geq 0$.
Note that $\Pr[\Co_t(i)=v|\Co_0(i)=u]=\Pr[X_t(i)=v|X_0(i)=u]$ holds for any $u,v\in V$ and $i\in [n]$.
From the definition of the random walk with allowed killings, $Z_t(i)=X_t(i)$ until it meets a killer.
Hence, we have
\begin{align*}
&\Pr[\taucoal(\Pseq)>L_0]\\
&\leq \Pr\left[|\mathcal{S}(Z_{L_0})|\geq 2\right] 
=\Pr\left[\bigvee_{a=2}^{n}\{\textrm{Walker $a$ is not killed}\}\right]\\
&=\Pr\left[\bigvee_{a=2}^{n}\bigwedge_{b=1}^{2^{\ell_a}-1}\bigwedge_{t=L_{\ell_b+1}}^{L_{\ell_b}}\{Z_t(a)\neq Z_t(b)\}\right]
=\Pr\left[\bigvee_{a=2}^{n}\bigwedge_{b=1}^{2^{\ell_a}-1}\bigwedge_{t=L_{\ell_b+1}}^{L_{\ell_b}}\{X_t(b)\neq X_t(a)\}\right].
\end{align*}
\end{proof}
Combining \cref{lem:coal_mult,lem:MTL_formal}, we obtain the following lemma.
\begin{lemma}
\label{lem:coalcons}
Let $\mathcal{P}=(P_t)_{t\geq 1}$ be a sequence of irreducible, reversible, and lazy transition matrices.
Suppose that all $P_t$ have the same stationary distribution $\pi$.
Then, for any $x\in V^{n}$, 
$
\Pr\left[\taucoal(\Pseq)>T\middle|\Co_0=x\right]\leq 1-10^{-5}
$ holds if $T\geq \tsep^{(\pi)}(\Pseq)+80\thit(\Pseq)+\log_2(n)$.
\end{lemma}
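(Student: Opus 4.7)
The strategy is to apply \cref{lem:coal_mult} to upper bound $\Pr[\taucoal>L_0]$ by the probability that some walker $a\in\{2,\ldots,n\}$ avoids all its killers in the independent-walks allowed-killings game $(Z_t)$. Since the number of survivors $|\mathcal{S}(Z_{L_0})|$ is always at least one (walker $1$ is never a victim), Markov's inequality applied to the nonnegative quantity $|\mathcal{S}(Z_{L_0})|-1$ gives
\[
\Pr[\taucoal>L_0]\leq \Pr[|\mathcal{S}(Z_{L_0})|\geq 2]\leq \sum_{a=2}^{n}\Pr[\text{walker $a$ survives}].
\]
It therefore suffices to bound this sum by $1-10^{-5}$ for a suitable choice of the constant $K$ hidden in $L_0$.

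For each walker $a$ I would use two ingredients. First, the definition of $\tsep^{(\pi)}(\Pseq)$ lets one write $P_{[1,L_\ell]}(x,\cdot)=\tfrac{1}{2}\pi+\tfrac{1}{2}q_x$ for some distribution $q_x$, from which each walker $b$'s position at time $L_\ell$ can be coupled to a fresh $\pi$-sample with probability at least $1/2$, independently across walkers; call these coupling events $\{\mathcal{E}_b\}$. Second, whenever a killer $b$ is $\pi$-coupled, applying \cref{lem:MTL_formal} (Meeting Time Lemma) with $b$ as the active walker and walker $a$'s trajectory as the target sequence over the level-$i$ window $[L_{i+1},L_i]$ of length $\lceil K\thitmax(\Pseq)/2^i\rceil$ yields
\[
\Pr[\text{killer $b$ avoids }X_\cdot(a)\text{ during level }i\mid \mathcal{E}_b,\,X_\cdot(a)]\leq \exp(-K/2^i).
\]
Conditioning on walker $a$'s entire trajectory $X_\cdot(a)$ makes the $n_i$ killers at level $i$ and their coupling events mutually independent, so combining the $\leq 1/2$ coupling-failure probability with the MTL factor gives the per-level bound $\bigl(\tfrac{1}{2}+\tfrac{1}{2}\exp(-K/2^i)\bigr)^{n_i}$. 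Since the killers at distinct levels are disjoint subsets of walkers (and hence independent given $X_\cdot(a)$), multiplying across levels and using $\sum_i n_i = a-1$ produces a pointwise bound whose large-$K$ limit is essentially $(1/2)^{a-1}$.

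The naive summation $\sum_{a\geq 2}(1/2)^{a-1}=1$ just barely fails the required strict inequality. To close the gap I would additionally split on walker $a$'s own coupling event $\mathcal{E}_a$: on $\mathcal{E}_a$, applying \cref{lem:MTL_formal} with walker $a$ as the active walker and one fixed killer per level as target across the full killing interval $[L_{\ell_a},L_0]$ contributes an extra factor $\exp(-\Theta(K))$; on $\bar{\mathcal{E}}_a$ (probability at most $1/2$), the killer-based bound still applies but is multiplied by $1/2$. Combining produces a survival bound of order $\exp(-\Theta(K))+(1/2)^a$, whose sum over $a\geq 2$ is strictly less than $1-10^{-5}$ once $K$ is chosen as a large enough absolute constant. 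Together with $L_0\leq \tsep^{(\pi)}(\Pseq)+2K\thitmax(\Pseq)+\log_2 n$ from the definition of $L_0$, this delivers the threshold in the statement.

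The main obstacle I anticipate is precisely this sharpening step: the naive level-by-level killer bound gives summed total exactly one, with no slack, so the refinement that incorporates walker $a$'s own coupling bonus must be done in a way that preserves the independence structure of the killers across levels and keeps all constants under control.
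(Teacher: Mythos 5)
Your overall architecture is sound and genuinely different from the paper's. Both proofs start from \cref{lem:coal_mult}, the splitting $P_{[1,L_\ell]}=\frac{1}{2}\pi+\frac{1}{2}Q$ given by the separation time, and \cref{lem:MTL_formal} applied to $\pi$-coupled killers over their dyadic windows. But the paper conditions \emph{globally} on the event that at least a quarter of each dyadic block of walkers is $\pi$-coupled; this event holds only with probability about $1.6\times 10^{-4}$ (which is exactly where the weak constant $1-10^{-5}$ comes from), and on it every level contributes a factor $\mathrm{e}^{-K/4}$ to each survivor, so the union bound over $a$ sums to the tiny quantity $2/(\mathrm{e}^{K/4}-2)$ with lots of slack. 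You instead keep the coupling local, absorbing the coin into a per-killer factor $\frac{1}{2}(1+\mathrm{e}^{-K/2^{\ell_b}})$ and using conditional independence of the killers given $X_\cdot(a)$; that step is correct and, if the constants work out, gives a much better failure probability than $1-10^{-5}$.

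The gap is in your sharpening step, and you do need one in your own accounting: with exactly $a-1$ killers per walker and $K=40$ the sum $\sum_{a\geq 2}\prod_{b<a}\frac{1}{2}(1+\mathrm{e}^{-K/2^{\ell_b}})$ actually comes out to about $1.0015>1$, because the high-level factors exceed $\frac12$. (With the paper's killer sets $b\in[2^{\ell_a}-1]$, which give $2^{\lceil\log_2 a\rceil}-1\geq a-1$ killers, the same sum is about $0.93$ and the sharpening would be unnecessary, but establishing that requires constant-tracking you have not done.) Your proposed fix has two problems. First, applying \cref{lem:MTL_formal} with walker $a$ as the active walker requires conditioning on the killers' trajectories, which cannot be combined multiplicatively with the killer-based bound (that one conditions on $X_\cdot(a)$); at best you may take the \emph{minimum} of the two bounds on $\Pr[a\text{ survives}\mid\mathcal{E}_a]$. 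Second, and more seriously, the combined bound you state, of order $\exp(-\Theta(K))+(1/2)^a$, does not close the argument: summed over $a$ it gives $n\exp(-\Theta(K))+\frac{1}{2}$, and since $K$ must be an absolute constant (the statement allots only $80\thitmax(\Pseq)$, i.e.\ $K=40$), the term $n\exp(-\Theta(K))$ is not small. The repair is to deploy the $\mathcal{E}_a$-bonus only where it is needed, e.g.\ bound each survival probability by $\frac{1}{2}\min\{\mathrm{e}^{-\Theta(K)},\ \text{killer bound}\}+\frac{1}{2}(\text{killer bound})$, whose sum is $\frac{1}{2}+O(K\mathrm{e}^{-\Theta(K)})$; as written, your argument does not reach the claimed constant.
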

\begin{proof}
%
From the definition \cref{def:tsep} of the separation time $\tsep^{(\pi)}(\Pseq)=L_\ell$, 
there is a transition matrix $Q\in [0,1]^{V\times V}$ such that
\begin{align*}
P_{[1,L_\ell]}(x,u)
=
\frac{1}{2}\pi(u)+\frac{1}{2}Q(x,u)
\end{align*}
holds for all $x,u\in V$. 
Hence, the distribution of $n$ walkers $X_{L_\ell}(1),\ldots,X_{L_\ell}(n)$ can be simulated as follows:
Each walker $i\in [n]$ flips its own fair coin.
If it is head, the walker's position $X_{L_\ell}(i)$ is sampled according to $\pi$. Otherwise, it is sampled according to the distribution $Q(X_0(i),\cdotp)$.
Let $I\subseteq [n]$ denote a random subset of indices with a head coin.
Let $\mathcal{W}\defeq \{W\subseteq [n]: \textrm{$|\{2^i,\ldots,2^{i+1}-1\}\cap W|\geq 2^i/4$ holds for all $0\leq i<\ell$}\}$ be a set of subsets of $[n]$.
Then, from \cref{lem:coal_mult}, 
\begin{align}
&\Pr\left[\taucoal(\Pseq)>L_0\right] \nonumber \\
&\leq \sum_{W\subseteq [n]}\Pr\left[\bigvee_{a=2}^n\bigwedge_{b\in [2^{\ell_a}-1]}\bigwedge_{t\in [L_{\ell_b+1},L_{\ell_b}]}\{X_t(b)\neq X_t(a)\}\middle|I=W\right]\Pr[I=W]\nonumber \\
    &\leq \Pr[I\notin \mathcal{W}]
    +\max_{W\in \mathcal{W}}\Pr\left[\bigvee_{a=2}^n\bigwedge_{b\in [2^{\ell_a}-1]\cap W}\bigwedge_{t\in [L_{\ell_b+1},L_{\ell_b}]}\{X_t(b)\neq X_t(a)\}\middle|\bigwedge_{i\in W}\{X_{L_\ell}(i)\sim \pi\} \right].
    \label{eq:coal_bound}
\end{align}
%
Let $I_j=\mathbbm{1}_{j\in I}\in\{0,1\}$ denote the
    binary indicator for the random subset $I$.
For the first term of \cref{eq:coal_bound}, applying the Chernoff inequality (\cref{lem:Chernoff}) yields
\begin{align}
\Pr[I\in \mathcal{W}]
&=\Pr\left[\bigwedge_{i=0}^{\ell-1}\{|I\cap \{2^i,\ldots,2^{i+1}-1\}|\geq 2^i/4\}\right] =\prod_{i=0}^{\ell-1}\Pr\left[\sum_{j=2^i}^{2^{i+1}-1}I_j\geq 2^i/4\right] \nonumber\\
&\geq
    \prod_{i=0}^{5}\left(1-\exp\left(-\frac{2^i}{16}\right)\right)\cdot    
    \prod_{i=6}^{\ell-1}\left(1-\exp\left(-\frac{2^i}{16}\right)\right) \nonumber\\
&\geq 0.00033\cdot  \prod_{i=2}^{\ell-5}\left(1-\exp\left(-2^{i}\right)\right) \geq 0.00033\cdot \prod_{i=2}^{\infty}\left(1-2^{-i}\right)\nonumber\\
&\geq 0.00033\cdot \left(1-\sum_{i=2}^\infty 2^{-i}\right) \geq 0.00016.\label{eq:coal_expbound}
\end{align}
Note that $\E\left[\sum_{j=2^i}^{2^{i+1}-1}I_j\right]=2^i/2$.
Next, we bound the second term of \cref{eq:coal_bound}.
For any $W\in \mathcal{W}$, 
\begin{align*}
&\Pr\left[\bigvee_{a=2}^n\bigwedge_{b\in [2^{\ell_a}-1]\cap W}\bigwedge_{t\in [L_{\ell_b+1},L_{\ell_b}]}\{X_t(b)\neq X_t(a)\}\middle|\bigwedge_{i\in W}\{X_{L_\ell}(i)\sim \pi\}\right]\\
&\leq \sum_{a=2}^n\prod_{b\in [2^{\ell_a}-1]\cap W}\Pr\left[\bigwedge_{t\in [L_{\ell_b+1},L_{\ell_b}]}\{X_t(b)\neq X_t(a)\}\middle|\bigwedge_{i\in W}\{X_{L_\ell}(i)\sim \pi\}\right]
\end{align*}
holds.
Applying the meeting time lemma (\cref{lem:MTL_formal}),
for any $b\in W$, we have
\begin{align*}
\Pr\left[\bigwedge_{t\in [L_{\ell_b+1},L_{\ell_b}]}\{X_t(b)\neq X_t(a)\}\middle|\bigwedge_{i\in W}\{X_{L_\ell}(i)\sim \pi\}\right]
\leq \exp\left(-\frac{L_{\ell_b}-L_{\ell_b+1}}{\thitmax(\Pseq)}\right)
\leq \exp\left(-\frac{K}{2^{\ell_b}}\right).
\end{align*}
Hence, for any $W\in \mathcal{W}$, 
\begin{align}
&\Pr\left[\bigvee_{a=2}^n\bigwedge_{b\in [2^{\ell_a}-1]\cap W}\bigwedge_{t\in [L_{\ell_b+1},L_{\ell_b}]}\{X_t(b)\neq X_t(a)\}\middle|\bigwedge_{i\in W}\{X_{L_\ell}(i)\sim \pi\} \right] \nonumber \\
&\leq \sum_{i=1}^{\ell-1}\sum_{a=2^i}^{2^{i+1}-1}\prod_{j=1}^{i-1}\prod_{b\in \{2^j,\ldots,2^{j+1}-1\}\cap W}\exp\left(-\frac{K}{2^{j}}\right) 
\leq \sum_{i=1}^{\ell-1}\sum_{a=2^i}^{2^{i+1}-1}\prod_{j=1}^{i-1}\exp\left(-\frac{K}{4}\right) \nonumber \\
&\leq \sum_{i=1}^{\ell-1}2^i\exp\left(-\frac{K}{4}i\right)\leq \frac{2}{\mathrm{e}^{K/4}-2}.
\label{eq:coal_mainbound}
\end{align}
Combining \cref{eq:coal_bound,eq:coal_expbound,eq:coal_mainbound} 
with $K=40$, 
$
\Pr\left[\taucoal(\Pseq)>T\middle|\Co_0=x\right]
\leq 1-0.00016+\frac{2}{\mathrm{e}^{10}-2}\leq 1-10^{-5}
$ holds.
\end{proof}
\begin{proof}[Proof of \cref{thm:coalescing_time_bound}]
Let $T\defeq \lceil (85+o_n(1))\thitmax(\Pseq)\rceil$.
\Cref{lem:coalcons} implies that, for any $t\geq 0$ and $x\in V^n$, $\Pr[\taucoal(\Pseq)>t+T|\Co_t=x]\leq 1-10^{-5}$ holds.
Thus we obtain the claim from \cref{cor:key_cor}.
\end{proof}

\subsection{Lower bound of meeting time}
\label{sec:meeting_time_lower_bound}
In this section,
we prove \cref{prop:exponential_lower_meet}.
\begin{proof}[Proof of \cref{prop:exponential_lower_meet}]
We consider the lazy simple random walk
    on the graph sequence given in Proposition 12 of~\cite{OT11}.
For completeness, we present
    the sequence formally.
For a graph $H$ and a permutation $\eta$ on $V(H)$,
let $\eta(H)$ be the graph
given by
$V(\eta(H))=V(H)$ and
$E(\eta(H))=\{\{\eta(u),\eta(v)\}:\{u,v\}\in E(H)\}$.

For any integer $m\in\mathbb{N}$,
let $U=\{u_0,\dots,u_{m-1}\}$ and $W=\{w_0,\dots,w_{m-1}\}$.
Define the graph $G$ by
$V(G)\defeq U\cup W$ and
\begin{align}
    E(G)\defeq\{u_0,w_0\}\cup \bigcup_{i=1}^{m-1} \{u_i,u_0\} \cup \bigcup_{j=1}^{m-1} \{w_j,w_0\}. \label{eq:graph_def}
\end{align}
Let $\eta$ be the permutation on $V$ defined by
    $\eta(u_i) = u_{(i+1)\bmod m}$
    and
    $\eta(w_i) =
    w_{(i+1)\bmod m}$.
    
\begin{figure}[tbp]
\center
\begin{tikzpicture}[
           every node/.style={
			circle,
               draw,
text centered, 
			}]
\def\D{0.6cm}
\def\S{3cm}
\node (u0) {$u_0$};
\node[above = \D of u0] (u1) {$u_1$};
\node[left = \D of u0] (u2) {$u_2$};
\node[below = \D of u0] (u3) {$u_3$};
\node[right = \D of u0] (v0) {$w_0$};
\node[above = \D of v0] (v1) {$w_1$};
\node[right = \D of v0] (v2) {$w_2$};
\node[below = \D of v0] (v3) {$w_3$};

\draw[-](u0)--(u1);
\draw[-](u0)--(u2);
\draw[-](u0)--(u3);
\draw[-](u0)--(v0);
\draw[-](v0)--(v1);
\draw[-](v0)--(v2);
\draw[-](v0)--(v3);

\node[right = \S of v2] (u1_) {$u_1$};
\node[above = \D of u1_] (u2_) {$u_2$};
\node[left = \D of u1_] (u3_) {$u_3$};
\node[below = \D of u1_] (u0_) {$u_0$};
\node[right = \D of u1_] (v1_) {$w_1$};
\node[above = \D of v1_] (v2_) {$w_2$};
\node[right = \D of v1_] (v3_) {$w_3$};
\node[below = \D of v1_] (v0_) {$w_0$};

\draw[-](u1_)--(u2_);
\draw[-](u1_)--(u3_);
\draw[-](u1_)--(u0_);
\draw[-](u1_)--(v1_);
\draw[-](v1_)--(v2_);
\draw[-](v1_)--(v3_);
\draw[-](v1_)--(v0_);

\end{tikzpicture}
\caption{The graph $G$ given by \cref{eq:graph_def} for $m=4$ (left) and $\eta(G)$ (right). \label{fig:bad_graph}}
\end{figure}
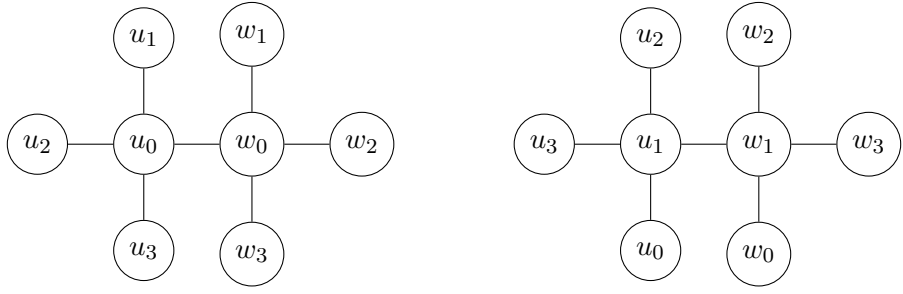

We claim that the lazy simple random walk on the sequence $(G_t)_{t\geq 1}$ given by $G_1=G$ and $G_{t+1}=\eta(G_t)$ ($t\geq 1$) has the desired property (see \cref{fig:bad_graph}).
Consider two independent lazy simple random walks $(X_t(1))_{t\geq 0}$ and $(X_t(2))_{t\geq 0}$ with initial positions $(X_0(1),X_0(2))=(u_{m-1},w_{m-1})$.
Suppose $\taumeet\leq t$.
Then, there exists $t'\leq t$ such that
either $X(1)$ or $X(2)$
moves along the edge $\{u_j,w_j\}$ for $j=t'\bmod m$.
Focus on the walk $X(1)$
and
suppose $X_{t'-1}(1)\in U$ and $X_{t'}(1)\in W$.
To reach $u_j$, the walker $X(1)$ must choose the self loop for $m-1$ consecutive times,
which occurs with probability $2^{-m+1}$.
Therefore, by the union bound over $X(0)$ and $X(1)$, we have
$\Pr[\taumeet \leq t]\leq 2\cdot 2^{-m+1}$
and we have
$\tmeet = 2^{\Omega(m)}$.
\end{proof}
\section{Pull voting} \label{sec:pull_voting}

In this section,
    we prove \cref{thm:consensus_time}.
Our proof of bounding $\E[\taucons]$ is inspired by the idea of well-known \emph{duality} between the pull voting and coalescing random walk~\cite{HP01}.

\begin{proposition}[Duality in static setting~\cite{HP01}] \label{prop:duality_static}
Let $P\in[0,1]^{V\times V}$
    be an irreducible transition matrix.
Let $\taucons(P)$ be the consensus time of the pull voting according to $P$ where all vertices initially hold distinct opinions.
Let $\taucoal(P)$ be
the coalescing time of the coalescing random walk according to $P$.
Then, for every $j\geq 0$, $\Pr[\taucoal(P)\leq j]=\Pr[\taucons(P)\leq j]$ holds.
\end{proposition}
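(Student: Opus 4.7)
The plan is to establish the duality via a time-reversal coupling in which both processes are driven by a common family of random variables. Specifically, for each vertex $v\in V$ and each time step $t\geq 1$, introduce an independent random variable $\xi_{v,t}\in V$ with distribution $P(v,\cdot)$. I will use these same $\xi_{v,t}$'s to run the pull voting forward in time and the coalescing random walk backward in time, then argue that the two events of interest coincide under this coupling.

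First, I would define the pull voting. Assume each vertex $v$ initially holds a distinct opinion $\sigma_v^{(0)}$. Under the coupling, at time $t$ vertex $v$ pulls from neighbor $\xi_{v,t}$, so $\sigma_v^{(t)}=\sigma_{\xi_{v,t}}^{(t-1)}$. Iterating, $\sigma_v^{(j)}=\sigma_{W_v^{(j)}(j)}^{(0)}$, where $W_v^{(j)}$ is the backward trajectory defined by $W_v^{(j)}(0)=v$ and $W_v^{(j)}(s)=\xi_{W_v^{(j)}(s-1),\,j-s+1}$ for $s=1,\dots,j$. Since the $\xi_{\cdot,t}$ for fixed $t$ are independent across their first index, each $W_v^{(j)}$ viewed alone is a Markov chain with transition matrix $P$: at step $s$, given the current position $x$, the next position $\xi_{x,j-s+1}$ has law $P(x,\cdot)$ and is independent of everything in the past of the backward walk.

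Next, I would verify that the collection $(W_v^{(j)})_{v\in V}$ is exactly a coalescing random walk according to $P$, started from $n$ distinct initial positions. The crucial point is that if $W_u^{(j)}(s)=W_v^{(j)}(s)=w$, then at the next step both walks use the same random variable $\xi_{w,j-s}$, so $W_u^{(j)}(s+1)=W_v^{(j)}(s+1)$, and hence the two trajectories stick together forever after. This is precisely the merging rule of a coalescing walk, and because the backward walks use disjoint $\xi$-randomness before coalescence, their joint law up to coalescence is that of independent random walks. So $(W_v^{(j)})_{v\in V}$ has the same law as the coalescing walk $(\Co_s)_{s\geq 0}$ with $\Co_0=(v)_{v\in V}$ on the first $j$ steps.

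Finally, I would identify the two events. Because the initial opinions are pairwise distinct, the consensus event $\{\taucons(P)\leq j\}$ is equivalent to $\{W_v^{(j)}(j)=W_u^{(j)}(j)\text{ for all }u,v\in V\}$, i.e., all backward walks occupy the same vertex at time $j$. By the sticky-coalescence observation above, this in turn is equivalent to $\{\taucoal(P)\leq j\}$ for the coupled coalescing walk. Taking probabilities yields $\Pr[\taucons(P)\leq j]=\Pr[\taucoal(P)\leq j]$. The only subtle step is the sticky-coalescence claim linking ``all walks coincide at time $j$'' with ``all walks have merged by time $j$'', but once the shared $\xi$-driver is set up this is immediate, so I do not anticipate any real obstacle; the bulk of the work is the careful bookkeeping of indices when reversing time.
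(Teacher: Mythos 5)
Your coupling is correct and is essentially the argument the paper uses: the selection matrices $S_t\in\mathcal{S}$ in the paper's proof of the (more general) dynamic relation encode exactly your variables $\xi_{v,t}$ (row $v$ of $S_t$ has its unique $1$ at position $\xi_{v,t}$), and the measure-preserving bijection $(S_1,\dots,S_j)\mapsto(S_j,\dots,S_1)$ is precisely your time reversal. The difference is only presentational --- the paper phrases the reversal algebraically via products of selection matrices so that it extends to time-inhomogeneous sequences (where reversing time reverses the order of the $P_t$'s, forcing the reversed sequence $\mathcal{Q}^{(j)}$), whereas you argue pathwise with backward trajectories; both give the same identity.
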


From \cref{prop:duality_static},
we can obtain bounds of $\taucons$
by studying $\taucoal$.
Indeed, the proof of previous results bounding $\taucons$ on a static graph relies on the duality.
In this paper, we obtain the following
    consensus--coalescing relation
    that is analogous
    to
    \cref{prop:duality_static}
    in the time inhomogeneous setting.

\begin{proposition}[Consensus--Coalescing relation on dynamic graphs] \label{prop:duality_dynamic}
Let $\Pseq$ be a sequence of transition matrices (not necessarily has a time-homogeneous stationary distribution).
Consider the pull voting
    according to $\Pseq$
    such that initially all vertices have $n$ distinct opinions.
Then, there is a sequence $(\mathcal{Q}^{(i)})_{i\geq 0}$ where each $\mathcal{Q}^{(i)}=(Q^{(i)}_t)_{t\geq 1}$ is a transition matrix sequence
    such that,
    for every $j\geq 0$,
    \begin{align*}
        \Pr[\taucons(\Pseq)\leq j]=\Pr[\taucoal(\mathcal{Q}^{(j)})\leq j]
    \end{align*}
    holds.
Moreover, if $\Pseq$ is
    reversible
    and has the time-invariant
    stationary distribution $\pi$,
    then so do the sequences $\mathcal{Q}^{(i)}$ for all $i$.
\end{proposition}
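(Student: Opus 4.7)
The plan is to adapt the classical time-reversal/duality argument for pull voting (\cref{prop:duality_static}) to the time-inhomogeneous setting by constructing, for each target time $j$, a coalescing random walk whose trajectories are precisely the backward traces of the voting process. Fix $j\geq 0$. I will realize the pull voting according to $\Pseq=(P_t)_{t\geq 1}$ via random maps: for each $t\geq 1$ and $u\in V$, let $Y_t(u)\in V$ be independent samples with $Y_t(u)\sim P_t(u,\cdot)$, and update opinions by $\sigma^{(t)}_u=\sigma^{(t-1)}_{Y_t(u)}$. Unfolding this recursion gives $\sigma^{(j)}_u=\sigma^{(0)}_{B_j(u)}$, where the backward trajectory from $u$ is defined by $B_0(u)=u$ and $B_s(u)=Y_{j-s+1}(B_{s-1}(u))$ for $s\geq 1$.

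Next, I will define $\mathcal{Q}^{(j)}=(Q^{(j)}_s)_{s\geq 1}$ by $Q^{(j)}_s\defeq P_{j-s+1}$ for $1\leq s\leq j$ and $Q^{(j)}_s\defeq P_s$ for $s>j$ (any extension works; this choice preserves reversibility and a common $\pi$). The crucial claim is that the joint law of $(B_s(u))_{u\in V,\,0\leq s\leq j}$ coincides with the law of a coalescing random walk according to $\mathcal{Q}^{(j)}$ started from the $n$ distinct positions $(u)_{u\in V}$. Indeed, since $Y_{j-s+1}$ takes independent values at distinct arguments, whenever $B_{s-1}(u)\neq B_{s-1}(u')$ the next positions $B_s(u)$ and $B_s(u')$ are conditionally independent with the correct marginal distributions given by $Q^{(j)}_s$; and whenever $B_{s-1}(u)=B_{s-1}(u')$ the two backward walks evaluate $Y_{j-s+1}$ at the same argument and hence necessarily stay together. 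These two properties are exactly the coalescing-walk dynamics described in \cref{sec:mixing_time_definition}.

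Finally, because the initial opinions are pairwise distinct, consensus at time $j$ is equivalent to $B_j(u)=B_j(u')$ for all $u,u'\in V$, i.e., to the coalescing walk having merged into a single cluster by time $j$. Thus the events $\{\taucons(\Pseq)\leq j\}$ and $\{\taucoal(\mathcal{Q}^{(j)})\leq j\}$ coincide on the common probability space, which yields the desired equality of probabilities. For the moreover clause, if every $P_t$ is reversible with respect to the time-invariant stationary distribution $\pi$, then each $Q^{(j)}_s$ is also reversible with stationary distribution $\pi$, since it equals some $P_{t'}$.

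The main obstacle is the joint-distribution verification in the second step: one must be careful that the \emph{inductive} definition of the coalescing walk in \cref{sec:mixing_time_definition} (which breaks symmetry by ordering the walkers and setting $\Co_t(a)\defeq\Co_t(b)$ whenever $\Co_{t-1}(a)=\Co_{t-1}(b)$ for some $b<a$) genuinely produces the same law as the random-map construction via the $Y_t(u)$'s. This is a routine check — both constructions yield the same conditional transitions given the set of occupied positions — but writing it cleanly requires stating an explicit coupling and invoking the fact that, conditionally on the partition of walkers into coincident groups, each group performs one step of $Q^{(j)}_s$ as a single walker while distinct groups evolve independently.
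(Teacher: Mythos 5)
Your proposal is correct and follows essentially the same route as the paper: both construct $\mathcal{Q}^{(j)}$ by time-reversal ($Q^{(j)}_s=P_{j-s+1}$ for $s\leq j$) and identify the backward trace of the voting selections with a coalescing walk; the paper merely phrases the coupling algebraically, as a measure-preserving bijection $(S_1,\dots,S_j)\mapsto(S_j,\dots,S_1)$ on sequences of selection matrices, where your random maps $Y_t(u)$ are exactly those matrices. The joint-law check you flag at the end is handled implicitly in the paper by working with the selection-matrix representation throughout, and is indeed routine.
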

Indeed, if $\Pseq=(P)_{t\geq 1}$ is a sequence of
a static transition matrix,
    then $\mathcal{Q}^{(i)}=\Pseq$
        for all $i\geq 0$,
        implying \cref{prop:duality_static}.

The proof of the duality theorem in the static setting (\cref{prop:duality_static})
    is obtained by constructing
    a coupling of the pull voting and the
    coalescing random walk
    with equal consensus and coalescing times.
Our proof of \cref{prop:duality_dynamic}
    is based on essentially the same argument.
In \cref{sec:exponential_lower_bound_pull}, we present a
    sequence of graphs on which
    the pull voting according to $\LSimP$ on it has an exponential consensus time as follows:
\begin{proposition} \label{prop:exponential_lower_bound_pull}
    There is a sequence $(G_t)_{t\geq 1}$ of graphs
    on which the pull voting
    according to $\Pseq=(\LSimP(G_t))_{t\geq 1}$
    over opinion set $\Sigma=\{0,1\}$
    satisfies $\tcons=2^{\Omega(n)}$.
\end{proposition}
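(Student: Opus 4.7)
The plan is to use the double-star graph sequence $(G_t)_{t\ge 1}$ constructed in the proof of \cref{prop:exponential_lower_meet}, combined with the elementary pathwise coupling between pull voting and backward random walks.

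I would fix $\Sigma=\{0,1\}$ and choose initial opinions $\sigma_0(u_i)=0$ for each $u_i\in U$ and $\sigma_0(w_j)=1$ for each $w_j\in W$. Tracing the ancestry of each opinion in the pull voting according to $\Pseq=(\LSimP(G_t))_{t\ge 1}$ yields, for every vertex $v$ and every target time $j$, the pathwise identity $\sigma_j(v) = \sigma_0(\hat{X}^{(v)}_j)$, where $(\hat{X}^{(v)}_s)_{0\le s\le j}$ is a backward random walk with $\hat{X}^{(v)}_0=v$ whose transition at backward step $s$ uses $\LSimP(G_{j-s+1})$. This coupling is elementary and works in the dynamic binary setting because the pull choices are independent across vertices. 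Consensus at time $j$ will then force $\sigma_j(u_{m-1}) = \sigma_j(w_{m-1})$, equivalently that $\hat{X}^{(u_{m-1})}_j$ and $\hat{X}^{(w_{m-1})}_j$ lie on the same side of the bipartition $U\sqcup W$, which in turn requires at least one of the two backward walks to traverse the unique $U$--$W$ bridge of some snapshot $G_{j-s+1}$ during a backward step $s\le j$.

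The core step will be to prove an exponentially small upper bound on the bridge-crossing probability of each individual backward walk. By reversibility of $\LSimP(G_t)$, the marginal law of each backward walk is that of a lazy simple random walk on the reversed graph sequence $G_j,G_{j-1},\ldots,G_1$, which has exactly the same rotating-double-star shape as the original (driven by $\eta^{-1}$ in place of $\eta$). The self-loop bottleneck from the proof of \cref{prop:exponential_lower_meet} then transfers by the $U\leftrightarrow W$ symmetry and the invariance of the construction under $\eta\mapsto\eta^{-1}$: to be at the current bridge vertex at any backward step $s\ge 2$, the walker must sit through $m-1$ consecutive self-loops, an event of probability at most $2^{-(m-1)}$, and the bridge edge is then taken with probability only $1/(2m)$; the only other contribution comes from the walker already sitting at the bridge vertex of $G_j$ at backward step $1$, which adds at most $1/(2m)$ and only for $j\equiv 0\pmod m$. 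Summing over the at most $j$ backward steps and applying a union bound over the two walks will give
\begin{equation*}
    \Pr[\taucons(\Pseq)\le j] \;\le\; \frac{1}{m} + \frac{j\cdot 2^{-(m-1)}}{m}.
\end{equation*}
Choosing $j = 2^{m/3}$ makes the right-hand side $O(1/m) = o(1)$, so $\E[\taucons(\Pseq)] \ge j\cdot\Pr[\taucons(\Pseq) > j] = \Omega(2^{m/3}) = 2^{\Omega(n)}$ since $n=2m$.

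The main obstacle will be the careful transfer of the $m-1$-self-loop bottleneck from the forward walk in \cref{prop:exponential_lower_meet} to a backward walk running on the reversed sequence. This reduces to (i) verifying that the structural argument of \cref{prop:exponential_lower_meet} really uses only the abstract rotating-star-plus-bridge shape, which is preserved both under $\eta\mapsto\eta^{-1}$ and under the $U\leftrightarrow W$ symmetry, and (ii) noting that although the two backward walks $\hat{X}^{(u_{m-1})}$ and $\hat{X}^{(w_{m-1})}$ are jointly coupled through the shared random pulls, only the marginal law of each is needed for the union bound, and that marginal is a genuine single-walker random walk on the reversed graph sequence. No full consensus--coalescing duality (such as the one in \cref{prop:duality_dynamic}) is required.
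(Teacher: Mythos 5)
Your argument is correct in outline but follows a genuinely different route from the paper's. You go through the single-lineage ancestral coupling $\sigma_j(v)=\sigma_0(\hat{X}^{(v)}_j)$, reduce consensus to the event that one of two backward random walks on the reversed graph sequence crosses the $U$--$W$ bridge, and then recycle the self-loop bottleneck from \cref{prop:exponential_lower_meet}. The paper instead argues directly on the forward voting process: by monotonicity it freezes $W$ at opinion $1$ and lets only $U$ update, re-parametrizes by the permutation $\eta$ so that the graph is static while the opinions shuffle, and observes that for opinion $1$ to reach $u_0$ a cascade of $m-1$ self-loop choices (one by each $u_i$ at round $\tau'-i$) is required, an event of probability $2^{-m+1}$. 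Your route buys reusability --- it reduces the voting lower bound to the already-established random-walk bottleneck and needs only the single-lineage identity rather than the full consensus--coalescing relation of \cref{prop:duality_dynamic} --- while the paper's route is shorter and avoids setting up the backward walk. One detail in your quantitative step needs repair: the claim that being at the current center at any backward step $s\ge 2$ costs probability $2^{-(m-1)}$ is not literally true, both because during the first $m$ backward steps the rotating center reaches the walker's starting leaf after only $s-1$ self-loops, and because a walker already at the center can ``ride the rotation'' by jumping (with probability $1/(2m)$) to the leaf that becomes the next center, reaching the new center with no self-loops at all. Accounting for both effects only changes constants: the resulting recursion still gives $\sum_{s\le j}\Pr[\text{at center at time }s-1]=O(1)+O(j2^{-m})$, so your bound $\Pr[\taucons\le j]=O(1/m)+O(j2^{-m}/m)$ and the conclusion $\tcons=2^{\Omega(n)}$ survive.
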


\subsection{Consensus--Coalescing relation}
We prove \cref{prop:duality_dynamic}.
The proof is essentially based on
    the notion of linear voter model of \cite{CR16}.

\begin{proof}[Proof of \cref{prop:duality_dynamic}.]
Let $P\in[0,1]^{V\times V}$ be a
    transition matrix and
    $\mathcal{S}$ be the set of all binary $V\times V$ matrices such that each row contains
    exactly one $1$.
For each matrix $S\in\mathcal{S}$,
we define a probability distribution $\mu_P$ over $\mathcal{S}$ by
\begin{align*}
    \mu_P(S)=\prod_{(i,j)\in V\times V: S_{i,j}=1} P_{i,j}
\end{align*}
for each $S\in\mathcal{S}$.
We interpret $S\in \mathcal{S}$ as the
    list of selections at a specific round of the pull voting:
    Specifically, $s_{i,j}=1$ if and only if $i$ selects $j$ at the pull voting.
Then, $\mu_P(\cdot)$ can be seen as the probability distribution
    over the set of all possible selection lists during the pull voting according to $P$.

Given a sequence $S_1,\dots,S_i\in \mathcal{S}$ of $i$ matrices,
    we can simulate the pull voting for $i$ rounds as follows.
Let $y_0\in\Sigma^V$ denote the initial opinion configuration
    where each vertex has a distinct opinion.
Then, $y_i = S_i y_{i-1} = S_i S_{i-1} \dots S_1 y_0 = \prod_{t=1}^i S_{i-t+1}y_0$.
We say that an opinion vector $y\in\Sigma^V$ \emph{is in consensus} if $y=\sigma\mathbbm{1}$ holds for some $\sigma\in \Sigma$.
For fixed $i\in\Nat$, let
\begin{align*}
    \mathcal{S}^{(i)}_{\mathrm{cons}} =
    \{(S_1,\dots,S_i)\in \mathcal{S}^i:\text{$y_i$ is in consensus}\}.
\end{align*}
Here, note that, if $y_{i-1}$ is in consensus,
    then so does $y_i=S_iy_{i-1}$.
Then, we have
\begin{align}
    \Pr[\taucons(\Pseq)\leq i] = \sum_{(S_1,\dots,S_i)\in\mathcal{S}^{(i)}_{\mathrm{cons}}} \prod_{t=1}^i \mu_{P_t}(S_t). \label{eq:taucons}
\end{align}

We show that, for every $i$, there
is a sequence $\mathcal{Q}=\mathcal{Q}^{(i)}=(Q^{(i)}_t)_{t\geq 0}$ of transition matrices
such that
the right hand side of \cref{eq:taucons} is equal to $\Pr[\taucoal(\mathcal{Q}^{(i)})\leq i]$.
Consider the sequence $\mathcal{Q}^{(i)}$
    defined by
    \begin{align*}
        Q^{(i)}_t = \begin{cases}
        P_{i-t+1} & \text{if $t\leq i$},\\
        P_0 & \text{if $t>i$}.
        \end{cases}
    \end{align*}
Note that, if $\Pseq$ is reversible and has a time-invariant stationary distribution $\pi$, then so does $\mathcal{Q}^{(i)}$ for every $i$.

Fix $i$ and consider the coalescing random walk
    according to $\mathcal{Q}^{(i)}$.
We call a vector $c\in\mathbb{Z}_{\geq 0}^V$ satisfying $\sum_{v\in V}c_i=|V|$ a \emph{walker configuration vector}.
A walker configuration vector can be
    interpreted as the vector
    denoting the
    number of walkers on vertices, i.e., $c_v$ is the number of walkers on $v$.
Given $S'_1,\dots,S'_i\in\mathcal{S}$,
    we can
    simulate the coalescing random walk for $i$ rounds as follows:
Let $c_0=\mathbbm{1}^{\top}$ (initially, each vertices has exactly one walker).
For $1\leq j\leq i$, let $c_j=c_{j-1}S'_j = c_0\prod_{k=1}^j S'_k$.
Intuitively speaking, the matrix $S'_j$ denotes the transition result: $(S'_j)_{u,v}=1$ if and only if $u$ sends \emph{all} walkers on it to $v$ at the $j$-th round.
A vertex $v\in V$ at round $i$ has a walker
    if and only if $(c_i)_v>0$.
    
We say that a walker configuration vector $c$ \emph{is in coalesce} if $c=|V|e_v$ for some $v\in V$, where $e_v\in\{0,1\}^V$ is the binary indicator vector for a vertex $v$ (i.e., $(e_v)_u=1$ if and only if $v=u$).
Note that, if $c_{j-1}$ is in coalesce,
then so does $c_j=c_{j-1}S'_j$.
For a coalescing random walk is according to a transition matrix $Q$, a transition result $S'\in\mathcal{S}$ occurs with probability $\mu_Q(S')$.
Let
\begin{align*}
    \mathcal{S}^{(i)}_{\mathrm{coal}}=\{(S'_1,\dots,S'_i)\in\mathcal{S}^i:\text{$c_i$ is coalescing}\}.
\end{align*}
Then, we have
\begin{align}
    \Pr[\taucoal(\mathcal{Q}^{(i)})\leq i] = \sum_{(S'_1,\dots,S'_i)\in\mathcal{S}^{(i)}_{\mathrm{coal}}} \prod_{t=1}^i\mu_{Q^{(i)}_t}(S'_t) 
    = \sum_{(S'_1,\dots,S'_i)\in\mathcal{S}^{(i)}_{\mathrm{coal}}}
    \prod_{t=1}^i \mu_{P_{t}}(S'_{i-t+1}). \label{eq:taucoal}
\end{align}

We compare \cref{eq:taucons,eq:taucoal}.
Indeed, it holds that  $(S_1,\dots,S_i)\in\mathcal{S}^{(i)}_{\mathrm{cons}}$ if and only if $(S_i,\dots,S_1)\in\mathcal{S}^{(i)}_{\mathrm{coal}}$.
To see this,
suppose $(S_1,\dots,S_i)\in\mathcal{S}^{(i)}_{\mathrm{cons}}$.
Then, the opinion configuration vector $y_i$ at the $i$-th round satisfies $y_i = S_i\dots S_1y_0=\sigma_w\mathbbm{1}$ for some $w\in V$,
    where $\sigma_w\in \Sigma$ is the opinion
    that $w\in V$ initially holds.
Since this relation holds regardless of the
    labels $\Sigma$ of opinions, we have
    $S_i\dots S_1 e_w = \mathbbm{1}$
    and $S_i\dots S_1 e_v = \mathbf{0}$ for $v\neq w$, where $\mathbf{0}$ denotes the all-zero vector.
Therefore,
\begin{align*}
    \mathbbm{1}^\top S_iS_{i-1}\dots S_1 e_v
    &= \begin{cases}
    |V| & \text{if $v=w$},\\
    0 & \text{otherwise}
    \end{cases}
\end{align*}
and thus $\mathbbm{1}^\top \prod_{t=1}^i S_{i-t+1} = |V| e_w^\top$.
In other words, $(S'_1,\dots,S'_i)\defeq (S_i,\dots,S_1)\in \mathcal{S}^{(i)}_{\mathrm{coal}}$.
The converse direction (i.e., $(S'_1,\dots,S'_i)\in\mathcal{S}^{(i)}_{\mathrm{coal}}$ implies $(S'_i,\dots,S'_1)\in\mathcal{S}^{(i)}_{\mathrm{cons}}$) can be checked similarly:
If $(S'_1,\dots,S'_i)\in\mathcal{S}^{(i)}_{\mathrm{coal}}$, then $\mathbbm{1}^\top S'_1\dots S'_i=|V|e_w^\top$ for some $w\in V$.
Then, for $(S_1,\dots,S_i)\defeq (S'_i,\dots,S'_1)$, we have
$\mathbbm{1}^\top S_i\cdots S_1e_w=\mathbbm{1}^\top S'_1\cdots S'_i e_w = |V|$.
Since $\mathbf{0}\leq S_i\cdots S_1 e_w \leq S_i\cdots S_1\mathbbm{1}=\mathbbm{1}$ (here, we write $(x_1,\dots,x_n)\leq (y_1,\dots,y_n)$ if $x_i\leq y_i$ for all $i\in[n]$), we have $S_i\cdots S_1e_w = \mathbbm{1}$.
This implies $(S_1,\dots,S_i)\in\mathcal{S}^{(i)}_{\mathrm{cons}}$.

The mapping $\phi\colon(S_1,\dots,S_i)\mapsto (S_i,\dots,S_1)$ is a bijection between $\mathcal{S}^{(i)}_{\mathrm{cons}}$ and $\mathcal{S}^{(i)}_{\mathrm{coal}}$ preserving the product measure $\prod_{t=1}^i \mu_P(S_t)$.
This implies that
    \cref{eq:taucons,eq:taucoal} are equal,
completing the proof of \cref{prop:duality_dynamic}.
\end{proof}
\subsection{Consensus time}
\begin{proof}[Proof of \cref{thm:consensus_time}]
If $\Pseq$ is irreducible, lazy, and reversible with respect to $\pi\in(0,1]^V$, so does $\mathcal{Q}^{(i)}$ for all $i\geq 0$.
Therefore, from \cref{thm:coalescing_time_bound},
 $\E[\taucoal(\mathcal{Q}^{(i)}]\leq T$ for all $i$,
    where $T=C\cdot\thitmax(\Pseq)$ for some absolute constant $C>0$.
Then, from \cref{prop:duality_dynamic}, we have
\begin{align*}
    \Pr[\taucons(\Pseq)\geq 2T] = \Pr[\taucoal(\mathcal{Q}^{(2T)})\geq 2T] \leq \frac{\E[\taucoal(\mathcal{Q}^{(2T)})]}{2T} \leq \frac{1}{2}. 
\end{align*}
Here, the initial opinion configuration is the worst one that all vertices have $n$ distinct opinions.
Therefore, for any fixed $t\geq 0$,
    $\Pr[\taucons((P_{t+s})_{s\geq 0})>2T]\leq 1/2$ holds for any initial opinion configuration.
From \cref{cor:key_cor} we have $\E[\taucons(\Pseq)]\leq 4T=4C\thitmax(\Pseq)=O(\thitmax(\Pseq))$.
\end{proof}

\subsection{Winning probability}
We prove \cref{prop:winning_probability_dynamic}.
Our proof is based on the 
    voting martingale argument
    that was used to obtain
    the winning probability result
    for the static graph setting (cf.~\cite{HP01,CR16}).
We just verify that the argument
    works for our dynamic graph setting.
For completeness, we write the proof
in this subsection.

\begin{proof}[Proof of \cref{prop:winning_probability_dynamic}.]
We first consider the special case of $\Sigma=\{0,1\}$ and then go on to the general case $\Sigma\subseteq\{0,\dots,n-1\}$.

\paragraph*{The case of $\Sigma=\{0,1\}$.}
Let $(Y_t)_{t\geq 0}$
    be the pull voting according to
    $\Pseq=(P_t)_{t\geq 1}$
    with a time-homogeneous
    stationary distribution $\pi$.
Note that $Y_t\in\{0,1\}^V$.
In this proof, we promise that
    $Y_t$ is an $n\times 1$ vector
    and $\pi\in[0,1]^V$ is a $1\times n$ vector.
Let $\pi(Y_t)= \sum_{u\in V}\pi(u)Y_t(u)$.
We first claim that $(\pi(Y_t))_{t\geq 0}$
is a martingale with respect to $Y_t$.
To see this, observe
\begin{align*}
    \E[\pi(Y_{t+1})|Y_t] &= \sum_{u\in V}\pi(u)\Pr[Y_{t+1}(u)=1 | Y_t] 
    = \sum_{u\in V}\pi(u)\sum_{w\in V}(P_t)(u,w)Y_t(w) \\
    &= \sum_{w\in V}(\pi P_t)(w) Y_t(w) 
    = \sum_{w\in V} \pi(w)Y_t(w) 
    = \pi(Y_t).
\end{align*}
Since $\pi(Y_t)$ are bounded,
we can apply the Optimal Stopping Theorem
and obtain $\E[\pi(Y_{\taucons})]=\pi(Y_0)$.
Note that, since $Y_{\taucons}$ is either all-zero or all-one, we have
\begin{align*}
    \E[\pi(Y_{\taucons})] &= \pi(\mathbbm{1})\Pr[Y_{\taucons}=\mathbbm{1}] = \Pr[Y_{\taucons}=\mathbbm{1}].
\end{align*}
In other words,
the probability that the opinion $1$ wins is equal to $\pi(Y_0)=\sum_{v\in V:Y_0(v)=1}\pi(v)$.

\paragraph*{General case.}
We reduce the general case to the binary opinion case by
    regarding $\sigma=1$ for a fixed opinion $\sigma\in\Sigma$ and all other opinions are zero.
Then we obtain the claim from the argument for the binary opinion case.
\end{proof}

\subsection{Exponential consensus time} \label{sec:exponential_lower_bound_pull}
In this subsection, we prove \cref{prop:exponential_lower_bound_pull}.
Consider the graph $G=(U\cup W,E)$ and permutation $\eta\colon V(G)\to V(G)$ defined in \cref{sec:meeting_time_lower_bound}.
Suppose vertices in $U$ have opinion $0$ and that in $W$ have opinion $1$ initially.
We claim that the sequence $(G_t)_{t\geq 1}$ given by $G_1=G$ and $G_{t}=\eta(G_{t-1})$ with the initial opinion configuration above has the exponential consensus time.

By the monotonicity of the pull voting,
we consider the following setting.
Suppose only vertices in $U=\{u_0,\dots,u_{m-1}\}$ perform the pull voting and vertices in $W$ always have opinion $1$.
Let $\tau$ be the time to reach the opinion configuration where all vertices in $U$ have opinion $1$.
It suffices to prove $\E[\tau]=2^{\Omega(m)}$, where $|U|=m$.

Since the graph dynamics is given by
    the iteration of applying the permutation $\eta$, it is convenient to consider
    the following equivalent process:
At the $t$-th round, vertices perform the one-round pull voting and then
opinions are shuffled according to $\eta$.
That is, if $Y_{t-1}\in\{0,1\}^U$ denote the
    opinion configuration at the beginning of the $t$-th round,
    we first update $Y_{t-1}$ by the pull voting to obtain $Z_t\in\{0,1\}^U$ and then set $Y_{t}(\eta(u))=Z_t(u)$ for every $u\in U$.
Note that the permutation $\eta$
    defined in \cref{sec:meeting_time_lower_bound}
    satisfies $\eta(U)=U$.
We consider the sequence $(Y_t)_{t\geq 0}$ described above, where $Y_0$ is the all-zero vector.
Note that the process agrees with the opinion $1$ at the last of the $t$-th round if and only if $Y_t(u)=1$ for all $u\in U$.

Let $\tau'=\inf\{t\geq 0\colon Y_t(0)=1\}$.
Then, for any $i=1,\dots,m-1$, the vertex $u_i$ must choose the self-loop in the pull voting procedure at the $(\tau'-i)$-th round to keep $u_i$'s opinion $1$
(otherwise, $u_i$ selects $u_0$, who has opinion $0$).
This happens with probability $2^{-m+1}$
and therefore we have $\E[\tau]\geq \E[\tau']=2^{\Omega(m)}$.
This completes the proof of \cref{prop:exponential_lower_bound_pull}.

\section{Metropolis walk on edge-Markovian graph}
\label{sec:edge_Markovian}
In this section,
    we prove \cref{thm:LMRW_EM}.
Let $V$ be a vertex set with $|V|=n$ and $p,q\in [0,1]$ be two parameters.
Let $\left((Y_t(e))_{t\geq 0}\right)_{e\in \binom{V}{2}}$ be $\binom{n}{2}$ independent Markov chains, 
where each $(Y_t(e))_{t\geq 0}$ is the Markov chain\footnote{Formally, $\Pr[Y_t(e)=b_t|Y_0(e)=b_0,\ldots,Y_{t-1}(e)=b_{t-1}]=\Pr[Y_t(e)=b_t|Y_{t-1}(e)=b_{t-1}]=M(b_{t-1},b_t)$ holds for all $t\geq 1$, $(b_0,\ldots b_t)\in \{0,1\}^{t+1}$ and $e\in \binom{V}{2}$.} with the state space $\{0, 1\}$ and the transition matrix $M=M_{p,q}\defeq \left(\begin{array}{cc}1-p & p \\ q & 1-q\end{array}\right)$. 
Henceforth, write $Y_t=(Y_t(e))_{e\in \binom{V}{2}}$ for convenience.
The edge-Markovian graph $\mathcal{G}(n,p,q)$ is a sequence of random graphs $(G_t)_{t\geq 0}=((V,E_t))_{t\geq 0}$, where $E_t\defeq \left\{e\in \binom{V}{2} :Y_t(e)=1\right\}$ for all $t\geq 0$.
We show the following lemma in this section.

\begin{lemma}
\label{lem:EMexpander}
Suppose that $0<p+q\leq 1$ and $\frac{p}{p+q}\geq 32(c+1)\frac{\log n}{n}$ hold for an arbitrary $c>0$.
Let $\mathcal{G}(n,p,q)=(G_t)_{t\geq 0}$ be the edge-Markovian graph.
Let $I=I(p,q)\defeq \left \lceil \frac{\max\{1,\log(q/p)\}}{p+q}\right \rceil$ and $J\leq n^{c}/2$. 
For any $\ell\geq 0$ and $0\leq i\leq \ell$, let $S(\ell,i)\defeq\left(\frac{\ell(\ell-1)}{2}+i\right)(I+J)=\left(\sum_{j=1}^{\ell-1} j +i\right)(I+J)$.
%
Let $C=8192$.
Then, for any $Y_0=b\in \{0,1\}^{\binom{V}{2}}$, 
\begin{align*}
\Pr\left[
\bigwedge_{\ell=1}^\infty\bigvee_{i=1}^{\ell}
\left\{\trelmax\left(\left(\LMWP(G_t)\right)_{t=S(\ell,i-1)+I}^{S(\ell,i)}\right)\leq C\right\}\right]\geq 1-\frac{1}{n}.
\end{align*}
\end{lemma}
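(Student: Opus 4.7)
The plan is to combine three ingredients: \textbf{(a)} each of the $\binom{n}{2}$ independent edge chains $M$ has spectral gap $p+q$, so after the $I$-buffer every edge marginal is within a constant factor of the stationary value $\bar p\defeq p/(p+q)$, uniformly in the initial state $b\in\{0,1\}^{\binom{V}{2}}$; \textbf{(b)} at density $\bar p\ge 32(c+1)\log n/n$, a Bernoulli random graph $G(n,\bar p)$ is a strong expander for the lazy Metropolis walk with polynomially small failure probability; \textbf{(c)} the $I$-buffers separating consecutive windows provide enough decorrelation that, for fixed $\ell$, the $\ell$ windows are essentially independent via the strong Markov property, and the overall $1/n$ bound emerges as a geometric sum in $\ell$. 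Quantitatively, $M$ has eigenvalues $\{1,1-(p+q)\}$, so starting from $b\in\{0,1\}$ one has $\Pr[Y_t(e)=1]=\bar p+(b-\bar p)(1-p-q)^t$. The choice $I=\lceil\max\{1,\log(q/p)\}/(p+q)\rceil$ is tuned so that $(1-p-q)^I\le\min\{e^{-1},p/q\}$, and hence $\Pr[Y_t(e)=1]\in[(1-e^{-1})\bar p,\,2\bar p]$ for every $t\ge I$ and every starting state $b$. Since the edge chains are independent, at each such $t$ the graph $G_t$ is stochastically sandwiched between $G(n,(1-e^{-1})\bar p)$ and $G(n,2\bar p)$.

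\textbf{Expander inside a window.}
Under the density hypothesis $\bar p\ge 32(c+1)\log n/n$, a standard Chernoff bound shows that every vertex of $G_t$ has degree in $[n\bar p/4,\,4n\bar p]$ with probability at least $1-n^{-2(c+1)}$, and standard spectral estimates for Bernoulli random graphs in the regime $\bar p\gg\log n/n$ give that the second eigenvalue of the normalized adjacency matrix of $G_t$ is at most $1/2$ with the same probability. The degree concentration allows one to sandwich $\LMWP(G_t)$ between constant multiples of $\LSimP(G_t)$, transferring the adjacency spectral bound to the Metropolis walk; choosing the constant $C=8192$ large enough to absorb the sandwich factors yields $\trel(\LMWP(G_t))\le C$ at each fixed $t\ge I$ with probability at least $1-n^{-2(c+1)}$. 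A union bound over the at most $J+1\le n^c/2+1$ time steps in a single window $[S(\ell,i-1)+I,\,S(\ell,i)]$ therefore gives a single-window failure probability $q^\star\le n^c\cdot n^{-2(c+1)}\le n^{-c-2}$.

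\textbf{Combining windows and main obstacle.}
Fix $\ell\ge 1$ and condition on the edge-state history up to time $S(\ell,i-1)$. By the strong Markov property and the fact that the guarantees of the two previous steps hold uniformly in the initial edge configuration, $\Pr[\text{window }i\text{ is bad}\mid\mathcal{F}_{S(\ell,i-1)}]\le q^\star$ almost surely, so iterating yields $\Pr\bigl[\bigcap_{i=1}^\ell\{\text{window }i\text{ bad}\}\bigr]\le (q^\star)^\ell$. Applying the union bound over $\ell\ge 1$ gives a total failure probability of at most $\sum_{\ell\ge 1}(q^\star)^\ell\le 2q^\star\le 2n^{-c-2}\le 1/n$ for all $c\ge 1$ and $n\ge 2$, as required. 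The main technical obstacle lies in the expander step: converting a bound on the second eigenvalue of the normalized adjacency of a near-regular Bernoulli random graph into a bound on $\trel(\LMWP(G_t))$, whose edge weights are normalized by $\max(\deg u,\deg v)$ rather than by a common degree, requires a genuinely two-sided degree concentration tight enough to absorb the distortion into the absolute constant $C$; the factor $32(c+1)$ in the density hypothesis is calibrated precisely to provide this concentration simultaneously with the polynomially small per-time failure rate needed to survive the union bound over the $|W|\le n^c/2$ time steps inside a window.
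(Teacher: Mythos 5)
Your overall architecture matches the paper's: mix each independent edge chain for $I$ steps so that every edge marginal lands in a constant-factor window around $\bar p = p/(p+q)$ uniformly in the initial configuration $b$ (the paper gets $[\bar p/2,\,2\bar p]$ via the explicit formula for $M^{I'}$), prove a per-time-step relaxation-time bound with polynomially small failure probability, union-bound over the $J\le n^c/2$ steps in a window, chain the $\ell$ windows of the $\ell$-th block by the Markov property, and sum the resulting geometric series over $\ell$. All of that is fine.

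The gap is in your expander step. After the $I$-buffer, $G_t$ is \emph{not} stochastically equivalent to a Bernoulli graph $G(n,\bar p')$: the edges are independent, but each edge's presence probability is either $M^{I'}(0,1)$ or $M^{I'}(1,1)$ depending on its initial state $b_e$, and these two values differ by up to a constant multiple of $\bar p$. Sandwiching $G_t$ between $G(n,(1-\mathrm{e}^{-1})\bar p)$ and $G(n,2\bar p)$ in the stochastic-domination order does not transfer spectral bounds, because the second eigenvalue of the normalized adjacency matrix is not monotone under edge addition. Worse, the claim that this eigenvalue is at most $1/2$ can actually fail for an adversarial $b$: if $b$ places the high-probability edges inside two halves $U,W$ of $V$ and the low-probability edges across, the expected adjacency matrix is a two-block matrix whose second normalized eigenvalue is a constant like $3/5$, and concentration keeps the realized graph near it. So ``standard spectral estimates for Bernoulli random graphs'' simply do not apply here, and this is the one step on which the whole lemma rests. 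The paper avoids this entirely by working with the \emph{conductance} of the Metropolis walk: the cut weight $C_t(S)=\sum_{u\in S,v\notin S}Y_t(\{u,v\})$ is a sum of independent indicators each with mean at least $\bar p/2$, so a Chernoff bound over all $S$ with $|S|\le n/2$ (costing the factor $32(c+1)$ in the density hypothesis) together with your max-degree bound gives $\min_S Q_t(S)/\pi(S)\ge 1/64$, and the Cheeger inequality then yields $\trel(\LMWP(G_t))\le 2\cdot 64^2=8192$. This only uses the one-sided lower bound on the edge probabilities and is insensitive to the inhomogeneity. You already have the degree concentration; replacing the adjacency-spectrum appeal with the cut-size Chernoff bound plus Cheeger closes the gap.
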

\begin{proof}
For $S\subseteq V$, let $Q_t(S)=\sum_{u\in S}\sum_{v\notin S}\pi(u)P_t(u,v)$ and $\pi(S)=\sum_{v\in S}\pi(v)$.
Let $C_t(S)=\sum_{u\in S}\sum_{v\notin S}Y_{t}(\{u,v\})$.
From \cref{def:Metropolis_walk}, we have
\begin{align}
    \frac{Q_t(S)}{\pi(S)}
    &=\frac{n}{|S|}\sum_{u\in S}\sum_{v\notin S}\frac{1}{n}\frac{Y_t(\{u,v\})}{2\max\{\deg(G_t,u),\deg(G_t,v)\}}
    \geq \frac{C_t(S)}{2|S|d_{\max}(G_t)}.
    \label{eq:cheegerMW}
\end{align}
Now, it is easy to check that 
\begin{align*}
M^t= \frac{1}{p+q}\left[\left(\begin{array}{cc}q & p \\ q & p\end{array}\right)+(1-p-q)^t\left(\begin{array}{cc}p & -p \\ -q & q\end{array}\right)\right]
\end{align*}
holds for any $t\geq 1$.
Hence, for any $e\in \binom{V}{2}$, $s\geq 0$, $I'\geq I=\left \lceil \frac{\max\{1,\log(q/p)\}}{p+q}\right \rceil$ and $b\in \{0,1\}$, 
$\Pr[Y_{s+I'}(e)=1|Y_s(e)=b]=M^{I'}(b,1)$ and 
\begin{align}
\frac{p}{2(p+q)}
\leq \frac{p}{p+q}\left(1-(1-p-q)^{I'}\right)
\leq M^{I'}(b,1)
\leq \frac{p}{p+q}\left(1+(1-p-q)^{I'}\frac{q}{p}\right)\leq \frac{2p}{p+q}.
\label{eq:independence}
\end{align}
%
Using \cref{eq:independence} and the Chernoff inequality (\cref{lem:Chernoff}), 
for any  $s\geq 0$, $I'\geq I$ and $b\in \{0,1\}^{\binom{V}{2}}$, we have
\begin{align}
    &\Pr\left[\bigvee_{S\subseteq V:1\leq |S|\leq n/2}\left\{C_{s+I'}(S)\leq \frac{1}{2}\frac{|S|np}{4(p+q)}\right\}\middle|Y_s=b\right]\nonumber \\
    &\leq \sum_{S\subseteq V:1\leq |S|\leq n/2}\exp\left(-\frac{|S|np}{32(p+q)}\right)
    \leq \sum_{S\subseteq V:1\leq |S|\leq n/2}\exp\left(-(c+1)|S|\log n\right)\nonumber \\
    &=\sum_{1\leq x\leq n/2}\binom{n}{x}\frac{1}{n^{(c+1)x}}
    \leq \frac{1}{n^{c}-1}. \label{eq:probem_1}
\end{align}
Note that $\E\left[C_{s+I'}(S)\middle|Y_s=b\right]\geq |S|(n-|S|)\frac{p}{2(p+q)}\geq \frac{|S|np}{4(p+q)}$ for any $1\leq |S|\leq n/2$.
Furthermore, since $\E\left[\deg(G_{s+I'},v)\middle|Y_s=b\right]\leq n\frac{2p}{p+q}$, we have
\begin{align}
    &\Pr\left[\bigvee_{v\in V}\left\{\deg(G_t,v)\geq 2 n\frac{2p}{p+q} \right\}\middle|Y_s=b\right]
    \leq \sum_{v\in V}\exp\left(-\frac{2np}{3(p+q)}\right)\leq 
    \frac{1}{n^{64(c+1)/3-1}}.
    \label{eq:probem_2}
\end{align}
Combining \cref{eq:cheegerMW,eq:probem_1,eq:probem_2}, it holds with probability at least $1-\frac{2}{n^{c}}$ that
\begin{align*}
\min_{S\subseteq V:0<\pi(S)\leq 1/2}\frac{Q_{s+I'}(S)}{\pi(S)}
&
    \geq \frac{|S|np}{8(p+q)}\frac{p+q}{2|S|4np}=\frac{1}{64}.
\end{align*}
Hence, applying \cref{lem:Cheeger} yields the following: 
For any $s\geq 0$, $I'\geq I$ and $b\in \{0,1\}^{\binom{V}{2}}$,
\begin{align}
\Pr\left[\trel\left(\LMWP(G_{s+I'})\right)\leq C\middle| Y_s=b\right]
\geq 1-\frac{2}{n^c}.
\label{eq:Markovian-1}
\end{align}
Here, $C=2\cdotp64^2=8192$.
\begin{figure}[t]
    \centering
    \includegraphics[width=\hsize]{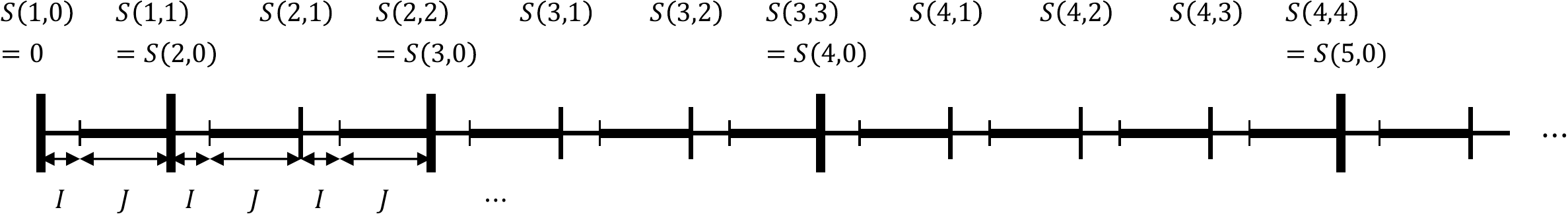}
    \caption{Parameters defined in \cref{lem:EMexpander}.}
    \label{fig:edgemarkovian}
\end{figure}
From \cref{eq:Markovian-1} and the union bound, 
\begin{align}
\Pr\left[\trelmax\left((\LMWP(G_t))_{t=s+I}^{s+I+J}\right)>C\middle| Y_s=b\right]
=\Pr\left[\bigvee_{t=s+I}^{s+I+J}\left\{\trel\left(\LMWP(G_t)\right)>C\right\}\middle| Y_s=b\right]
\leq \frac{2J}{n^c}\leq \frac{1}{n}
\label{eq:Markovian-2}
\end{align}
holds for any $s\geq 0$.
%
Let $\mathcal{E}_i^{(\ell)}$ be the event that $\trelmax\left((\LMWP(G_t))_{t=S(\ell,i-1)+I}^{S(\ell,i)}\right)>C$.
Fix $Y_0=b$.
From \cref{eq:Markovian-2}, we have
\begin{align*}
\Pr\left[ \bigwedge_{i=1}^{\ell} \mathcal{E}_i^{(\ell)} \right]
&=\sum_{b'\in \{0,1\}^{\binom{V}{2}}}\Pr\left[ \mathcal{E}_\ell^{(\ell)}\middle| \bigwedge_{i=1}^{\ell-1} \mathcal{E}_i^{(\ell)}, Y_{S(\ell,\ell-1)}=b'\right]\Pr\left[ \bigwedge_{i=1}^{\ell-1} \mathcal{E}_i^{(\ell)},Y_{S(\ell,\ell-1)}=b'\right]\\
&\leq \frac{1}{n}\Pr\left[ \bigwedge_{i=1}^{\ell-1} \mathcal{E}_i^{(\ell)}\right]
\leq \cdots \leq \frac{1}{n^\ell}.
\end{align*}
Hence, we obtain
\begin{align*}
\Pr\left[ \bigwedge_{\ell=1}^\infty \bigvee_{i=1}^\ell \overline{\mathcal{E}_\ell^{(i)}} \right]
&=1-\Pr\left[ \bigvee_{\ell=1}^\infty \bigwedge_{i=1}^\ell \mathcal{E}_\ell^{(i)} \right]
\geq 1-\sum_{\ell=1}^\infty\Pr\left[  \bigwedge_{i=1}^\ell \mathcal{E}_\ell^{(i)} \right]
\geq 1-\sum_{\ell=1}^\infty\frac{1}{n^\ell}.
\end{align*}
\end{proof}

\begin{proof}[Proof of the hitting time bound in \cref{thm:LMRW_EM}]
Let $J=\lceil 8192(2\log n+\log 2)+20\cdotp 8192 n/k\rceil$.
From \cref{lem:EMexpander,lem:hitting_probability_bound}, 
it holds for any $\ell\geq 1$ that
$\Pr[\tauhit^{(k)}(\Pseq)>S(\ell+1,0)|\tauhit^{(k)}(\Pseq)>S(\ell,0)]\leq 0.9$.
Applying \cref{lem:ex_stopping} yields
$\E[\tauhit^{(k)}(\Pseq)]\leq \sum_{\ell=0}^\infty\ell(I+J)(0.9)^\ell\leq 100(I+J)$.
\end{proof}
\begin{proof}[Proof of the cover time bound in \cref{thm:LMRW_EM}]
Let $J=\lceil 8192(2\log n+\log 2)+20\cdotp 8192 n\log n/k\rceil$.
From \cref{lem:EMexpander,lem:cover_probability_bound}, it holds for any $\ell\geq 1$ that
$\Pr[\taucov^{(k)}>S(\ell+1,0)|\taucov^{(k)}>S(\ell,0)]\leq 0.9$.
Applying \cref{lem:ex_stopping}, 
$\E[\taucov^{(k)}(\Pseq)]\leq \sum_{\ell=0}^\infty\ell(I+J)(0.9)^\ell\leq 100(I+J)$ holds.
\end{proof}
\begin{proof}[Proof of the coalescing time bound in \cref{thm:LMRW_EM}]
Let $J=\lceil 8192 Cn\rceil$.
From \cref{lem:EMexpander,lem:coalcons}, it holds for any $\ell\geq 1$ that
$\Pr[\taucoal(\Pseq)>S(\ell+1,0)|\taucoal(\Pseq)>S(\ell,0)]\leq 1-10^{-5}$ for any $\ell$.
Applying \cref{lem:ex_stopping}, 
$\E[\taucoal(\Pseq)]\leq \sum_{\ell=0}^\infty\ell(I+J)(1-10^{-5})^\ell\leq C'(I+J)$ holds for some absolute constant $C'>0$.
\end{proof}

\section{Conclusion}
We obtain new bounds of the mixing, hitting, and cover times of the random walk according to the sequence of irreducible, reversible, and lazy transition matrices that have the same stationary distribution.
These bounds generalize previous works for a lazy simple random walk or a $d_{\max}$-lazy walk and improve them in various cases.
Furthermore, we obtain the first bounds of the hitting and cover times of multiple random walks and the coalescing time on dynamic graphs.
Additionally, we bound the consensus time of the pull-voting on dynamic graphs.
Our results strengthen the observation that
    time inhomogeneous Markov chains
    with an invariant stationary distribution
    behaves almost identically to
    a time-homogeneous chain.
Specifically, we prove that if all $P_t$ have the same stationary distribution,
then $\thit\left((P_t)_{t\geq 1}\right)\leq O\left(\max_{t\geq 1}\thit(P_t)\right)$ holds  (\cref{thm:main_hitting_cover}).
It is natural to ask for the same relation for other parameters.
For example, does $\tcov(\mathcal{P})\leq O\left(\max_{t\geq 1}\tcov(P_t)\right)$ hold?

Most previous works on 
time-inhomogeneous random walks have
    translated techniques from time-homogeneous chains into 
    time-inhomogeneous ones:
In particular, several known upper bounds (including ours) are based on \emph{spectral} arguments,
which essentially requires the time-homogeneity of the stationary distribution.
On the other hand, known lower bounds such as the Sisyphus wheel
    are based on
    some \emph{combinatorial} arguments.
To understand time-inhomogeneous random walks with time-varying stationary distributions,
    it might be important to
    interpolate the spectral and combinatorial arguments.
    The simple random walk on
    any static connected graph
    has an $O(n^3)$ cover time.
This research question might be a possible future direction of the research of time-inhomogeneous chains.

\bibliographystyle{abbrv}
\bibliography{ref}

\begin{thebibliography}{10}

\bibitem{ACD15}
M.~Abdullah, C.~Cooper, and M.~Draief.
\newblock Speeding up cover time of sparse graphs using local knowledge.
\newblock {\em In Proceedings of the International Workshop on Combinatorial
  Algorithms (IWOCA)}, 1:1--12, 2015.

\bibitem{AF02}
D.~J. Aldous and J.~A. Fill.
\newblock Reversible {M}arkov chains and random walks on graphs.
\newblock https://www.stat.berkeley.edu/users/aldous/RWG/book.html.

\bibitem{Aleliunas79}
R.~Aleliunas, R.~M. Karp, R.~J. Lipton, L.~Lov{\'a}sz, and C.~Rackoff.
\newblock Random walks, universal traversal sequences, and the complexity of
  maze problems.
\newblock {\em In Proceedings of 20th Annual Symposium on Foundations of
  Computer Science (FOCS)}, pages 218--223, 1979.

\bibitem{AAKKLT11}
N.~Alon, C.~Avin, M.~Kouck{\'y}, G.~Kozma, Z.~Lotker, and M.~Tuttle.
\newblock Many random walks are faster than one.
\newblock {\em Combinatorics, Probability and Computing}, 20(4):2623--2641,
  2011.

\bibitem{AKL08}
C.~Avin, M.~Kouck{\'y}, and Z.~Lotker.
\newblock How to explore a fast-changing world (cover time of a simple random
  walk on evolving graphs).
\newblock {\em In Proceedings of the 35th International Colloquium on Automata,
  Languages, and Programming (ICALP)}, pages 121--132, 2008.

\bibitem{AKL18}
C.~Avin, M.~Kouck{\'y}, and Z.~Lotker.
\newblock Cover time and mixing time of random walks on dynamic graphs.
\newblock {\em Random Structures \& Algorithms}, 52(4):576--596, 2018.

\bibitem{BCF11}
H.~Baumann, P.~Crescenzi, and P.~Fraigniaud.
\newblock Parsimonious flooding in dynamic graphs.
\newblock {\em Distributed Computing}, 24:31--44, 2011.

\bibitem{BGKM16}
P.~Berenbrink, G.~Giakkoupis, A.-M. Kermarrec, and F.~Mallmann-Trenn.
\newblock Bounds on the voter model in dynamic networks.
\newblock {\em In Proceedings of the 43rd International Colloquium on Automata,
  Languages, and Programming (ICALP)}, 2016.

\bibitem{BW90}
G.~Brightwell and P.~Winkler.
\newblock Maximum hitting time for random walks on graphs.
\newblock {\em Random Structures \& Algorithms}, 1(3):263--276, 1990.

\bibitem{BKRU94}
A.~Broder, A.~Karlin, P.~Raghavan, and E.~Upfal.
\newblock Trading space for time in undirected {$s$}-{$t$} connectivity.
\newblock {\em SIAM Journal on Computing}, 23(2):324--334, 1994.

\bibitem{CSZ20}
L.~Cai, T.~Sauerwald, and L.~Zanetti.
\newblock Random walks on randomly evolving graphs.
\newblock {\em In Proceedings of the 27th International Colloquium on
  Structural Information and Communication Complexity (SIROCCO)}, 2020.

\bibitem{CCDFPS16}
A.~Clementi, P.~Crescenzi, C.~Doerr, P.~Fraigniaud, F.~Pasquale, and
  R.~Silvestri.
\newblock Rumor spreading in random evolving graphs.
\newblock {\em Random Structures \& Algorithms}, 48(2):290--312, 2016.

\bibitem{CMMPS10}
A.~Clementi, C.~Macci, A.~Monti, F.~Pasquale, and R.~Silvestri.
\newblock Flooding time of edge-markovian evolving graphs.
\newblock {\em SIAM Journal on Discrete Mathematics}, 24(4):1694--1712, 2010.

\bibitem{Cooper11}
C.~Cooper.
\newblock Random walks, interacting particles, dynamic networks: {R}andomness
  can be helpful.
\newblock {\em In Proceeedings of the 18th International Colloquium on
  Structural Information and Communication Complexity (SIROCCO)}, pages 1--14,
  2011.

\bibitem{CEOR13}
C.~Cooper, R.~Els{\"a}sser, H.~Ono, and T.~Radzik.
\newblock Coalescing random walks and voting on connected graphs.
\newblock {\em SIAM Journal on Discrete Mathematics}, 27(4):1748--1758, 2013.

\bibitem{CF03}
C.~Cooper and A.~Frieze.
\newblock Crawling on simple models of web graphs.
\newblock {\em Internet Mathematics}, 1(1):57--90, 2003.

\bibitem{CR16}
C.~Cooper and N.~Rivera.
\newblock The linear voting model.
\newblock {\em In Proceedings of the 43rd International Colloquium on Automata,
  Languages, and Programming (ICALP)}, 55(144):1--12, 2016.

\bibitem{DF18}
R.~David and U.~Feige.
\newblock Random walks with the minimum degree local rule have {$O(n^2)$} cover
  time.
\newblock {\em SIAM Journal on Computing}, 47(3):755--768, 2018.

\bibitem{DR14}
O.~Denysyuk and L.~Rodrigues.
\newblock Random walks on evolving graphs with recurring topologies.
\newblock {\em In Proceedings of the 28th International Symposium on
  Distributed Computing (DISC)}, pages 333--345, 2014.

\bibitem{Doerr18}
B.~Doerr and F.~Neumann.
\newblock {\em Theory of evolutionary computation: Recent developments in
  discrete optimization}.
\newblock Springer International Publishing, 2020.

\bibitem{ER09}
K.~Efremenko and O.~Reingold.
\newblock How well do random walks parallelize?
\newblock {\em in Proceedings of the 13th International Workshop on
  Randomization and Approximation Techniques in Computer Science (RANDOM)},
  pages 476--489, 2009.

\bibitem{ES11}
R.~Els{\"a}sser and T.~Sauerwald.
\newblock Tight bounds for the cover time of multiple random walks.
\newblock {\em Theoretical Computer Science}, 412(24):2623--2641, 2011.

\bibitem{Feige95up}
U.~Feige.
\newblock A tight upper bound on the cover time for random walks on graphs.
\newblock {\em Random Structures \& Algorithms}, 6(1):51--54, 1995.

\bibitem{Fill91}
J.~A. Fill.
\newblock Eigenvalue bounds on convergence to stationarity for nonreversible
  {M}arkov chains, with an application to the exclusion process.
\newblock {\em The Annals of Applied Probability}, pages 62--87, 1991.

\bibitem{Gri75}
D.~Griffeath.
\newblock Uniform coupling of non-homogeneous markov chains.
\newblock {\em Journal of Applied Probability}, 12(4):753--762, 1975.

\bibitem{HP01}
Y.~Hassin and D.~Peleg.
\newblock Distributed probabilistic polling and applications to proportionate
  agreement.
\newblock {\em Information and Computation}, 171(2):248--268, 2001.

\bibitem{HJ12}
R.~Horn and C.~Johnson.
\newblock {\em Matrix Analysis: Second Edition}.
\newblock Campridge University Press, 2012.

\bibitem{Ikeda03}
S.~Ikeda, I.~Kubo, N.~Okumoto, and M.~Yamashita.
\newblock Impact of local topological information on random walks on finite
  graphs.
\newblock {\em In Proceedings of the 30th International Colloquium on Automata,
  Languages and Programming (ICALP)}, pages 1054--1067, 2003.

\bibitem{Ikeda09}
S.~Ikeda, I.~Kubo, and M.~Yamashita.
\newblock The hitting and cover times of random walks on finite graphs using
  local degree information.
\newblock {\em Theoretical Computer Science}, 410(1):94--100, 2009.

\bibitem{KLNS89}
J.~Kahn, N.~Linial, N.~Nisan, and M.~Saks.
\newblock On the cover time of random walks on graphs.
\newblock {\em Journal of Theoretical Probability}, 2:121--128, 1989.

\bibitem{KMS19}
V.~Kanade, F.~Mallmann-Trenn, and T.~Sauerwald.
\newblock On coalescence time in graphs: {W}hen is coalescing as fast as
  meeting?
\newblock {\em In Proceedings of the 30th Annual ACM-SIAM Symposium on Discrete
  Algorithms (SODA)}, pages 956--965, 2019.

\bibitem{KSS21}
S.~Kijima, N.~Shimizu, and T.~Shiraga.
\newblock How many vertices does a random walk miss in a network with
  moderately increasing the number of vertices?
\newblock {\em In Proceedings of the 2021 ACM-SIAM Symposium on Discrete
  Algorithms (SODA)}, pages 106--122, 2021.

\bibitem{LMS18}
I.~Lamprou, R.~Martin, and P.~Spirakis.
\newblock Cover time in edge-uniform stochastically-evolving graphs.
\newblock {\em Algorithms}, 11(10):149, 2018.

\bibitem{LP17}
D.~A. Levin and Y.~Peres.
\newblock {\em {M}arkov Chain and Mixing Times: Second Edition}.
\newblock The American Mathematical Society, 2017.

\bibitem{Lovasz93}
L.~Lov{\'a}sz.
\newblock Random walks on graphs: A survey.
\newblock {\em Combinatorics, Paul Erd\H{o}s is Eighty}, 2:1--46, 1993.

\bibitem{Matthews88}
P.~Matthews.
\newblock Covering problems for {M}arkov chains.
\newblock {\em The Annals of Probability}, 16(3):1215--1228, 1988.

\bibitem{MPS04}
E.~Mossel, Y.~Peres, and A.~Sinclair.
\newblock Shuffling by semi-random transpositions.
\newblock {\em In Proceedings of the 45th Annual IEEE Symposium on Foundations
  of Computer Science (FOCS)}, pages 572--581, 2004.

\bibitem{NOSY10}
Y.~Nonaka, H.~Ono, K.~Sadakane, and M.~Yamashita.
\newblock The hitting and cover times of {M}etropolis walks.
\newblock {\em Theoretical Computer Science}, 411(16--18):1889--1894, 2010.

\bibitem{Oliveira12}
R.~I. Oliveira.
\newblock On the coalescing time of reversible random walks.
\newblock {\em Transactions of the American Mathematical Society},
  364:2109--2128, 2012.

\bibitem{OP19}
R.~I. Oliveira and Y.~Peres.
\newblock Random walks on graphs: {N}ew bounds on hitting, meeting, coalescing
  and returning.
\newblock {\em In Proceedings of the 16th Workshop on Analytic Algorithmics and
  Combinatorics (ANALCO)}, pages 119--126, 2019.

\bibitem{OT11}
A.~Olshevsky and J.~N. Tsitsiklis.
\newblock Degree fluctuations and the convergence time of consensus algorithms.
\newblock {\em In Proceedings of the 50th IEEE Conference on Decision and
  Control}, pages 6602--6607, 2011.

\bibitem{RSS21}
N.~Rivera, T.~Sauerwald, and J.~Sylvester.
\newblock Multiple random walks on graphs: Mixing few to cover many.
\newblock {\em In Proceedings of the 48th International Colloquium on Automata,
  Languages, and Programming (ICALP)}, pages 107:1--107:16, 2021.

\bibitem{SZ07}
L.~Saloff-Coste and J.~Z{\'u}{\~n}iga.
\newblock Convergence of some time inhomogeneous markov chains via spectral
  techniques.
\newblock {\em Stochastic Processes and their Applications}, 117(8):961--979,
  2007.

\bibitem{SZ09}
L.~Saloff-Coste and J.~Z{\'u}{\~n}iga.
\newblock Merging for time inhomogeneous finite {M}arkov chains, part {I}:
  {S}ingular values and stability.
\newblock {\em Electronic Journal of Probability}, 14(49):1456--1494, 2009.

\bibitem{SZ11}
L.~Saloff-Coste and J.~Z{\'u}{\~n}iga.
\newblock Merging for inhomogeneous finite {M}arkov chains, part {II}: Nash and
  log-{S}obolev inequalities.
\newblock {\em Annals of Probability}, 39(3):1161--1203, 2011.

\bibitem{SZ19}
T.~Sauerwald and L.~Zanetti.
\newblock Random walks on dynamic graphs: Mixing times, hitting times, and
  return probabilities.
\newblock {\em In Proceedings of the 46th International Colloqium on Automata,
  Languages, and Programming (ICALP)}, pages 93:1--93:15, 2019.

\end{thebibliography}
\appendix

\section{Tools for key lemmas}
In this section, we introduce technical tools for \cref{lem:mix_treldecay,lem:dirichlethit,lem:MTL_formal,lem:HTL_formal}.
The first one is concerned with the spectral radius $\rho(D_w PD_w)$ of the substochastic matrix $D_wPD_w$ (see \cref{sec:proof_outline} for the definition of $D_w$).
It is known that $\rho(D_wPD_w)\leq 1-1/\thit(P)$ for any irreducible $D_wPD_w$ (Section 3.6.5 of \cite{AF02}).
For completeness, we show this under the assumption of irreducibility and reversibility of $P$ as follows.
\begin{lemma}
\label{lem:hiteigen}
Let $P\in [0,1]^{V\times V}$ be an irreducible and reversible transition matrix.
Then, for any $w\in V$, 
\begin{align*}
\rho\left(D_w P D_w\right) \leq 1-\frac{1}{\thit(P)}.
\end{align*}
\end{lemma}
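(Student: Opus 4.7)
My plan is to interpret $D_w P D_w$ as the substochastic transition kernel of the walk killed upon reaching $w$, and to exploit the reversibility by pairing its Perron eigenvector against the hitting-time function.

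First I would set $\tilde P \defeq D_w P D_w$ and observe two structural facts. (a) Since $\pi(u)P(u,v)=\pi(v)P(v,u)$ for all $u,v$, a short calculation shows $\langle f,\tilde P g\rangle_\pi=\langle \tilde P f,g\rangle_\pi$ for all $f,g\in\Real^V$, i.e.\ $\tilde P$ is self-adjoint with respect to $\langle\cdot,\cdot\rangle_\pi$. In particular its eigenvalues are real, and $\rho(\tilde P)$ equals the largest eigenvalue in absolute value. (b) $\tilde P$ has nonnegative entries, so by Perron--Frobenius there exists a nonnegative, nonzero eigenvector $\phi\geq 0$ with $\tilde P\phi=\lambda\phi$ where $\lambda\defeq\rho(\tilde P)\geq 0$. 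If $\lambda=0$ the bound is trivial, so I will henceforth assume $\lambda>0$; then $(\tilde P\phi)(w)=0$ forces $\phi(w)=0$.

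Next I would bring in the hitting time. Let $h(u)\defeq\E[\min\{t\geq 0:X_t=w\}\mid X_0=u]$ with $h(w)=0$. Conditioning on one step gives, for every $u\neq w$,
\begin{equation*}
h(u)=1+\sum_{v\in V}P(u,v)h(v)=1+(\tilde P h)(u),
\end{equation*}
since $h(w)=0$ kills the contribution from $v=w$. Together with $h(w)=0$ this rewrites as $(I-\tilde P)h=\mathbbm{1}'$, where $\mathbbm{1}'(u)=1$ for $u\neq w$ and $\mathbbm{1}'(w)=0$. Pairing with $\phi$ in the $\pi$-inner product and using self-adjointness,
\begin{equation*}
\langle\phi,\mathbbm{1}'\rangle_\pi=\langle\phi,(I-\tilde P)h\rangle_\pi=\langle(I-\tilde P)\phi,h\rangle_\pi=(1-\lambda)\langle\phi,h\rangle_\pi.
\end{equation*}

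Finally I would extract the inequality. Because $\phi(w)=0$, we have $\langle\phi,\mathbbm{1}'\rangle_\pi=\langle\phi,\mathbbm{1}\rangle_\pi$, and because $h\leq\thit(P)\mathbbm{1}$ pointwise with $\phi\geq 0$ we get $\langle\phi,h\rangle_\pi\leq\thit(P)\langle\phi,\mathbbm{1}\rangle_\pi$. Combining,
\begin{equation*}
\langle\phi,\mathbbm{1}\rangle_\pi\leq(1-\lambda)\thit(P)\langle\phi,\mathbbm{1}\rangle_\pi.
\end{equation*}
Since $\pi>0$ and $\phi\geq 0$ is nonzero, $\langle\phi,\mathbbm{1}\rangle_\pi>0$, so dividing yields $1\leq(1-\lambda)\thit(P)$, i.e.\ $\lambda\leq 1-1/\thit(P)$, which is the claim. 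The only subtle step is the deduction $\phi(w)=0$, which would fail if $\lambda=0$; handling that as a separate (trivial) case removes the difficulty. Everything else is routine linear algebra once reversibility is used to get the self-adjointness.
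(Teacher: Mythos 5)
Your proof is correct and follows essentially the same route as the paper: both apply Perron--Frobenius to the killed kernel $D_wPD_w$, use reversibility to relate the Perron eigenvector to a distribution on $V\setminus\{w\}$, and conclude from the fact that the expected hitting time of $w$ started from that Perron-weighted distribution equals $1/(1-\lambda)$ and is at most $\thit(P)$. The only cosmetic difference is that you extract this identity by pairing $\phi$ against the hitting-time vector through the Poisson equation $(I-\tilde P)h=\mathbbm{1}'$ and self-adjointness, whereas the paper normalizes the left eigenvector into a probability measure and sums the geometric survival probabilities $\Pr[\tau_w>t]=\lambda^t$; your version has the minor advantage of not having to argue $\lambda<1$ separately before summing a series.
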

\begin{proof}
Define $P_w\in [0,1]^{V\setminus \{w\}\times V\setminus \{w\}}$ by $P_w(u,v)=P(u,v)$ for any $u,v\in V\setminus \{w\}$.
The Perron--Frobenius theorem
implies that
$\lambda=\rho(P_w)$ is an eigenvalue of $P_w$ and there is a nonnegative nonzero eigenvector $g\in \mathbbm{R}^{V\setminus \{w\}}$ satisfying $P_wg=\lambda g$ (see, e.g., Theorem 8.3.1 in \cite{HJ12}).
Write $Q_w=D_wPD_w$ for convenience.
Define $h\in \mathbbm{R}^V$ by $h(w)=0$ and $h(v)=\frac{g(v)\pi(v)}{Z}$ for any $v\in V\setminus \{w\}$, where $Z=\sum_{v\in V\setminus \{w\}}g(v)\pi(v)$.
Then, $h$ is a probability vector.
Furthermore, 
\begin{align*}
(hQ_w)(v)
&=\sum_{u\in V}h(u)Q_w(u,v)
=\sum_{u\in V\setminus \{w\}}\frac{g(u)\pi(u)}{Z}P_w(u,v)\\
&=\sum_{u\in V\setminus \{w\}}\frac{g(u)\pi(v)}{Z}P_w(v,u)
=\frac{\pi(v)}{Z}(P_wg)(v)
=\frac{\pi(v)}{Z}\lambda g(v)
=\lambda h(v)
\end{align*}
holds for any $v\in V\setminus \{w\}$. Since $(hQ_w)(w)=0=\lambda h(w)$, we have $hQ_w=\lambda h$.
Hence, $hQ_w^t=\lambda^th$ holds for any $t\geq 1$.
This implies that
\begin{align*}
\Pr_{h}\left[\tau_w>t\right]
&=\Pr_{h}\left[\bigwedge_{i=0}^t\{X_i\neq w\}\right]
=\sum_{v\in V}h(v) \sum_{u\in W}Q_w^t(v,u)
=\lambda^t\sum_{u\in V}h(u)=\lambda^t
\end{align*}
holds for any $t\geq 1$.
Since $P$ is irreducible, there is a $t^* \geq 1$ such that $\Pr_{h}\left[\tau_w>t^*\right]<1$. 
Hence, $\lambda<1$ and we have
\begin{align*}
\E_{h}\left[\tau_w\right]
&=\sum_{t=0}^\infty\Pr_{h}\left[\tau_w>t\right]
 =\frac{1}{1-\lambda}.
\end{align*}
Note that $\Pr_{h}\left[\tau_w>0\right]=1$ holds from $h(w)=0$.
Thus, $\rho(D_wPD_w)=\lambda=1-\frac{1}{\E_{h}\left[\tau_w\right]}\leq 1-\frac{1}{\thit(P)}$ holds and we obtain the claim.
Note that we have $\E_{h}\left[\tau_w\right]=\sum_{v\in V}h(v)\E_{v}\left[\tau_w\right]\leq \sum_{v\in V}h(v)\thit(P)=\thit(P)$.
\end{proof}

The following lemmas are already known in
    the literature.
We put the proofs of them for completeness.
\begin{lemma}
\label{lem:MatrixEigen}
Let $M\in \mathbbm{R}^{V\times V}$ be a matrix and $\nu\in \mathbbm{R}^V_{>0}$ be a positive vector.
Suppose that $\nu(u)M(u,v)=\nu(v)M(v,u)$ holds for all $u,v\in V$.
Then, 
\begin{align*}
\langle M f, f\rangle_\nu\leq \rho(M)\langle f, f\rangle_\nu \hspace{1em}\textrm{and} \hspace{1em} 
\|Mf\|_{2,\nu}\leq \rho(M)\|f\|_{2,\nu}
\end{align*}
hold for any $f\in \mathbbm{R}^V$.
Furthermore, if $M$ is a transition matrix, 
\begin{align*}
\langle M f, f\rangle_\nu\leq \lambda_\star(M)\langle f, f\rangle_\nu \hspace{1em}\textrm{and} \hspace{1em} 
\|Mf\|_{2,\nu}\leq \lambda_\star(M)\|f\|_{2,\nu}
\end{align*}
hold for any $f\in \mathbbm{R}^V$ satisfying $\langle f,\mathbbm{1}\rangle_\nu=0$.
\end{lemma}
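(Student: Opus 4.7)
The plan is to reduce everything to the standard spectral theorem by exploiting the fact that the reversibility hypothesis $\nu(u)M(u,v)=\nu(v)M(v,u)$ is exactly the statement that $M$ is self-adjoint with respect to the $\nu$-weighted inner product. Concretely, letting $D=\mathrm{diag}(\nu)$, the detailed balance rewrites as $DM = M^\top D$, which means the conjugated matrix $S \defeq D^{1/2} M D^{-1/2}$ is symmetric in the ordinary sense. Symmetric real matrices admit an orthonormal eigenbasis with real eigenvalues, and since $S$ and $M$ are similar they share the same spectrum $\{\lambda_1,\dots,\lambda_n\}\subseteq\mathbb{R}$.

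Next I would transport this basis back to $M$. If $u_1,\dots,u_n$ is a standard ONB of eigenvectors of $S$, then $\phi_i \defeq D^{-1/2}u_i$ are eigenvectors of $M$, and a direct computation gives $\langle \phi_i,\phi_j\rangle_\nu = \sum_v \nu(v)\nu(v)^{-1/2}u_i(v)\nu(v)^{-1/2}u_j(v) = \delta_{ij}$, so $\{\phi_i\}$ is an ONB with respect to $\langle\cdot,\cdot\rangle_\nu$. Expanding any $f\in\mathbb{R}^V$ as $f=\sum_i c_i \phi_i$, we get $\langle f,f\rangle_\nu=\sum_i c_i^2$, $\langle Mf,f\rangle_\nu=\sum_i \lambda_i c_i^2$, and $\|Mf\|_{2,\nu}^2=\sum_i \lambda_i^2 c_i^2$.

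From these three identities the first pair of inequalities is immediate: since $\lambda_i\leq|\lambda_i|\leq \rho(M)$ for every $i$, we obtain $\langle Mf,f\rangle_\nu\leq \rho(M)\langle f,f\rangle_\nu$, and since $\lambda_i^2\leq \rho(M)^2$ we get $\|Mf\|_{2,\nu}\leq \rho(M)\|f\|_{2,\nu}$. For the transition matrix case, $M\mathbbm{1}=\mathbbm{1}$ identifies $\mathbbm{1}/\|\mathbbm{1}\|_{2,\nu}$ as one eigenvector, say $\phi_1$, with eigenvalue $\lambda_1=1$; all remaining eigenvalues satisfy $|\lambda_i|\leq \lambda_\star(M)$ by definition. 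The hypothesis $\langle f,\mathbbm{1}\rangle_\nu=0$ forces $c_1=0$, so the sums in the expansions above run over $i\geq 2$ only, and bounding each $\lambda_i$ and $\lambda_i^2$ by $\lambda_\star(M)$ and $\lambda_\star(M)^2$ respectively gives the two remaining inequalities.

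There is not really a hard step here; the only mild subtlety is justifying that the spectral theorem applies in the $\nu$-weighted inner product setting rather than quoting it directly, which is handled by the $D^{1/2}$-conjugation trick above. I would also briefly remark that in the transition matrix case one has implicitly used irreducibility only to identify the eigenspace of $\lambda=1$ as one-dimensional; the statement as written does not actually require this, because even without it, $\mathbbm{1}$ is still an eigenvector with eigenvalue $1$, and $f\perp_\nu\mathbbm{1}$ means $f$ lies in the $\nu$-orthogonal complement where all eigenvalues have absolute value at most $\lambda_\star(M)$.
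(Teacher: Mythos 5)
Your proposal is correct and takes essentially the same route as the paper: both arguments expand $f$ in a $\nu$-orthonormal basis of real eigenvectors of $M$ and bound the resulting eigenvalue-weighted sums, with the $\mathbbm{1}$-component vanishing in the transition-matrix case. The only difference is cosmetic — the paper directly invokes the spectral theorem for operators that are self-adjoint with respect to $\langle\cdot,\cdot\rangle_\nu$, whereas you derive that basis explicitly via the $D^{1/2}$-conjugation to a symmetric matrix.
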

\begin{proof}
From the assumption, $\langle Mf,g\rangle_\nu=\langle f,Mg\rangle_\nu$ holds for any $f,g\in \mathbbm{R}^V$.
Hence, from the spectral theorem, the inner product space $(\mathbbm{R}^V,\langle \cdot,\cdot \rangle_{\nu})$ has an orthonormal basis of real-valued eigenvectors $\{\psi_i\}_{i=1}^{|V|}$ corresponding to real eigenvalues $\{\lambda_i(M)\}_{i=1}^{|V|}$ (see, e.g., Lemma 12.1 in \cite{LP17}).
In other words, for any $i,j$ and $f\in \mathbbm{R}^V$, we have $M\psi_i=\lambda_i(M)\psi_i$, $\langle \psi_i,\psi_j\rangle_\nu=\mathbbm{1}_{i=j}$, and $f=\sum_{i=1}^{|V|}\langle f,\psi_i\rangle_\nu\psi_i$.
Without loss of generality, assume $\lambda_1(M)\geq \lambda_2(M)\geq \cdots \lambda_{|V|}(M)$.
For any $f\in \mathbbm{R}^V$, we have
\begin{align}
&\|f\|_{2,\nu}^2
=\langle f,f\rangle_\nu
=\left \langle \sum_{i=1}^{|V|}\langle f,\psi_i\rangle_\nu\psi_i,f \right \rangle_\nu
=\sum_{i=1}^{|V|}\langle f,\psi_i\rangle_\nu^2, \label{eq:fnorm}\\
&\|Mf\|_{2,\nu}^2
=\sum_{i=1}^{|V|}\langle Mf,\psi_i\rangle_\nu^2
=\sum_{i=1}^{|V|}\langle f,M\psi_i\rangle_\nu^2
=\sum_{i=1}^{|V|}\lambda_i(M)^2\langle f,  \psi_i\rangle_\nu^2, \label{eq:Mfnorm}\\
&\langle Mf,f \rangle_\nu
=\left \langle \sum_{i=1}^{|V|}\langle Mf,\psi_i\rangle_\nu\psi_i,f \right \rangle_\nu
=\sum_{i=1}^{|V|}\langle f,M\psi_i\rangle_\nu\left \langle \psi_i,f \right \rangle_\nu
=\sum_{i=1}^{|V|}\lambda_i(M) \langle f,\psi_i  \rangle_\nu^2. \label{eq:Mprod}
\end{align}
Combining \cref{eq:Mprod,eq:fnorm}, $\langle Mf,f \rangle_\nu\leq \rho(M)\langle f,f\rangle_\nu$ holds.
Combining \cref{eq:Mfnorm,eq:fnorm}, $\|Mf\|_{2,\nu}^2\leq \rho(M)^2\|f\|_{2,\nu}^2$ holds.
If $M$ is a transition matrix, we have $\lambda_1(M)=1$ and $\psi_1=\mathbbm{1}$. Furthermore, $|\lambda_i(M)|\leq 1$ holds for all $1\leq i\leq |V|$.
Combining \cref{eq:Mprod,eq:fnorm}, $\langle Mf,f \rangle_\nu\leq \lambda_1(M)\langle f,\psi_1\rangle_\nu+\lambda_\star(M)\langle f,f\rangle_\nu=\lambda_\star(M)\langle f,f\rangle_\nu$ holds.
Combining \cref{eq:Mfnorm,eq:fnorm}, $\|Mf\|_{2,\nu}^2\leq \lambda_1(M)^2\langle f,\psi_1\rangle_\nu^2+\lambda_\star(M)^2\langle f,f\rangle_\nu=\lambda_\star(M)^2\|f\|_{2,\nu}^2$ holds.
\end{proof}

\begin{lemma}[See, e.g., (12.8) of \cite{LP17}]
\label{lem:normmix}
Let $P\in [0,1]^{V\times V}$ be a transition matrix.
Suppose that $\pi(u)P(u,v)=\pi(v)P(v,u)$ holds for any $u,v\in V$ and some probability distribution $\pi\in (0,1]^V$.
Then, for any probability vector $\mu\in [0,1]^V$, 
\begin{align*}
\left\|\frac{\mu P}{\pi}-\mathbbm{1}\right\|_{2,\pi}^2\leq \lambda_\star(P)^2 \left\|\frac{\mu}{\pi}-\mathbbm{1}\right\|_{2,\pi}^2.
\end{align*}
\end{lemma}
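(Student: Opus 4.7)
The plan is to reduce the statement to the second (``orthogonal'') part of \cref{lem:MatrixEigen} applied to the vector $f \defeq \frac{\mu}{\pi} - \mathbbm{1} \in \mathbb{R}^V$. There are essentially two routine facts to verify before invoking that lemma: (i) $f$ is $\langle\cdot,\cdot\rangle_\pi$-orthogonal to $\mathbbm{1}$, and (ii) the action of $P$ on $\mu/\pi$ from the right-multiplication-on-$\mu$ side coincides with the left action $P\bigl(\frac{\mu}{\pi}\bigr)$, so that $\frac{\mu P}{\pi} - \mathbbm{1} = Pf$.

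For (i), since $\mu$ is a probability vector,
\[
\langle f, \mathbbm{1}\rangle_\pi \;=\; \sum_{v\in V}\pi(v)\Bigl(\tfrac{\mu(v)}{\pi(v)}-1\Bigr) \;=\; \sum_{v\in V}\mu(v) - 1 \;=\; 0.
\]
For (ii), I would use reversibility exactly as in \cref{eq:reversibled2norm} of the proof of \cref{lem:main_mix}: for every $v\in V$,
\[
\Bigl(\tfrac{\mu P}{\pi}\Bigr)(v) = \tfrac{1}{\pi(v)}\sum_{u}\mu(u)P(u,v) = \sum_{u}\tfrac{\mu(u)}{\pi(u)}P(v,u) = \bigl(P\bigl(\tfrac{\mu}{\pi}\bigr)\bigr)(v),
\]
and then subtract $\mathbbm{1} = P\mathbbm{1}$ from both sides to get $\tfrac{\mu P}{\pi} - \mathbbm{1} = P f$.

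With (i) and (ii) in hand, the conclusion is immediate: \cref{lem:MatrixEigen} with $M=P$ (a transition matrix reversible with respect to $\pi$) applied to $f$, which satisfies $\langle f,\mathbbm{1}\rangle_\pi = 0$, gives
\[
\Bigl\|\tfrac{\mu P}{\pi} - \mathbbm{1}\Bigr\|_{2,\pi}^{2} \;=\; \|Pf\|_{2,\pi}^{2} \;\leq\; \lambda_\star(P)^{2}\,\|f\|_{2,\pi}^{2} \;=\; \lambda_\star(P)^{2}\Bigl\|\tfrac{\mu}{\pi}-\mathbbm{1}\Bigr\|_{2,\pi}^{2},
\]
which is the claimed inequality. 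There is no genuine obstacle here: the only nontrivial ingredient is the spectral bound in \cref{lem:MatrixEigen}, and the rest is bookkeeping that uses (a) probability-vector normalization to get orthogonality to $\mathbbm{1}$ and (b) reversibility to convert between $\mu P / \pi$ and $P(\mu/\pi)$.
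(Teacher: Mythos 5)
Your proposal is correct and follows essentially the same route as the paper's own proof: both rewrite $\frac{\mu P}{\pi}-\mathbbm{1}=P\bigl(\frac{\mu}{\pi}-\mathbbm{1}\bigr)$ via reversibility (\cref{eq:reversibled2norm}) and $P\mathbbm{1}=\mathbbm{1}$, check $\pi$-orthogonality to $\mathbbm{1}$, and then invoke \cref{lem:MatrixEigen}. No gaps.
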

\begin{proof}
Combining \cref{eq:reversibled2norm,lem:MatrixEigen}, we have
\begin{align*}
\left\|\frac{\mu P}{\pi}-\mathbbm{1}\right\|_{2,\pi}^2
&=\left\|P\left(\frac{\mu}{\pi}\right)-P\mathbbm{1}\right\|_{2,\pi}^2
=\left \|P\left(\frac{\mu}{\pi}-\mathbbm{1}\right)\right \|_{2,\pi}^2
\leq \lambda_\star(P)^2\left\|\frac{\mu}{\pi}-\mathbbm{1}\right\|_{2,\pi}^2.
\end{align*}
Note that we have
$
\left \langle \frac{\mu}{\pi}-\mathbbm{1},\mathbbm{1} \right\rangle_{\pi}
=\sum_{v\in V}\pi(v)\left(\frac{\mu(v)}{\pi(v)}-1\right)
=0
$.
\end{proof}
%
\begin{lemma}[See, e.g., Proposition 2.5 in \cite{Fill91}]
\label{lem:Mihail}
Let $P\in [0,1]^{V\times V}$ be a lazy transition matrix.
Suppose that $\pi(u)P(u,v)=\pi(v)P(v,u)$ holds for any $u,v\in V$ and some probability distribution $\pi\in (0,1]^V$.
Then for any $f\in \mathbbm{R}^V$,
\begin{align*}
\Var_\pi(Pf)
\leq \Var_\pi(f)-\mathcal{E}_{P,\pi}(f,f).
\end{align*}
\end{lemma}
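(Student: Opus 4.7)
\medskip

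\noindent\textbf{Proof proposal.} The plan is to reduce the claimed inequality to the statement $\|Pf\|_{2,\pi}^2 \le \langle f, Pf\rangle_\pi$, and then exploit the laziness of $P$ to obtain this via positive semi\-definiteness.

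First I would compute $\E_\pi(Pf)$. Since $P$ is reversible with stationary distribution $\pi$, the matrix $P$ is self-adjoint with respect to $\langle \cdot,\cdot\rangle_\pi$, so $\langle Pf,\mathbbm{1}\rangle_\pi = \langle f, P\mathbbm{1}\rangle_\pi = \langle f,\mathbbm{1}\rangle_\pi$, i.e.\ $\E_\pi(Pf) = \E_\pi(f)$. Consequently,
\begin{align*}
\Var_\pi(f) - \Var_\pi(Pf) = \langle f,f\rangle_\pi - \langle Pf, Pf\rangle_\pi,
\end{align*}
so the claim $\Var_\pi(Pf) \le \Var_\pi(f) - \mathcal{E}_{P,\pi}(f,f)$ is equivalent to
\begin{align*}
\langle Pf, Pf\rangle_\pi \le \langle f, Pf\rangle_\pi.
\end{align*}

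Next I would use laziness. The hypothesis $P(v,v)\ge 1/2$ for every $v$ lets me write $P = \tfrac{1}{2}(I+R)$ where $R=2P-I$ is itself a reversible transition matrix. In particular, the eigenvalues of $P$ with respect to $\langle\cdot,\cdot\rangle_\pi$ all lie in $[0,1]$, so $P$ is positive semi\-definite on the inner product space $(\mathbb{R}^V,\langle\cdot,\cdot\rangle_\pi)$. Using the spectral theorem (as in Lemma~\ref{lem:MatrixEigen}), $P$ admits a self-adjoint square root $\sqrt{P}$ with $\sqrt{P}\sqrt{P}=P$. Setting $g\defeq \sqrt{P}f$, self-adjointness of $\sqrt{P}$ gives
\begin{align*}
\langle Pf, Pf\rangle_\pi = \langle \sqrt{P} g, \sqrt{P} g\rangle_\pi = \langle g, Pg\rangle_\pi,
\qquad
\langle f, Pf\rangle_\pi = \langle \sqrt{P}f, \sqrt{P} f\rangle_\pi = \langle g, g\rangle_\pi.
\end{align*}
Hence the desired inequality $\langle Pf, Pf\rangle_\pi \le \langle f, Pf\rangle_\pi$ becomes $\langle g, Pg\rangle_\pi \le \langle g,g\rangle_\pi$, which is $\mathcal{E}_{P,\pi}(g,g) = \tfrac{1}{2}\sum_{u,v}\pi(u)P(u,v)(g(u)-g(v))^2 \ge 0$; this is manifestly true.

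Combining the two steps yields the lemma. The only place where any thought is required is the use of laziness to pass to a square root of $P$; without laziness, $P$ need not be PSD (e.g.\ $P$ could have eigenvalue $-1$), and the inequality can indeed fail in that regime. An alternative route that avoids $\sqrt{P}$ is to observe directly that $P(I-P)$ is PSD on $(\mathbb{R}^V,\langle\cdot,\cdot\rangle_\pi)$, since its eigenvalues are $\lambda_i(1-\lambda_i)\ge 0$ for $\lambda_i\in[0,1]$, and then $\langle f, Pf\rangle_\pi - \langle Pf, Pf\rangle_\pi = \langle f, P(I-P)f\rangle_\pi \ge 0$; this gives the same conclusion in one line and may be the cleanest way to write it up.
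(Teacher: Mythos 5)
Your proof is correct and is essentially the paper's argument in different packaging: both reduce the claim to the identity $\Var_\pi(f)-\Var_\pi(Pf)=\langle f,f\rangle_\pi-\langle Pf,Pf\rangle_\pi$ and then use the spectral theorem together with laziness (all eigenvalues of $P$ lie in $[0,1]$) to conclude, the paper by comparing $\mathcal{E}_{P^2,\pi}$ with $\mathcal{E}_{P,\pi}$ eigenvalue by eigenvalue and you by observing that $P(I-P)$ is positive semidefinite. Your one-line variant $\langle f,P(I-P)f\rangle_\pi\geq 0$ is a clean way to state the same eigenvalue inequality $\lambda_i-\lambda_i^2\geq 0$.
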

\begin{proof}
It is straightforward to see that
\begin{align*}
\Var_\pi(Pf)
&=\langle Pf,Pf \rangle_{\pi}-\langle Pf,\mathbbm{1} \rangle_{\pi}^2
=\langle P^2f,f \rangle_{\pi}-\langle f,P\mathbbm{1} \rangle_{\pi}^2\\
&=\left(\langle f,f \rangle_{\pi}-\langle f,\mathbbm{1} \rangle_{\pi}^2\right) - \left(\langle f,f \rangle_{\pi}-\langle P^2f,f \rangle_{\pi}\right) 
=\Var_\pi(f)-\mathcal{E}_{P^2,\pi}(f,f)
\end{align*}
holds. 
From \cref{eq:fnorm,eq:Mprod}, $\mathcal{E}_{P,\pi}(f,f)=\langle f,f \rangle_{\pi}-\langle Pf,f \rangle_{\pi}=\sum_{i=2}^{|V|}(1-\lambda_i(P))\langle f,\psi_i\rangle_\pi^2$.
Hence, we have $\mathcal{E}_{P^2,\pi}(f,f)=\sum_{i=2}^{|V|}(1-\lambda_i(P)^2)\langle f,\psi_i \rangle_\pi^2\geq \sum_{i=2}^{|V|}(1-\lambda_i(P))\langle f,\psi_i \rangle_\pi^2=\mathcal{E}_{P,\pi}(f,f)$. Thus, we obtain the claim.
Note that all eigenvalues of $P$ are non-negative since $P$ is lazy.
\end{proof}
%

%
\begin{lemma}[See, e.g., Theorem 4.1 in \cite{OP19}]
\label{lem:normmeet}
Let $P\in [0,1]^{V\times V}$ be a lazy transition matrix.
Suppose that $\pi(u)P(u,v)=\pi(v)P(v,u)$ holds for any $u,v\in V$ and some probability distribution $\pi\in (0,1]^V$.
Then for any $x,y\in V$ and any $f\in \mathbbm{R}^V$, 
\begin{align*}
\left\|D_xPD_yf\right\|_{2,\pi}^2\leq \rho\left(D_xPD_x \right)\rho\left(D_yPD_y\right)\left\|f\right\|_{2,\pi}^2.
\end{align*}
\end{lemma}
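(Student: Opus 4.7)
\medskip\noindent\textbf{Proof plan.} The plan is to factor $D_xPD_y$ through the positive semidefinite square root of $P$ and then estimate each factor separately via the operator norm. First I would note that the laziness of $P$, combined with reversibility with respect to $\pi$, implies that $P$ is self-adjoint and positive semidefinite as an operator on $(\mathbb{R}^V,\langle\cdot,\cdot\rangle_\pi)$: writing $P=\tfrac{1}{2}I+\tfrac{1}{2}Q$ for a stochastic $Q$, every eigenvalue of $P$ lies in $[0,1]$. Consequently there is a self-adjoint operator $\sqrt{P}$ with $\sqrt{P}\sqrt{P}=P$, so we may write
\[
D_xPD_y = (D_x\sqrt{P})(\sqrt{P}D_y).
\]

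Next I would compute the $\pi$-operator norms of the two factors. Since $D_x$ is diagonal with $\{0,1\}$ entries it is self-adjoint in the $\pi$-inner product, and so is $\sqrt{P}$, giving $(D_x\sqrt{P})^{*}=\sqrt{P}D_x$ and hence $\|D_x\sqrt{P}\|_{\mathrm{op}}=\|\sqrt{P}D_x\|_{\mathrm{op}}$. Using $D_x^2=D_x$ I would then expand
\[
\|\sqrt{P}D_x g\|_{2,\pi}^2 = \langle D_x g,\,PD_x g\rangle_\pi = \langle g,\,D_xPD_x g\rangle_\pi.
\]
Taking the supremum over unit vectors $g$ yields $\|\sqrt{P}D_x\|_{\mathrm{op}}^2=\lambda_{\max}(D_xPD_x)=\rho(D_xPD_x)$, where the last equality uses that $D_xPD_x$ is self-adjoint and positive semidefinite (hence has nonnegative spectrum, so the operator norm equals the spectral radius). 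The identical calculation gives $\|\sqrt{P}D_y\|_{\mathrm{op}}^2=\rho(D_yPD_y)$.

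Finally, applying sub-multiplicativity of the operator norm to the factorization above yields
\[
\|D_xPD_yf\|_{2,\pi}\leq \|D_x\sqrt{P}\|_{\mathrm{op}}\,\|\sqrt{P}D_y\|_{\mathrm{op}}\,\|f\|_{2,\pi},
\]
and squaring produces the claimed bound $\|D_xPD_yf\|_{2,\pi}^2\leq \rho(D_xPD_x)\rho(D_yPD_y)\|f\|_{2,\pi}^2$.

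The only real subtlety is the existence of $\sqrt{P}$, which relies crucially on the laziness hypothesis (without it, $P$ could have a negative eigenvalue and the square root would not be real); everything else is a bookkeeping exercise with $\pi$-adjoints. Identifying $\|\sqrt{P}D_x\|_{\mathrm{op}}^2$ with $\rho(D_xPD_x)$ rather than with the spectral radius of some other, less useful matrix (e.g.\ $D_xP$ or $PD_x$, which would not plug into \cref{lem:hiteigen}) is the step that makes the bound effective.
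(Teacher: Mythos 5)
Your proposal is correct and follows essentially the same route as the paper: factor $D_xPD_y=(D_x\sqrt{P})(\sqrt{P}D_y)$ using the positive semidefinite square root guaranteed by laziness, observe that $D_x\sqrt{P}$ and $\sqrt{P}D_x$ are $\pi$-adjoints, and identify $\|\sqrt{P}D_x\|_{\mathrm{op}}^2$ with $\rho(D_xPD_x)$. The only cosmetic difference is that you phrase the argument in operator-norm language while the paper spells out the spectral decomposition entrywise; the substance is identical.
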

\begin{proof}
From assumption,
the inner product space $(\mathbbm{R}^V,\langle\cdot,\cdot\rangle_{\pi})$ has an orthonormal basis of real-valued eigenvectors $\{\psi_i\}_{i=1}^{|V|}$ corresponding to real eigenvalues $\{\lambda_i(P)\}_{i=1}^{|V|}$.
This implies that, for all $u,v\in V$,  $P(v,u)=\pi(u)\sum_{i=1}^{|V|}\lambda_i(P)\psi_i(v)\psi_i(u)$ holds.
Let $\sqrt{P}\in [0,1]^{V\times V}$ be the positive semidefinite square root of $P$, i.e., $\sqrt{P}(v,u)=\pi(u)\sum_{i=1}^{|V|}\sqrt{\lambda_i(P)}\psi_i(v)\psi_i(u)$.
Note that all eigenvalues are nonnegative since $P$ is lazy.
It is easy to see that $(\sqrt{P})^2=P$ and $\pi(v)\sqrt{P}(v,u)=\pi(u)\sqrt{P}(u,v)$ holds for any $u,v\in V$.
Hence, we have
\begin{align*}
\pi(v)(D_w\sqrt{P})(v,u) 
=\pi(v)D_w(v,v)\sqrt{P}(v,u) 
=\pi(u)D_w(v,v)\sqrt{P}(u,v) 
=\pi(u)(\sqrt{P}D_w)(u,v),
\end{align*}
i.e., $\langle D_w\sqrt{P}f,g\rangle_\pi=\langle f,\sqrt{P}D_wg\rangle_\pi$ holds for any $f,g$.
This implies that both  $\|D_w\sqrt{P}f\|_{2,\pi}$
 and $\|\sqrt{P}D_w f\|_{2,\pi}$ are upper bounded by $\sqrt{\rho(D_wPD_w)}\|f\|_{2,\pi}$.
Consequently, 
\begin{align*}
\left\| D_xPD_y f\right\|_{2,\pi}^2
&=\left\| D_x\sqrt{P} \sqrt{P}D_y f \right\|_{2,\pi}^2 
\leq \rho\left(D_xPD_x\right)\left\|  \sqrt{P}D_y f \right\|_{2,\pi}^2 
&=\rho\left(D_xPD_x \right)\rho\left(D_yPD_y \right)\left\| f \right\|_{2,\pi}^2
\end{align*}
holds, and we obtain the claim.
\end{proof}

\section{Tools for expected stopping time}
Our upper bounds of the hitting, cover, and coalescing times rely on the following observation.
\begin{lemma}\label{lem:key_lemma}
Let $(Z_t)_{t\geq 0}$ be a sequence of
    random variables where $Z_t\in\mathcal{S}$
    for a finite state space $\mathcal{S}$.
For an event $\mathcal{E}\subseteq\mathcal{S}$,
    let $\tau(\mathcal{E})=\inf\{t\geq 0:Z_t\in\mathcal{E}\}$ be the stopping time.
Suppose there exist $T>0$ and $c>0$ such that,
    for any $t\geq 0$,
    \begin{align*}
        \Pr[\tau(\mathcal{E})\geq T+t \mid \tau(\mathcal{E})\geq t] \leq 1-c
    \end{align*}
    holds.
Then, $\E[\tau(\mathcal{E})]\leq \frac{T}{c}$.
\end{lemma}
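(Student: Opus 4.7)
The plan is to show that $\tau(\mathcal{E})$ is stochastically dominated by $T$ times a geometric random variable with success probability $c$, and then bound the expectation by summing a geometric series. This is essentially the standard ``restart argument'' for hitting-time bounds.

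First I would iterate the hypothesis to obtain the tail bound
\begin{align*}
\Pr[\tau(\mathcal{E}) \geq kT] \leq (1-c)^k \quad \text{for every integer } k \geq 0.
\end{align*}
The base case $k=0$ is trivial. For the inductive step, set $t=(k-1)T$ in the hypothesis: if $\Pr[\tau(\mathcal{E}) \geq (k-1)T] > 0$ we have
\begin{align*}
\Pr[\tau(\mathcal{E}) \geq kT] = \Pr[\tau(\mathcal{E}) \geq T + (k-1)T \mid \tau(\mathcal{E}) \geq (k-1)T] \cdot \Pr[\tau(\mathcal{E}) \geq (k-1)T] \leq (1-c)(1-c)^{k-1},
\end{align*}
while if the conditioning event has probability $0$ the inequality holds trivially since $\{\tau(\mathcal{E})\geq kT\}\subseteq\{\tau(\mathcal{E})\geq (k-1)T\}$.

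Next I would convert this tail bound into a bound on $\E[\tau(\mathcal{E})]$ via the standard identity $\E[\tau(\mathcal{E})] = \sum_{j\geq 1} \Pr[\tau(\mathcal{E})\geq j]$ (assuming $T$ is a positive integer, which is the case in all applications in the paper). Partitioning the sum into blocks of length $T$ and using monotonicity $\Pr[\tau(\mathcal{E})\geq j]\leq \Pr[\tau(\mathcal{E})\geq kT]$ for $j\in\{kT+1,\dots,(k+1)T\}$, I get
\begin{align*}
\E[\tau(\mathcal{E})] = \sum_{k=0}^{\infty}\sum_{j=kT+1}^{(k+1)T}\Pr[\tau(\mathcal{E})\geq j] \leq \sum_{k=0}^{\infty} T\cdot (1-c)^k = \frac{T}{c}.
\end{align*}

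There is no real obstacle here: the lemma is a routine exercise once the iterated tail bound is written down. The only mildly delicate point is handling the conditional probability when the conditioning event has measure zero, which is resolved by the containment $\{\tau(\mathcal{E})\geq T+t\}\subseteq \{\tau(\mathcal{E})\geq t\}$. If $T$ is allowed to be non-integer, one replaces $kT$ by $\lceil kT\rceil$ throughout and the same argument gives $\E[\tau(\mathcal{E})]\leq T/c + O(1)$; in the form stated and used in the paper $T$ is always a positive integer, so the bound $T/c$ comes out cleanly.
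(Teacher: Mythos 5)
Your proof is correct and follows essentially the same route as the paper: iterate the hypothesis to get the geometric tail bound $\Pr[\tau(\mathcal{E})\geq kT]\leq(1-c)^k$, then sum the tail probabilities in blocks of length $T$. The only cosmetic difference is that you invoke monotonicity of the tail within each block while the paper bounds $\Pr[\tau(\mathcal{E})\geq kT+t]$ directly for each offset $t$; both yield the same bound.
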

\begin{proof}
From the assumption, for any $k\geq 0$, we have
\begin{align*}
    \Pr\left[\tau(\mathcal{E})\geq kT+t\right]
    &=
    \Pr\left[\tau(\mathcal{E})\geq kT+t \mid \tau(\mathcal{E})\geq (k-1)T+t\right]
    \Pr\left[\tau(\mathcal{E})\geq (k-1)T+t\right] \\
    &\leq (1-c)\Pr\left[\tau(\mathcal{E})\geq (k-1)T+t\right] \\
    &\dots\\
    &\leq (1-c)^k.
\end{align*}
Therefore, we have
\begin{align*}
    \E[\tau(\mathcal{E})] =
        \sum_{k=0}^\infty \sum_{t=0}^{T-1}
        \Pr[\tau(\mathcal{E})\geq kT+t]
        \leq T\sum_{k=0}^\infty (1-c)^k
        = \frac{T}{c}.
\end{align*}
\end{proof}
\begin{corollary}\label{cor:key_cor}
Let $(Z_t)_{t\geq 0}$ be a sequence of
    random variables where $Z_t\in\mathcal{S}$
    for a finite state space $\mathcal{S}$.
For an event $\mathcal{E}\subseteq\mathcal{S}$,
    let $\tau(\mathcal{E})=\inf\{t\geq 0:Z_t\in\mathcal{E}\}$ be the stopping time.
Suppose there exist $T>0$ and $c>0$ such that,
    for any $t\geq 0$ and $z\in\mathcal{S}$,
    \begin{align*}
        \Pr[\tau(\mathcal{E})\geq T+t \mid Z_t=z] \leq 1-c
    \end{align*}
    holds.
Then, $\E[\tau(\mathcal{E})]\leq \frac{T}{c}$.
\end{corollary}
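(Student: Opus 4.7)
The plan is to derive Corollary \ref{cor:key_cor} directly from Lemma \ref{lem:key_lemma} by showing that the Corollary's stronger, pointwise hypothesis implies the Lemma's conditional hypothesis. Concretely, I would fix $t \geq 0$ and aim to establish
\begin{align*}
\Pr[\tau(\mathcal{E}) \geq T+t \mid \tau(\mathcal{E}) \geq t] \leq 1-c,
\end{align*}
after which the Lemma immediately delivers $\E[\tau(\mathcal{E})] \leq T/c$.

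To pass from the pointwise hypothesis to the conditional one, the natural first step is to decompose the conditional probability over the value of $Z_t$ by the law of total probability:
\begin{align*}
\Pr[\tau(\mathcal{E}) \geq T+t \mid \tau(\mathcal{E}) \geq t] = \sum_{z \in \mathcal{S}} \Pr[\tau(\mathcal{E}) \geq T+t \mid Z_t = z,\, \tau(\mathcal{E}) \geq t]\, \Pr[Z_t = z \mid \tau(\mathcal{E}) \geq t].
\end{align*}
The content of the argument is then the bound $\Pr[\tau(\mathcal{E}) \geq T+t \mid Z_t = z,\, \tau(\mathcal{E}) \geq t] \leq 1-c$ for each $z$ with positive conditional mass. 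Since $\{\tau(\mathcal{E}) \geq T+t\}$ depends on $(Z_s)_{s < T+t}$ and the only portion not already pinned down by the past event $\{\tau(\mathcal{E}) \geq t\}$ is the segment $s \geq t$, the hypothesized bound $\Pr[\tau(\mathcal{E}) \geq T+t \mid Z_t = z] \leq 1-c$ transfers to the further-conditioned version by the Markov property of $(Z_t)$ that is implicit in all the paper's applications (the positions of the random walks and coalescing walkers form Markov chains).

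Substituting back into the decomposition yields $\Pr[\tau(\mathcal{E}) \geq T+t \mid \tau(\mathcal{E}) \geq t] \leq (1-c)\sum_z \Pr[Z_t = z \mid \tau(\mathcal{E}) \geq t] = 1-c$, which is exactly the hypothesis of Lemma \ref{lem:key_lemma}. The Lemma then concludes the proof. The main (and essentially only) delicate point is the reduction replacing the combined conditioning $\{Z_t = z,\, \tau(\mathcal{E}) \geq t\}$ with $\{Z_t = z\}$ alone, which is routine under the Markov structure at play in every invocation of the Corollary elsewhere in the paper.
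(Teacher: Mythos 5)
Your architecture is the same as the paper's: the paper's entire proof is the one-line chain $\Pr[\tau(\mathcal{E})\geq T+t\mid\tau(\mathcal{E})\geq t]\leq\Pr[\tau(\mathcal{E})\geq T+t\mid Z_t]\leq 1-c$ followed by \cref{lem:key_lemma}, and you correctly isolate the only delicate step. But the justification you give for that step does not work, and the failure is genuine for the hypothesis as literally written. Decompose $\{\tau(\mathcal{E})\geq T+t\}=\{\tau(\mathcal{E})\geq t\}\cap\{Z_s\notin\mathcal{E}\ \forall s\in[t,T+t)\}$; since the first factor is contained in the second event's complement's... more directly, since $\{\tau(\mathcal{E})\geq T+t\}\subseteq\{\tau(\mathcal{E})\geq t\}$,
\begin{align*}
\Pr[\tau(\mathcal{E})\geq T+t\mid Z_t=z,\ \tau(\mathcal{E})\geq t]
=\frac{\Pr[\tau(\mathcal{E})\geq T+t\mid Z_t=z]}{\Pr[\tau(\mathcal{E})\geq t\mid Z_t=z]},
\end{align*}
so adding the conditioning on $\{\tau(\mathcal{E})\geq t\}$ can only \emph{increase} the probability: it discards precisely the trajectories that have already hit $\mathcal{E}$, which are the ones making the hypothesized quantity small. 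The Markov property does not repair this, because $\{\tau(\mathcal{E})\geq T+t\}$ is not a future event given $Z_t$ (it constrains $Z_s$ for $s<t$ as well); what the Markov property actually yields is that the left-hand side above equals $\Pr[Z_s\notin\mathcal{E}\ \forall s\in[t,T+t)\mid Z_t=z]$, which is \emph{at least} $\Pr[\tau(\mathcal{E})\geq T+t\mid Z_t=z]$, not at most. A concrete failure: on $\mathcal{S}=\{1,b\}$ with $\mathcal{E}=\{1\}$, let $Z_0$ be uniform, let the first transition send $1\to b$ surely and $b\to 1$ or $b\to b$ with probability $1/2$ each, and let every later transition send both states to $b$. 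With $T=2$ one checks $\Pr[\tau(\mathcal{E})\geq T+t\mid Z_t=z]\leq 1/2$ for every $t$ and every $z$ in the support of $Z_t$, yet $\Pr[\tau(\mathcal{E})=\infty]=1/4$, so $\E[\tau(\mathcal{E})]=\infty$. This is a (time-inhomogeneous) Markov chain, so your appeal to the Markov property cannot close the gap as stated.

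To be fair, the paper's own proof commits exactly the same unjustified inequality, so you have reproduced its argument faithfully; the statement and proof are both rescued only by reading the hypothesis the way it is actually verified in every application (\cref{lem:hitting_probability_bound,lem:cover_probability_bound,lem:coalcons}): what those lemmas really bound is $\Pr[Z_s\notin\mathcal{E}\ \forall s\in[t,T+t)\mid Z_t=z]$, the probability that the walk \emph{restarted} from $z$ at time $t$ avoids $\mathcal{E}$ for the next $T$ steps. Under that (stronger, purely forward-looking) hypothesis your total-probability decomposition is correct: the Markov property legitimately drops the past event $\{\tau(\mathcal{E})\geq t\}$ from the conditioning of a future event, each conditional term is at most $1-c$, and \cref{lem:key_lemma} finishes the proof. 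So you should replace the transfer step by this restated hypothesis rather than asserting that the bound on $\Pr[\tau(\mathcal{E})\geq T+t\mid Z_t=z]$ itself survives the extra conditioning.
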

\begin{proof}
Note that
\begin{align*}
    \Pr[\tau(\mathcal{E})\geq T+t \mid \tau(\mathcal{E})\geq t]
    \leq \Pr[\tau(\mathcal{E})\geq T+t \mid Z_t]
    \leq 1-c.
\end{align*}
\end{proof}
To obtain upper bounds for hitting, cover, and coalescing times, it suffices to
prove that the corresponding stopping time
    satisfies the condition of \cref{cor:key_cor}.
    
Finally, we introduce the following lemma, which we use in the proof of the edge-Markovian graph (\cref{sec:edge_Markovian}).
\begin{lemma}
\label{lem:ex_stopping}
Let $\tau$ be a stopping time of a sequence of random variables $(Z_t)_{t\geq 0}$.
Let $0=T_0\leq T_1\leq \cdots$ be a non-decreasing sequence and 
$\epsilon$ be a positive constant.
Suppose that $\Pr[\tau>T_\ell|\tau>T_{\ell-1}]\leq 1-\epsilon$ holds for all $\ell\geq 1$.
Then, $\E[\tau]\leq \sum_{\ell=0}^\infty(T_{\ell+1}-T_\ell)(1-\epsilon)^\ell$.
\end{lemma}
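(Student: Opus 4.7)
The plan is to combine the standard tail-sum formula for a non-negative integer-valued random variable with an iterated application of the conditional probability hypothesis, and then group the terms of the tail sum according to the blocks determined by the $T_\ell$.

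First I would establish an unconditional tail bound. By iterating the assumption $\Pr[\tau>T_\ell \mid \tau>T_{\ell-1}] \leq 1-\epsilon$ and chaining along $\ell$, I get
\begin{align*}
\Pr[\tau>T_\ell] = \prod_{j=1}^{\ell} \Pr[\tau>T_j \mid \tau>T_{j-1}] \cdot \Pr[\tau>T_0] \leq (1-\epsilon)^\ell,
\end{align*}
using $T_0=0$ so $\Pr[\tau>T_0]\leq 1$. This is the only place the hypothesis is used.

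Next I would apply the tail-sum identity $\E[\tau]=\sum_{t=0}^\infty \Pr[\tau>t]$ and partition the indices $t\geq 0$ according to which interval $[T_\ell, T_{\ell+1})$ they fall in. On that interval, monotonicity of the tail $\Pr[\tau>\cdot]$ together with $t\geq T_\ell$ yields $\Pr[\tau>t]\leq \Pr[\tau>T_\ell]\leq (1-\epsilon)^\ell$. Summing over $t\in [T_\ell,T_{\ell+1})$ contributes at most $(T_{\ell+1}-T_\ell)(1-\epsilon)^\ell$, and summing over $\ell$ gives the claimed bound
\begin{align*}
\E[\tau] \leq \sum_{\ell=0}^\infty (T_{\ell+1}-T_\ell)(1-\epsilon)^\ell.
\end{align*}

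There is essentially no hard step here; the only thing to watch is a mild formal point. The tail-sum identity is stated for integer-valued non-negative variables, and the block decomposition requires the $T_\ell$ to be integers (which is the natural reading, as $\tau$ is a stopping time of a discrete-time process). If one wanted to allow non-integer $T_\ell$, the same argument goes through by replacing each $T_\ell$ with $\lceil T_\ell\rceil$ and noting the bound only gets weaker by at most an additive constant per block; I would not belabor this in the write-up and would simply take $T_\ell\in\mathbb{N}$.
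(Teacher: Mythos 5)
Your proposal is correct and follows essentially the same route as the paper: iterate the conditional bound to get $\Pr[\tau>T_\ell]\leq(1-\epsilon)^\ell$, then apply the tail-sum formula blockwise over the intervals $[T_\ell,T_{\ell+1})$ using monotonicity of the tail. Your remark about integer-valued $T_\ell$ is a reasonable (implicit in the paper) formal point and does not change the argument.
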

\begin{proof}
From the assumption, 
\begin{align*}
    \Pr[\tau>T_\ell]
    &= \Pr[\tau>T_\ell,\tau>T_{\ell-1}]
    =\Pr[\tau>T_\ell|\tau>T_{\ell-1}]\Pr[\tau>T_{\ell-1}]\\
    &\leq (1-\epsilon)\Pr[\tau>T_{\ell-1}]
    \leq \cdots \leq (1-\epsilon)^\ell.
\end{align*}
holds for any $\ell\geq 1$.
The first equality follows since  $\tau$ is a stopping time.
Hence, we obtain
\begin{align*}
\E[\tau]
&=\sum_{\ell=0}^\infty\sum_{t=T_\ell}^{T_{\ell+1}-1}\Pr[\tau>t]
\leq \sum_{\ell=0}^\infty\sum_{t=T_\ell}^{T_{\ell+1}-1}\Pr[\tau>T_{\ell}]
\leq  \sum_{\ell=0}^\infty(T_{\ell+1}-T_\ell)(1-\epsilon)^\ell.
\end{align*}
\end{proof}

\section{Other tools}
\begin{lemma}[Lemmas 4.24 and 4.25 in \cite{AF02}]
\label{lem:hitbounds}
Suppose that $P$ is irreducible and reversible.
Then, 
$
    \frac{1}{1-\lambda_2(P)}\leq \thit(P)\leq \frac{2}{\pi_{\min}(1-\lambda_2(P))}
$ holds.
\end{lemma}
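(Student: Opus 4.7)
The plan is to prove the two inequalities independently, combining the eigenvalue bound $\rho(D_w P D_w) \le 1 - 1/\thit(P)$ established in \cref{lem:hiteigen} with the $\pi$-orthonormal eigendecomposition $\{(\lambda_i,\psi_i)\}_{i=1}^{|V|}$ of $P$ (with $\lambda_1 = 1$, $\psi_1 = \mathbbm{1}$, and $\langle\psi_i,\psi_j\rangle_\pi = \delta_{ij}$).

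For the lower bound $\thit(P) \geq 1/(1-\lambda_2(P))$, \cref{lem:hiteigen} reduces the task to exhibiting some vertex $w$ with $\rho(D_w P D_w) \geq \lambda_2(P)$. I would choose $w$ with $\psi_2(w) \neq 0$ (such $w$ exists because $\psi_2$ is $\pi$-orthogonal to $\mathbbm{1}$ and therefore nonzero) and set $f \defeq \psi_2 - \psi_2(w)\mathbbm{1}$, so that $f(w) = 0$. Then $D_w f = f$, hence $\langle D_w P D_w f, f\rangle_\pi = \langle Pf, f\rangle_\pi$, and a direct calculation using $P\psi_2 = \lambda_2\psi_2$, $\langle\psi_2,\mathbbm{1}\rangle_\pi = 0$, and $\|f\|_{2,\pi}^2 = 1 + \psi_2(w)^2$ yields
\[
\langle Pf,f\rangle_\pi \;=\; \lambda_2(P)\|f\|_{2,\pi}^2 + (1-\lambda_2(P))\,\psi_2(w)^2 \;\geq\; \lambda_2(P)\|f\|_{2,\pi}^2.
\]
Since $D_w P D_w$ is $\pi$-self-adjoint, the Rayleigh quotient lower-bounds its spectral radius, giving $\rho(D_wPD_w) \geq \lambda_2(P)$ and hence $\thit(P) \geq 1/(1-\lambda_2(P))$ (the case $\lambda_2(P)\leq 0$ is trivial because $\thit(P) \geq 1$).

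For the upper bound $\thit(P) \leq 2/(\pi_{\min}(1-\lambda_2(P)))$, I would (if needed) reduce to the aperiodic case by replacing $P$ with the lazy chain $(I+P)/2$, which halves $1 - \lambda_2$, at most doubles $\thit$, and preserves $\pi$. Then introduce the fundamental matrix $Z \defeq \sum_{t\geq 0}(P^t - \mathbbm{1}\pi)$ and invoke the classical identity $\E_u[\tau_w] = (Z(w,w) - Z(u,w))/\pi(w)$. Expanding in the eigenbasis gives $Z(u,w) = \pi(w)\sum_{i\geq 2}\psi_i(u)\psi_i(w)/(1-\lambda_i)$. Using $|1-\lambda_i| \geq 1 - \lambda_2(P)$ for $i \geq 2$, Cauchy--Schwarz, and the Parseval identity $\sum_i \psi_i(v)^2 = 1/\pi(v)$ (obtained by expanding the indicator $e_v$ in the basis), one bounds $Z(w,w) \leq \trel(P)$ and $|Z(u,w)| \leq \trel(P)\sqrt{\pi(w)/\pi(u)}$, whence
\[
\E_u[\tau_w] \;\leq\; \frac{\trel(P)}{\pi(w)} + \frac{\trel(P)}{\sqrt{\pi(u)\pi(w)}} \;\leq\; \frac{2\trel(P)}{\pi_{\min}}.
\]

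The main obstacle I anticipate is the potential presence of an eigenvalue $\lambda_i = -1$ (periodic $P$), which would make $Z$ ill-defined; the laziness preprocessing above handles this at the cost of only a constant factor, which is absorbed into the factor $2$ already present in the stated bound. A minor care is also needed to route the $\sqrt{\pi(w)/\pi(u)}$ factor correctly through Cauchy--Schwarz, but this is routine algebra once the eigenexpansion of $Z$ is in hand.
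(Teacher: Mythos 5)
Your proof is correct. Note that the paper does not actually prove this lemma --- it is imported as Lemmas 4.24 and 4.25 of \cite{AF02} --- so what you have written is a genuinely self-contained derivation rather than a reconstruction of an in-paper argument. Your lower bound is the more interesting half: testing the $\pi$-self-adjoint operator $D_wPD_w$ against $f=\psi_2-\psi_2(w)\mathbbm{1}$ (which satisfies $D_wf=f$) gives $\langle Pf,f\rangle_\pi=\lambda_2(P)\|f\|_{2,\pi}^2+(1-\lambda_2(P))\psi_2(w)^2\geq\lambda_2(P)\|f\|_{2,\pi}^2$, hence $\rho(D_wPD_w)\geq\lambda_2(P)$, and combining with the paper's \cref{lem:hiteigen} yields $\lambda_2(P)\leq 1-1/\thit(P)$ as claimed; the algebra checks out, and in fact the condition $\psi_2(w)\neq 0$ is not even needed (the argument works for any $w$). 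This has the pleasant side effect of making the bound follow from machinery the paper already develops. The upper bound is the classical fundamental-matrix computation: the identity $\E_u[\tau_w]=(Z(w,w)-Z(u,w))/\pi(w)$, the eigenexpansion of $Z$, Parseval, and Cauchy--Schwarz give $2\trel(P)/\pi_{\min}$, and your lazification preprocessing is sound --- it exactly doubles hitting times while halving $1-\lambda_2$, so the constant $2$ survives unchanged. The only steps taken on faith (the fundamental-matrix hitting-time identity and the exact doubling under lazification via Wald) are themselves standard textbook facts, which is acceptable for a statement the paper itself only cites.
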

%
%
\begin{lemma}[The Chernoff inequality (see, e.g., Theorem 1.10.21 in \cite{Doerr18})]
\label{lem:Chernoff}
Let $X_1, X_2, \ldots, X_n$ be $n$ independent random variables taking values in $[0,1]$. Let $X=\sum_{i=1}^nX_i$. 
Let $\mu^-\leq \E[X]\leq \mu^+$.
Then, 
\begin{align*}
&\Pr\left[X\geq (1+\epsilon)\mu^+\right]\leq \exp\left(-\frac{\min\{\epsilon,\epsilon^2\}\mu^+}{3}\right)
\textrm{ for any $\epsilon\geq 0$}, \\
&\Pr\left[X\leq (1-\epsilon)\mu^-\right]\leq \exp\left(-\frac{\epsilon^2\mu^-}{2}\right)
\textrm{ for any $0\leq \epsilon\leq 1$.}
\end{align*}
\end{lemma}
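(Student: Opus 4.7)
The plan is to use the standard exponential moment (Chernoff--Hoeffding) method. For the upper tail, fix a parameter $t > 0$ to be chosen later and apply Markov's inequality to the nonnegative random variable $\mathrm{e}^{tX}$:
\begin{align*}
    \Pr[X \geq (1+\epsilon)\mu^+] = \Pr[\mathrm{e}^{tX} \geq \mathrm{e}^{t(1+\epsilon)\mu^+}] \leq \mathrm{e}^{-t(1+\epsilon)\mu^+} \E[\mathrm{e}^{tX}].
\end{align*}
By independence, $\E[\mathrm{e}^{tX}] = \prod_{i=1}^n \E[\mathrm{e}^{tX_i}]$. Since $X_i \in [0,1]$, convexity of $x \mapsto \mathrm{e}^{tx}$ on $[0,1]$ gives $\mathrm{e}^{tX_i} \leq 1 + (\mathrm{e}^t - 1)X_i$, whence $\E[\mathrm{e}^{tX_i}] \leq 1 + (\mathrm{e}^t - 1)\E[X_i] \leq \exp((\mathrm{e}^t - 1)\E[X_i])$ by $1+x\le \mathrm{e}^x$. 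Multiplying over $i$ and using $\sum_i \E[X_i] = \E[X] \leq \mu^+$ yields $\E[\mathrm{e}^{tX}] \leq \exp((\mathrm{e}^t - 1)\mu^+)$.

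Combining these, $\Pr[X \geq (1+\epsilon)\mu^+] \leq \exp\bigl(\mu^+(\mathrm{e}^t - 1 - t(1+\epsilon))\bigr)$. Choosing $t = \ln(1+\epsilon)$ makes the exponent equal to $-\mu^+ h(\epsilon)$ for $h(\epsilon) = (1+\epsilon)\ln(1+\epsilon) - \epsilon$. It then remains to verify the elementary inequality $h(\epsilon) \geq \min\{\epsilon, \epsilon^2\}/3$ for all $\epsilon \geq 0$. This I would do by splitting into two regimes: for $0\leq\epsilon\leq 1$, expand $\ln(1+\epsilon)$ via its Taylor series to show $h(\epsilon)\geq \epsilon^2/3$; for $\epsilon\geq 1$, monotonicity of $h(\epsilon)/\epsilon$ together with the value $h(1)=2\ln 2 -1 \geq 1/3$ gives $h(\epsilon)\geq \epsilon/3$.

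For the lower tail the argument is analogous: apply Markov to $\mathrm{e}^{-tX}$ with $t>0$, use $\mathrm{e}^{-tX_i}\leq 1 - (1-\mathrm{e}^{-t})X_i \leq \exp(-(1-\mathrm{e}^{-t})\E[X_i])$, and then $\sum_i\E[X_i]\geq \mu^-$ to obtain $\E[\mathrm{e}^{-tX}] \leq \exp(-\mu^-(1-\mathrm{e}^{-t}))$. This gives
\begin{align*}
\Pr[X \leq (1-\epsilon)\mu^-] \leq \exp\bigl(-\mu^-(1-\mathrm{e}^{-t} - t(1-\epsilon))\bigr).
\end{align*}
Optimizing with $t = -\ln(1-\epsilon)$ yields the bound $\exp\bigl(-\mu^-((1-\epsilon)\ln(1-\epsilon) + \epsilon)\bigr)$, and the claim follows from $(1-\epsilon)\ln(1-\epsilon) + \epsilon \geq \epsilon^2/2$ on $\epsilon\in[0,1]$, which follows by comparing Taylor coefficients of both sides.

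The main (really only) obstacle is the translation from the sharp-but-implicit ``information-theoretic'' exponent $h(\epsilon)$ to the user-friendly form $\min\{\epsilon,\epsilon^2\}/3$ of the statement. This is pure one-variable calculus but requires the careful case split at $\epsilon=1$ to match the stated constant $1/3$; the upper-tail constant would otherwise naturally come out as $1/2$ in the subgaussian regime and some other constant in the Poisson regime, and $1/3$ is chosen as the worst-case merger of both.
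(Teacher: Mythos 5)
The paper does not prove this lemma; it is imported as a black box from a textbook (Theorem 1.10.21 of the cited reference), so there is no in-paper argument to compare against. Your proposal is the standard and correct exponential-moment derivation: the Markov step, the convexity bound $\mathrm{e}^{tX_i}\leq 1+(\mathrm{e}^t-1)X_i$, the substitution of $\mu^{\pm}$ for $\E[X]$ (valid because the relevant coefficients $\mathrm{e}^t-1$ and $1-\mathrm{e}^{-t}$ are nonnegative), and the optimal choices $t=\ln(1+\epsilon)$, $t=-\ln(1-\epsilon)$ all check out, and the two calculus facts you defer --- $(1+\epsilon)\ln(1+\epsilon)-\epsilon\geq\min\{\epsilon,\epsilon^2\}/3$ via the alternating Taylor series on $[0,1]$ and via monotonicity of $h(\epsilon)/\epsilon$ with $h(1)=2\ln 2-1>1/3$ beyond, and $(1-\epsilon)\ln(1-\epsilon)+\epsilon=\sum_{k\geq 2}\epsilon^k/(k(k-1))\geq\epsilon^2/2$ --- are both true as you describe. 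The only cosmetic gap is the endpoint $\epsilon=1$ of the lower tail, where $t=-\ln(1-\epsilon)$ is undefined; letting $t\to\infty$ gives $\Pr[X\leq 0]\leq \mathrm{e}^{-\mu^-}\leq \mathrm{e}^{-\mu^-/2}$, so the claim still holds.
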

The following is well known as the Cheeger inequality for reversible Markov chains.
\begin{lemma}[See, e.g., Theorem 13.10 in \cite{LP17}]
\label{lem:Cheeger}
Let $P\in [0,1]^{V\times V}$ be an irreducible and reversible transition matrix.
For $S\subseteq V$, let $\pi(S)\defeq \sum_{v\in S}\pi(v)$ and $Q(S)\defeq \sum_{u\in S}\sum_{v\notin S}\pi(u)P(u,v)$.
Let $\Phi_\star\defeq \min_{S: 0<\pi(S)\leq 1/2}\frac{Q(S)}{\pi(S)}$.
Then, $\frac{\Phi_\star^2}{2}\leq 1-\lambda_\star(P)\leq 2\Phi_\star$.
\end{lemma}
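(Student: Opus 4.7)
I would prove the two inequalities separately, leveraging the variational characterization
\begin{equation*}
1 - \lambda_2(P) \;=\; \inf_{\substack{f \in \Real^V,\ f\neq 0 \\ \langle f,\mathbbm{1}\rangle_\pi = 0}} \frac{\mathcal{E}_{P,\pi}(f)}{\|f\|_{2,\pi}^2},
\end{equation*}
which follows from the spectral decomposition in \cref{lem:MatrixEigen}. Since $\lambda_\star(P)\geq \lambda_2(P)$ by the definition of $\lambda_\star$ as the second largest eigenvalue in absolute value, it suffices in both directions to prove the bound with $\lambda_\star$ replaced by $\lambda_2$.

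For the easy direction $1-\lambda_\star\leq 2\Phi_\star$, I would take any $S\subseteq V$ with $0<\pi(S)\leq 1/2$ and plug the test vector $f = \mathbbm{1}_S - \pi(S)\mathbbm{1}$ into the variational formula. A direct calculation gives $\langle f,\mathbbm{1}\rangle_\pi = 0$, $\|f\|_{2,\pi}^2 = \pi(S)(1-\pi(S)) \geq \pi(S)/2$, and, using reversibility $\pi(u)P(u,v)=\pi(v)P(v,u)$ to symmetrize the boundary sum, $\mathcal{E}_{P,\pi}(f) = Q(S)$. Inserting these values and optimizing over $S$ yields $1-\lambda_2(P) \leq 2Q(S)/\pi(S) = 2\Phi_\star$.

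For the hard direction $\Phi_\star^2/2 \leq 1-\lambda_\star$, I would let $\varphi$ be a second eigenvector and, possibly replacing $\varphi$ by $-\varphi$, arrange that $U\defeq \{v : \varphi(v)>0\}$ satisfies $\pi(U)\leq 1/2$. Setting $g\defeq \varphi_+$, the first step is the reduction $\mathcal{E}_{P,\pi}(g)/\|g\|_{2,\pi}^2 \leq 1-\lambda_2(P)$, obtained by showing that $(Pg)(v)\geq \lambda_2(P)\,g(v)$ pointwise on $U$ (using $g\geq \varphi$ everywhere together with $P\varphi = \lambda_2\varphi$) and expanding $\mathcal{E}_{P,\pi}(g) = \|g\|_{2,\pi}^2 - \langle g,Pg\rangle_\pi$. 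The core of the argument is then the co-area estimate: introduce the level sets $S_t\defeq \{v : g(v)^2 > t\}$, note $\pi(S_t)\leq \pi(U)\leq 1/2$, use the layer-cake identity $\|g\|_{2,\pi}^2 = \int_0^\infty \pi(S_t)\,dt$, and observe
\begin{equation*}
\tfrac{1}{2}\sum_{u,v\in V} \pi(u)P(u,v)\,|g(u)^2 - g(v)^2| \;=\; \int_0^\infty Q(S_t)\,dt \;\geq\; \Phi_\star \int_0^\infty \pi(S_t)\,dt \;=\; \Phi_\star\, \|g\|_{2,\pi}^2.
\end{equation*}
Factoring $|g(u)^2-g(v)^2| = |g(u)-g(v)|\,(g(u)+g(v))$ and applying Cauchy--Schwarz, then using the bound $\tfrac{1}{2}\sum_{u,v}\pi(u)P(u,v)(g(u)+g(v))^2 \leq 2\|g\|_{2,\pi}^2$ (which follows from $(a+b)^2\leq 2(a^2+b^2)$ together with reversibility), I would conclude $(\Phi_\star\|g\|_{2,\pi}^2)^2 \leq \mathcal{E}_{P,\pi}(g)\cdot 2\|g\|_{2,\pi}^2$, which rearranges to $\mathcal{E}_{P,\pi}(g)/\|g\|_{2,\pi}^2 \geq \Phi_\star^2/2$ and hence $1-\lambda_2(P) \geq \Phi_\star^2/2$.

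The main obstacle is the Cauchy--Schwarz/co-area manipulation in the hard direction: threading the factorization $g(u)^2-g(v)^2 = (g(u)-g(v))(g(u)+g(v))$ through Cauchy--Schwarz so that the ``boundary'' integral yields $\Phi_\star$ cleanly, while using $\pi(S_t)\leq 1/2$ only through the support condition on $g$, is precisely what forces the constant $1/2$ in the final bound; any slack in either step degrades the constant quadratically. A secondary subtlety is the passage from $\lambda_2$ to $\lambda_\star$, which is immediate in the easy direction but, if one insists on the strict absolute-value version of $\lambda_\star$, relies on the observation that $\Phi_\star\leq 1$ makes the hard-direction bound vacuous whenever $\lambda_2(P)\leq 0$.
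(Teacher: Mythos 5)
The paper does not actually prove this lemma --- it is quoted as Theorem 13.10 of \cite{LP17} --- and your argument is precisely the standard proof of that theorem: the test function $\mathbbm{1}_S-\pi(S)\mathbbm{1}$ for the easy direction, and the positive part of the second eigenvector combined with the layer-cake decomposition over level sets and Cauchy--Schwarz for the hard direction. Both halves of your proof of $\Phi_\star^2/2\leq 1-\lambda_2(P)\leq 2\Phi_\star$ are correct, including the computation $\mathcal{E}_{P,\pi}(f)=Q(S)$, the pointwise bound $(Pg)(v)\geq\lambda_2(P)g(v)$ on $U$, and the bound $\tfrac12\sum_{u,v}\pi(u)P(u,v)(g(u)+g(v))^2\leq 2\|g\|_{2,\pi}^2$.

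The one genuine gap is the passage from $\lambda_2$ to $\lambda_\star$ in the \emph{lower} bound, and your closing remark does not repair it. Your opening claim that ``it suffices in both directions to prove the bound with $\lambda_\star$ replaced by $\lambda_2$'' is backwards for the hard direction: since $\lambda_\star(P)\geq\lambda_2(P)$, one has $1-\lambda_\star(P)\leq 1-\lambda_2(P)$, so $\Phi_\star^2/2\leq 1-\lambda_2(P)$ does \emph{not} imply $\Phi_\star^2/2\leq 1-\lambda_\star(P)$ when $\lambda_\star(P)=-\lambda_n(P)>\lambda_2(P)$. The suggested rescue via ``$\Phi_\star\leq 1$ makes the bound vacuous when $\lambda_2(P)\leq 0$'' fails: for the two-state chain with $P(1,2)=P(2,1)=1$ and $P(1,1)=P(2,2)=0$ (irreducible and reversible with $\pi=(1/2,1/2)$), one has $\Phi_\star=1$ but $\lambda_\star(P)=1$, so $\Phi_\star^2/2=1/2>0=1-\lambda_\star(P)$. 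In other words, the lemma as stated with $\lambda_\star$ is false for general irreducible reversible chains; Theorem 13.10 of \cite{LP17} is a statement about $1-\lambda_2$. This is a defect of the statement rather than of your core argument: the paper only invokes the lemma for the lazy Metropolis walk (\cref{sec:edge_Markovian}), where laziness forces all eigenvalues to be nonnegative and hence $\lambda_\star=\lambda_2$, so your proof suffices in every place the lemma is used. To make your write-up airtight you should either add a laziness (or positive semidefiniteness) hypothesis before concluding for $\lambda_\star$, or state the conclusion for $\lambda_2$.
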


\begin{lemma}[\cite{SZ19}]
\label{lem:mix_teq_SZ}
Let $x:\mathbbm{N}\to \mathbbm{R}_{>0}$ be a positive non-increasing function.
Suppose that $x(t+1)\leq x(t)\left(1-\frac{x(t)}{K}\right)$ holds for any $t\geq 0$.
Then, $x(t)\leq 1$ holds for any $t\geq \frac{\mathrm{e}^2}{\mathrm{e}-1}K+\log x(0)+1$.
\end{lemma}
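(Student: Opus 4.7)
The plan is to split the trajectory into a brief geometric-decay phase followed by a harmonic-decay phase. In the geometric phase, the contraction factor $1-x(t)/K$ is bounded away from $1$ whenever $x(t)$ is a constant fraction of $K$, giving exponential shrinkage. In the harmonic phase, the factor is close to $1$, but the reciprocal $1/x(t)$ grows at least linearly in $t$, and this is the right progress measure.

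First I would record the preliminary observation that the hypothesis together with positivity forces $x(0)<K$: if $x(0)\geq K$, then $x(0)(1-x(0)/K)\leq 0$, contradicting $0<x(1)\leq x(0)(1-x(0)/K)$; consequently $x(t)<K$ for all $t$. Phase 1 (geometric). Fix $\alpha\defeq 1-1/\mathrm{e}$. While $x(t)\geq \alpha K$, the hypothesis yields $x(t+1)\leq x(t)(1-\alpha)=x(t)/\mathrm{e}$, so $x$ contracts by a factor $\mathrm{e}$ per step. Hence at most $\lceil\log(x(0)/(\alpha K))\rceil\leq \log x(0)+O(1)$ steps are needed to reach $x(t)<\alpha K$. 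Phase 2 (harmonic). Once $x(s)<\alpha K$, monotonicity of $x$ keeps it there. Using the partial-fraction identity $\tfrac{K}{y(K-y)}=\tfrac{1}{y}+\tfrac{1}{K-y}$, the hypothesis rewrites as
\begin{align*}
\frac{1}{x(t+1)}\geq \frac{1}{x(t)(1-x(t)/K)}=\frac{1}{x(t)}+\frac{1}{K-x(t)}\geq \frac{1}{x(t)}+\frac{1}{K}.
\end{align*}
Iterating gives $1/x(s+\tau)\geq 1/x(s)+\tau/K$, so $x(s+\tau)\leq 1$ after at most $\tau\leq \lceil K(1-1/x(s))\rceil\leq K$ Phase 2 steps.

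Summing the two phases gives a total bound of $\log x(0)+K+O(1)$ steps, which is comfortably dominated by the claimed $\frac{\mathrm{e}^2}{\mathrm{e}-1}K+\log x(0)+1$ (the leading coefficient $\mathrm{e}^2/(\mathrm{e}-1)>1$ absorbs the $O(K)$ slack). The main subtlety is gluing the phases: one must verify that once $x(s)<\alpha K$ at some time $s$, the sequence never re-enters the regime $x(t)\geq \alpha K$, which is immediate from the non-increasing hypothesis. The constant $\alpha=1-1/\mathrm{e}$ is chosen only to make the Phase 1 contraction factor exactly $1/\mathrm{e}$; a different threshold would change the constant in front of $K$ but not the qualitative structure. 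In fact, one could even skip Phase 1 entirely and apply the Phase 2 bound directly at $t=0$, obtaining $x(K(1-1/x(0)))\leq 1$, which already suffices for the stated claim; the two-phase presentation is merely what most directly mirrors the $\log x(0)+O(K)$ shape of the target bound.
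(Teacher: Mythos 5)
Your proof is correct, and it takes a genuinely different route from the paper's. The paper argues by contradiction that $x$ drops by a factor of $\mathrm{e}$ within every window of length $\lceil \mathrm{e}K/x(a)\rceil$, then sums the resulting geometric series of window lengths to get the $\frac{\mathrm{e}^2}{\mathrm{e}-1}K$ term; your argument instead passes to reciprocals via the partial-fraction identity, obtaining the linear recursion $\frac{1}{x(t+1)}\geq \frac{1}{x(t)}+\frac{1}{K}$, which telescopes to $\frac{1}{x(\tau)}\geq \frac{1}{x(0)}+\frac{\tau}{K}$. This is cleaner and strictly stronger: as you note at the end, Phase 1 is entirely unnecessary, since the harmonic inequality holds for all $t$ (given the preliminary observation $0<x(t)<K$, which you correctly derive from positivity) and already yields $x(t)\leq 1$ for every integer $t\geq K$, improving the paper's leading constant from $\frac{\mathrm{e}^2}{\mathrm{e}-1}\approx 4.3$ to $1$ and dispensing with the $\log x(0)$ term. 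Two small points to tidy up: $\lceil K(1-1/x(s))\rceil\leq K$ should read $\leq \lceil K\rceil\leq K+1$ when $K$ is not an integer (harmless, since the slack $(\frac{\mathrm{e}^2}{\mathrm{e}-1}-1)K$ absorbs it once one notes that $K\leq 1$ forces $x(0)<K\leq 1$ and makes the claim vacuous); and the case $x(0)<1$, where $\log x(0)<0$ shrinks the target threshold, should be dispatched explicitly by monotonicity. Neither affects the substance.
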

\begin{proof}
First, we show that $x\left(a+\left \lceil \mathrm{e}K/x(a) \right \rceil\right)\leq x(a)/\mathrm{e}$ holds for any $a\geq 0$ by contradiction: 
Suppose that $x(a+\ell)>x(a)/\mathrm{e}$ holds for $\ell=\left \lceil \mathrm{e}K/x(a) \right \rceil$.
Since $x$ is non-increasing and $x(t+1)\leq x(t)\left(1-\frac{x(t)}{K}\right)$ holds, we have
\begin{align}
x(a+\ell)
&\leq x(a+\ell-1)\left(1-\frac{x(a+\ell-1)}{K}\right) 
\leq x(a+\ell-1)\left(1-\frac{x(a+\ell)}{K}\right) \nonumber \\
&\leq \cdots \leq x(a)\left(1-\frac{x(a+\ell)}{K}\right)^\ell
\leq x(a)\exp\left(-\frac{\ell x(a+\ell)}{K}\right)\nonumber \\
&\leq x(a)\exp\left(-\frac{\mathrm{e} x(a+\ell)}{x(a)}\right)
<x(a)/\mathrm{e}. \nonumber 
\end{align}
This contradicts the assumption and we obtain the claim: $x(a+\ell)\leq x(a)/\mathrm{e}$ holds.
Now, let $\ell(1)=\left \lceil \frac{\mathrm{e}K}{x(0)} \right \rceil$ and $\ell(i)=\left \lceil \frac{\mathrm{e}K}{x(L(i-1))} \right \rceil$ for $i\geq 1$, where $L(i)\defeq \sum_{j=1}^{i}\ell(j)$ for $i\geq 1$ and $L(0)\defeq 0$.
From the above argument, we have $x(L(i+1))\leq x(L(i))/\mathrm{e}$  for any $i\geq 0$.
Let $H$ be the first number with $x(L(H))\leq 1$.
Since
$
1<x(L(H-1))\leq x(L(H-2))/\mathrm{e}\leq \cdots \leq x(0)/\mathrm{e}^{H-1}
$ holds, 
we have $H-1<\log x(0)$. Hence, we have
\begin{align*}
L(H)&=\sum_{i=1}^H\ell(i)
=\sum_{i=0}^{H-1}\left \lceil \frac{\mathrm{e}K}{x(L(i))} \right \rceil
\leq \sum_{i=0}^{H-1}\left \lceil \frac{\mathrm{e}K}{\mathrm{e}^{H-1-i}x(L(H-1))} \right \rceil
\leq  \sum_{i=0}^{H-1}\left \lceil \frac{\mathrm{e}K}{\mathrm{e}^{i}} \right \rceil\\
&\leq H+\frac{\mathrm{e}^2}{\mathrm{e}-1}K \leq \log x(0)+1+\frac{\mathrm{e}^2}{\mathrm{e}-1}K.
\end{align*}
\end{proof}

\end{document}